\providecommand{\makenomenclature}{\makeglossary}
\let\originalleft\left
\let\originalright\right
\renewcommand{\left}{\mathopen{}\mathclose\bgroup\originalleft}
\renewcommand{\right}{\aftergroup\egroup\originalright}
\newcommand{\ceil}[1] {\left\lceil{#1}\right\rceil}
\newcommand{\floor}[1] {\left\lfloor{#1}\right\rfloor}
\newcommand{\Exp}[1]{\mathrm{E}\left[#1\right]}
\DeclareDocumentCommand{\Exp}{ m g }{
    {%
    \mathrm{E}
        \IfNoValueF {#2} {_{#2}}
    \left[#1\right]
    }%
}
\theoremstyle{plain}
\newtheorem{theorem}{Theorem}[section]
\newtheorem{corollary}[theorem]{Corollary}
\newtheorem{lemma}[theorem]{Lemma}
\theoremstyle{definition}
\newtheorem{definition}[theorem]{Definition}
\theoremstyle{remark}
\newcommand{\ds}{\mathrm{ds}}
\newcommand{\dsv}{\mathrm{dsv}}
\newcommand{\auctionutility}{\eta}
\newcommand{\auctionvalue}{\zeta}
\newcommand{\ermv}{\mathrm{em}}
\newcommand{\sw}{\mathrm{sw}}
\newcommand{\rand}{\mathrm{\tt rand}}
\newcommand{\comp}{\mathrm{\tt composite}}
\newcommand{\stay}{\mathrm{\tt stay}}
\newcommand{\match}{\mathrm{\tt match}}
\newcommand{\opt}{\mathrm{opt}}
\newcommand{\rd}{\rho}
\newcommand{\expec}{\mathop{{}\mathbb{E}}}
\newcommand{\Reals}{\mathbb{R}}
\newcommand{\pass}{b}
\newcommand{\taxi}{t}
\newcommand{\buyer}{\mbox{\rm buyer}}
\newcommand{\val}{\mbox{\rm value}}
\newcommand{\loc}{\operatorname{loc}}
\newcommand{\price}{p}
\newcommand{\dist}{\mbox{\rm dist}}
\title{\textbf{Flow Equilibria via Online Surge Pricing}}
\author{Amos Fiat, Yishay Mansour, Lior Shultz}
\author{Amos Fiat  \\
	Tel Aviv University   \\
	\and
	Yishay Mansour \\
	Tel Aviv University \\
	Google \\
	\and
	Lior Shultz \\
	Tel Aviv University \\
}
\date{\today}
\begin{document}

%\prelimpages
%\includepdf[pages=-]{English_cover}

%\epigraph{``Relative to the taxi industry, Uber is a sustaining innovation; that is, it makes customers' lives better."}{--- \textup{Clayton M. Christensen}, Harvard University\\ Coined ``Disruptive Innovation"}

\newcommand{\subinput}{\subsection}
\newcommand{\subsubinput}{\subsubsection}
\newcommand{\cursupply}{s^{t-1}}
\newcommand{\newsupply}{s^t}
\newcommand{\newdemand}{d^t}
\newcommand{\surgevec}{r^t}
\newcommand{\curflow}{f^t}
\newcommand{\flowutility}{\mu^t}

\newcommand{\paper}{paper}
\newcommand{\Paper}{Paper}

\newif\ifonline
\onlinetrue % comment out to hide answers

\maketitle

\section*{Abstract}
 
We explore issues of dynamic supply and demand in ride sharing services such as Lyft and Uber, where demand fluctuates over time and geographic location. We seek to maximize social welfare which depends on taxicab locations,  passenger locations, passenger valuations for service, and the distances between taxicabs and passengers. Our only means of control is to set surge prices, then taxicabs and passengers
maximize their utilities subject to these prices.

We study two related models: a continuous passenger-taxicab setting, similar to the Wardrop model, and a discrete (atomic) passenger-taxicab setting. In the continuous setting, every location is occupied by a set of
infinitesimal strategic taxicabs and a set of infinitesimal non-strategic passengers. In the discrete setting every location is occupied by a set of strategic agents, taxicabs and passengers, passengers have differing  values for service.

We expand the continuous model to a time-dependent setting and study the corresponding online environment.

The utility for a strategic taxicab that drives from $u$ to $v$ and picks up a passenger at $v$ is the surge price at $v$ minus the distance from $u$ to $v$. The utility for a strategic passenger at $v$ that gets service is the value of the service to the passenger minus the surge price at $v$.

Surge prices are in passenger-taxicab equilibrium if there exists a min cost flow that moves taxicabs about such that (a) every taxicab follows a best response, (b) all strategic passengers at $v$ with value above the surge price $r_v$ for $v$, are served and (c) no strategic passengers with value below $r_v$ are served (non-strategic infinitesimal passengers are always served).

This \paper{} computes surge prices such that resulting passenger-taxicab equilibrium maximizes social welfare, and the computation of such surge prices is in poly time. Moreover, it is a dominant strategy for passengers to reveal their true values.

We seek to maximize social welfare in the online environment, and derive tight competitive ratio bounds to this end.
Our online algorithms make use of the surge prices computed over time and geographic location, inducing successive passenger-taxicab equilibria.

%\tableofcontents*

% acknowledgments{
	
%I would like to thank my supervisors Prof. Amos Fiat and Prof. Yishay Mansour for guiding me in the rewarding process of research.

%Due to my limiting circumstances they have forfeited much of their personal time to meet and discuss the various work presented in this thesis, for that I am truly thankful. They have allowed me to experience many different methods of work and approaches and I cannot imagine this process without either of them. Thank you very much!

%}

%\textpages
%\listoffigures
%\printnomenclature{}

\section{Introduction}

In the sharing economy\footnote{Also known as the ``gig" economy.}
individual self-interested suppliers compete for customers.
According to PWC, the {\sl sharing economy} is projected to exceed
$300$  billion USD within 8 years. Lyft and Uber are prime examples of such systems. According to  \cite{lam2017demand,NBERw22627} it is the users who gain the majority of the surplus from such systems, and significantly so. Contrawise, many studies suggest negative societal issues in the sharing economy  (e.g., see
 \cite{MARTIN:2016,Cramer:2016,RICHARDSON:2015,berger2017drivers}).

Unlike salaried employees of livery firms, drivers for Uber (and
other ``gig" suppliers) are free to decide when they are working and
what calls/employment to accept. {\sl E.g.}, drivers can refuse to
accept a call if it is too far away. To increase supply (and reduce
demand) Uber introduced ``surge pricing" which is a multiplier on
the base price when demand outstrips supply. The surge price can be
different at different locations.

In the past pricing schemes resulted in what was theorized to be negative work elasticity \cite{camerer1997labor}.
In their work it is suggested that drivers impose upon themselves ``income targets". This means
that drivers will work until they reach their target income for the day causing them to extend
their hours in times of low payouts.
Recent studies suggest that this is false, surging prices in times of peak demand seems to conjure positive work elasticity \cite{Chen:2016:DPL:2940716.2940798}, allowing supply and demand to balance more efficiently.

\subinput{Network Model, Surge Pricing, Utility, and Passenger-Taxicab Equilibria}

Our goal is to maximize social welfare, defined as the sum of valuations of the users serviced by taxicabs, minus the cost associated with providing such service. We do so by setting surge prices (one per location), and let the system reach equilibrium. Our surge pricing schemes have several additional features such as envy freeness.

We consider two related settings:

\begin{itemize} \item A continuous setting where supply and demand consist of infinitesimal quanta,  {\em supply} and {\em demand} are modeled as fractional
quantities at locations. This is analogous to the non-atomic traffic model used in
Wardrop equilibria \cite{Wardrop:1952}. \begin{itemize} \item Here we assume that the taxicabs  are strategic and respond to changing surge prices whereas passengers are non-strategic so that demand is insensitive to price (alternately, one may view these passengers as having high value for service). \item The cost for a taxicab at location $x$ to serve a customer at location $y$ is the distance from $x$ to $y$.
\item Our goal here is to set a surge price $r_x$ at every location $x$ so as to incentivize taxicabs to act in a way that maximizes social welfare, {\sl i.e.}, all possible demand is serviced while the sum of distances traversed is minimized.
 \end{itemize}
\item A discrete setting where both taxicabs and passengers are strategic, and every taxicab and passenger is associated with some location. \begin{itemize} \item In this setting both demand and supply may change as a function of the surge price. Every passenger has a value for service and every taxicab has a cost for service at a given location, {\sl e.g.}, the distance to the location. \item Our goal here is to maximize social welfare (the sum of the values for the served customers minus the sum of costs of the taxicabs to do so). \item At every location $x$, we set a surge price $r_x$, that incentivizes taxicabs to serve passengers in a manner that maximizes social welfare.
    \item Moreover, maximizing social welfare is not only in equilibrium but also envy free.
    \item Every passenger at $x$ whose value is strictly greater than $r_x$ is served, and no passenger at $x$ with value strictly less than $r_x$ is served. \end{itemize}
\end{itemize}

%Both demand and supply are located on the vertices of a metric graph.

We define the utility for a taxicab at $x$ to serve a passenger at $y$ as the surge price at $y$, $r_y$, minus the distance from $x$ to $y$. A passenger at $x$ with value $v$ has utility $v-r_x$ to be served by a taxicab, and utility zero if she takes no taxicab. Clearly, a passenger at $x$ with $v-r_x<0$ will refuse to take a taxicab.

We introduce the notion of a {\sl passenger-taxicab equilibria}, for both continuous and discrete settings. A flow is a mapping from the current supply to some new supply. A flow has an associated cost which is the sum over edges of the flow along the edge times the length of the edge.
A flow $f$ is said to be a min cost flow that maps the current supply to the new supply if it achieves the minimal cost for moving the current supply to the new supply (this cost is also called the min earthmover cost).

A passenger-taxicab equilibria consists of a vector of surge prices $r=\langle r_x\rangle$, where $r_x$ is the surge price at location $x$, current supply $s=\langle s_x\rangle$, new supply $s'=\langle s'_x \rangle$ and demand $d=\langle d_x \rangle$, such that, for any min cost flow from $s$ to $s'$, every taxicab and every passenger maximize their utility. {\sl I.e.}, no taxicab can improve its utility by doing anything other than following the flow, every passenger at $x$ who has value greater than $r_x$ is in $d_x$ and is served. Every passenger at $x$ who has value less than $r_x$ is not served. 

The surge prices $r_x$ are poly time computable. In the continuous setting this is polynomial in the number of locations, in the discrete setting this  is polynomial in the number of passengers and taxicabs.

\subinput{Maximizing Social Welfare in an Online Setting via Surge Pricing}

We consider an online setting based on the continuous setting, where the time progresses in discrete
time steps. In each time step the following occurs: First, a new
demand allocation appears. Second, the online algorithm determines a
new supply. Given an allocation of supply and demand, the demand
served at a location is the minimum between the supply and demand at
the location. The social welfare is the difference between the total
demand served and the total movement cost, summed over all locations
and time steps. The main new crux of our model is that the online
algorithm (principal) can not impose a new supply allocation, but is
limited to setting surge prices. If flow $f$ is a flow equilibrium
arising from these surge prices --- strategic suppliers follow flow
$f$. Our results on surge prices for flow equilibria imply that the
online algorithm has flexibility in selecting the desired supply.

Trivially, for any metric, a simple algorithm that randomizes the start setting and doesn't move achieves a $\Theta(1/k)$ competitive ratio, where $k$ is the number of locations. However, If the costs of moving from any location to any other location is 1, we give an optimal competitive ratio of $\Theta(\sqrt{1/k})$.
If the demand sequence has the property that at any time and location the demand does not exceed $1/\rho$ ($\rho \geq 1$),
then we show a tight competitive ratio bound of  $\Theta(\sqrt{\rho/k})$.
For more general metric spaces we show mainly negative results.
Specifically, if all the distances are $1+\epsilon$ we show that the competitive ratio is no better than  $(1+\epsilon)^2/(\epsilon k)$, which implies
an optimal competitive ratio of $\Theta(1/k)$ for $\epsilon =\Theta(1)$.

Another extension we consider is when the average difference between
successive demand vectors is bounded by $\delta$ (in total variation
distance). In this case we show that simply matching supply to the
current demand gives a competitive ratio of $1-\delta$ and show that
the competitive ratio can not be better than $1-\delta/4$ (in the
case that all the distances are $1$).

\subinput{Related Work}
% Summary - \cite{banerjee2015pricing} Here they discuss a "queueing-theoretic" model and attempt to study dynamic pricing
% The analysis tries to simulate earning by attempting to bequeath a constant percentage of the earning to the platform mechanism with the driver retaining the remainder
%

It has been observed in taxicab services that a mismatch between
supply and demand, along with first-in-first-out scheduling of
service calls, without restricting the ``call radius", results in
reduced efficiency and even market failure
\cite{ARNOTT:1996,YANG:2002}. This happens because taxicabs are
dispatched to pick up customers at great distance
%from the customer
because no closer taxicab is currently available, more time is
wasted traveling to pick up clients, and the system performance
degrades.
Recent papers \cite{Chen:2016,Castillo:2017} study how changing
surge prices over time allow one to avoid such issues. These papers
do not consider the issue of having geographically varying surge
prices.

Assuming a stochastic passenger arrival rate, \cite{banerjee2015pricing} uses a queue theoretic approach
to model driver incentives in the system. The paper considers a simplistic dynamic pricing scheme, where there are two different pricing schemes for each node depending on the amount of drivers at said node. This model is compared to a simple flat rate. Drivers are assumed to calculate their incentives over several rides. The paper concludes that the dynamic pricing scheme can only achieve the welfare of the flat rate. However, the dynamic pricing scheme allows for the manager to have more room for error in calculating what the optimal rates are.

A central problem in handling a centralized taxi system involves routing empty cars between regions . Within the centralized mechanism, \cite{braverman2016empty} shows that, assuming stochastic arrival of passengers, an optimal static strategy ({\sl i.e.}, one that does not change it's routing policy based on current shortages) can be calculated by solving a linear programming problem.

Recently, and independently, a similar problem was studied in \cite{ma2018spatio}. In their model, selfish taxicabs seek to maximize revenue over time. There is no explicit cost for travel, one loses opportunities by taking long drives. They derive prices in equilibria that maximize the sum of passenger valuations, but ignore travel costs. In contrast, we ignore the time dimension and focus on the passenger valuations and travel costs.

Competitive analysis of online algorithms
\cite{ST85a,ST85b,Karlin1988} considers a worst case sequence of
online events with respect to the ratio between the performance of
an online algorithm and the optimal performance.
In a centralized setting, task systems, \cite{Borodin92}, can be
used to model a wide variety of online problems. Events are
arbitrary vectors of costs associated with different states of the
system, and an online algorithm may decide to switch states (at some
additional cost). A strategic version of this problem, for a single
agent, was considered in \cite{DBLP:conf/soda/CohenEFJ15} where a
deterministic incentive compatible mechanism was given. The
competitive ratio for incentive compatible task system mechanisms is
$O(1/k)$ where $k$ is the number of states. We cannot use the
incentive compatible task system mechanisms from
\cite{DBLP:conf/soda/CohenEFJ15} for two reasons: (1) in our setting
there are a large number of strategic agents (many Uber drivers)
split amongst a variety of different [task system] states
(locations) rather than one such agent in a single state, and (2)
the suppliers have both profits (payments) and loss (relocation).

Competitive analysis of the famous $k$-server problem
\cite{MANASSE1990208} has largely driven the field of online
algorithms. A variant of the $k$-server problem is known as the
$k$-taxicab problem \cite{Fiat:90,XIN:2004}. Although the problem we
consider herein and the $k$-taxicab problem both seek efficient
online algorithms, and despite the name, the nature of the
$k$-taxicab problem is quite different from the problem considered
in this \paper{}. In the $k$-taxicab problem a single request occurs at
discrete time steps and a centralized control routes taxicabs to
pick up passengers, seeking to minimize the distances traversed by
taxis while empty of passengers. Taxicabs are not selfish suppliers,
and all requests must be satisfied. This is quite different from our
setting where both demand and supply are spread about
geographically, there are many strategic suppliers, and not all
demand must be served.

\section{Model and Notation}\label{sec:model}

\subinput{The Continuous Passenger-Taxicab Setting}

We model the network as a finite metric space $G=(V,E)$, where
$\ell_{u,v}\geq 0$ is the distance between
vertices $u, v \in V$. {\sl I.e.}, $\ell_{u,v}$ is the cost to a
taxicab to switch between vertices $u$ and $v$. Infinitesimally small taxicabs reside in
the vertices $V$.

Demand and supply are vectors in $[0,1]^{|V|}$ that sum to one.
Given demand $d$ and current supply $s$,
we incentivize strategic taxicabs so that current supply $s$ becomes new supply $s'$ which services the demand $d$.

If the demand in vertex $u$ is $d_u$, and the new supply in
vertex $u$  is $s'_u$, then the minimum of the two is the
actual demand served (in vertex $u$). Note that if the two
are not identical then there are either unhappy passengers (without
service) or unhappy taxicabs (with no passengers to service).
 Formally,
\begin{definition}\label{def:demandserved}
we define the {\sl demand served}, as follows:
\begin{itemize} \item The {\sl demand served} in vertex $u$, $\ds(s'_u,d_u)$, is the minimum of $s'_u$ and $d_u$, {\sl i.e.,} $\ds(s'_u,d_u)=\min(s'_u,d_u)$.
\item Given a demand vector $d$ and a supply vector $s'$, the total demand served is
$\ds(s',d)= \sum_{u\in V} \ds(s'_u,d_u)=
\sum_{u\in V}{\min(s'_u,d_u)}$.
\end{itemize}
\end{definition}

Switching supply from $s$ to $s'$ is implemented via a flow $f$. A
flow from $s$ to $s'$ is a function $f(u,v):V\times V \mapsto
\Reals^{\geq 0}$ that has the following properties:
\begin{itemize}
    \item For all $u,v\in V$, $f(u,v)\geq 0$.
    \item For all $v\in V $, $\sum_{u\in V} f(u,v)=s'_v $.
    \item For all $u\in V$, $\sum_{v\in V} f(u,v)=s_u $.
\end{itemize}

%Given a flow $f$ from supply vector $s$ to supply vector $s'$ we say that the flow $f$ {\sl induces} supply $s'$.
We define the earthmover distance between supply vectors,
\begin{definition}\label{def:earthmover}
The cost of flow $f$ is
$\ermv(f)=\sum_{u,v,\in V} f(u,v)\ell_{u,v}$.
The earthmover distance from supply vector $s$ to supply vector $s'$ is
$$\ermv(s,s')=\min_{\mbox{\rm flows $f$ from $s$ to $s'$}} \ermv(f).$$
\end{definition}

We assume that switching supply from $s$ to $s'$ is implemented via a flow $f$ of minimal cost.
Note that there may be multiple flows with the same minimal cost  --- see Figures \ref{fig:flow1} and \ref{fig:flow2}.

%\subinput{Surge Pricing and Flow Equlibria}

In order to incentivize our strategic taxicabs to move to a new supply vector, we use surge
pricing in vertices.

\begin{definition}\label{def:surgepricing} Surge pricing is a vector, $r\in \Reals^{\geq 0}$, where $r_v$ is the
payment to a taxicab that serves demand in vertex $v\in V$.
\end{definition}

We define the utility for an infinitesimal taxicab, given surge
pricing $r$, as follows.

\begin{definition}\label{def:utility} Given supply $s$, new supply $s'$, surge prices $r$, demand $d$, and a min cost flow $f$ from $s$ to $s'$, the utility for
a taxicab that switches from vertex $u$ to vertex $v$   is
$$\mu(u\mapsto v|s',r,d) = r_v \cdot
\left(\frac{\ds(s'_v,d_v)}{s'_v}\right) -
  \ell_{u,v}.$$
  \end{definition}

To motivate the above definition of utility  $\mu(u\mapsto v|s',r,d)$, of
switching from $u$ to $v$, consider the following:
\begin{itemize}
\item The probability of serving a passenger in vertex $v$ is $\frac{\ds(s'_v,d_v)}{s'_v}$. This follows since:
 \begin{itemize}
\item If passengers outnumber taxicabs in vertex $v$ then any such taxicab will surely serve a passenger.
\item Alternately, if taxicabs outnumber passengers in vertex $v$ then the choice of which taxicabs serve passengers is a random subset of the taxicabs. \end{itemize}
\item The profit from serving a passenger in vertex $v$ is equal to the surge price for that vertex, $r_v$.
\item The cost of serving a passenger in vertex $v$, given that the taxicab was previously in vertex $u$, is $\ell_{u,v}$.
    \end{itemize}

Finally, we define the notion of a passenger-taxicab equilibrium, where no
infinitesimal taxicab can benefit from deviations.

\begin{definition}\label{def:flowequilibrium}
Given a demand vector $d$, current supply vectors $s$, and new supply $s'$, we say that
a surge pricing $r$ is in {\em passenger-taxicab equilibrium}, if
for every min cost flow $f$ from $s$ to $s'$,
 for every $u,v\in V$ such that
$f(u,v)>0$  we have that \begin{equation}\mu(u\mapsto v|s',r,d) =\max_{w\in V}
\mu(u\mapsto w|s',r,d).\label{eq:incentivesequilibrium}\end{equation} {\em I.e.}, every infinitesimal taxicab is
choosing a best response.
Such a passenger-taxicab equilibrium is said to {\em induce supply
$s'$.}
\end{definition}

Our goal in the continuous setting is to set surge prices so that the new supply $s'=d$ is a passenger-taxicab equilibrium.

In this continuous setting we take demand $d$ to be insensitive to the surge prices. %The current supply $s$, the new supply $s'$, and the surge prices $r$
%are in passenger-taxicab  is the result
%of a flow equilibrium, which itself depends on the current supply $s$, the surge pricing $r$, and the [price-insensitive] demand $d$.
%
In the next section we describe the discrete setting where both the demand and the supply are sensitive to the prices.
One could define a  continuous passenger-taxicab setting where every location has an associated density function for passenger valuations. 
Then, we could convert this continuous setting to an instance of the discrete passenger-taxicab setting with $1/\epsilon$ taxicabs/passengers. Under appropriate conditions, this will give a good approximation to a continuous passenger-taxicab setting where both demand and supply are sensitive to surge pricing.

%Fix a supply vector, $s$, and a demand vector
%$s'$. Theorems \ref{thm:surge} and \ref{thm:anyst}, and
%associated algorithms, given a desired supply
% compute surge prices, $r$, which induce a
%flow equilibria that induces supply vector $s'$.

\subinput{The Discrete Passenger-Taxicab Setting}

%We consider an additional model, allowing passengers to submit a valuation for the ride.
%In this case, after submitting valuations for the rides surge prices are posted for each location and a passenger
%is said to be interested in a ride if the surge price at his location is lower or equal to his valuation.

%The social welfare in this case is defined as the sum of the valuations of the passengers served minus the cost
%of serving the load (the cost of movement for each taxi that serves any of the served passengers)

As above, we model the network as a finite metric space $G=(V,E)$, and the cost to a
taxicab to switch between vertices $u$ and $v$ is the
distance between them, $\ell_{u,v}$. Unlike the continuous case, there is an integral number of taxicabs and passengers at every vertex.

Let $B=\{\pass_1, \ldots, \pass_m\}$ be a set of $m$ passengers and
$T=\{\taxi_1, \ldots , \taxi_n\}$ be a set of $n$ taxicabs. Every passenger $\pass_i\in B$ has a value $\val(\pass_i)\geq 0$ for service.
A supply $s$ is a vector $s = \langle s_v
\rangle_{v\in V}$  where $s_v\subseteq T$ for all $v
\in V$, $\cup_{v\in V} s_v=T$, and $s_v\cap s_u = \emptyset$ for all $u, v \in V$, $u\neq v$.

A profile $P$ is a partition of the passengers $B$, where for each $u\in V$ the set $P_u \subseteq B$ is the set of passengers at $u$.
A demand is a function of a vertex and a surge price at the vertex. We define the function $d_v$ as follows: $$d_v(r_v)= \{ \pass_i \in P_v | \val(\pass_i)\geq r_v \}.$$ Ergo, $d_v(r_v)$ is the set of passengers at vertex $v$ that are interested in service given that the price is $r_v$, {\sl i.e.}, those passengers whose value is at least $r_v$. Note that $d_v(0)=P_v$.

For ease of notation, we denote a collection of entities $x_v$ for each vertex $v\in V$, by $x=\langle x_v\rangle_{v\in V}$.
For example, $s=\langle s_v\rangle_{v\in V}$, $d=\langle d_v \rangle$, and $r=\langle r_v\rangle_{v\in V}$.

%The location of a passenger $\pass_i$ and taxi $\taxi_j$ is denoted by $d(\pass_i)\in V$ and $s(\taxi_j)\in V$ respectively.

%

%
%We abuse the notation and identify passenger $\pass_i$ and taxicab $\taxi_j$ with `location' $i$ and $j$.
%We denote by
%^$\ell_{i,j}=\ell_{\loc(\pass_i),\loc(\taxi_j)}$.
%Also, for some location $\gamma$ denote $\ell_{\gamma,\taxi_j}=\ell_{\gamma,j}$.

%Given $B$ we define a demand $d$ where $d_v= |\{i : \loc(\pass_i)\}|$. Similarly,
%given $T$ we define a supply $s$ where $s_v= |\{i : \loc(\taxi_i)\}|$
%Given demand $D$ we define the aggregate demand at vertex $v$ as $\demandB_v=\{\pass_i:d(\pass_i)=v\}$.
%Similarly, given supply $s$ we define the aggregate supply at node $v$ as  $\supplyT_v=\{\taxi_j:s(\taxi_j)=v\}$.

Define a flow $f$ from supply $s$ to supply $s'$ as follows.
The flow $f(x,y):V\times V \mapsto \mathbb{Z}^+$ has the following properties:
\begin{itemize}
	\item For all $u,v\in V$, $f(u,v)\in \mathbb{Z}^+$.
	\item For all $u\in V$, $\sum_{v\in V} f(u,v)=|s_u|$.
	\item For all $v\in V$, $\sum_{u\in V} f(u,v)=|s'_v|$.
\end{itemize}
The flow from a vertex $u$ is equal to the number of taxicabs at $u$ under supply $s$, {\sl i.e.}, $|s_u|$. The flow into a vertex $v$ is equal to the number of taxicabs at $v$ under supply $s'$, {\sl i.e.}, $|s'_v|$.
The cost of a flow in the discrete setting is the same as the cost of a flow in the continuous setting (Definition \ref{def:earthmover}), {\sl i.e.}, $\sum_{u,v\in V} f(u,v)\ell_{u,v}$.

We now define the demand served at a vertex $u$,
\begin{definition}\label{def:demandserved}
For a vertex $v$, given a supply $s'_v$, a surge price $r_v$, and a demand $d_v(r_v)$, we define the {\sl demand served},
$\ds_v(s'_v,d_v,r_v)\subseteq P_v$, as follows:
\begin{itemize}
\item
If $|d_v(r_v)| \leq |s'_v|$ then $\ds_v(s'_v,d_v,r_v) = d_v(r_v)$.
\item
If $|s'_v| < |d_v(r_v)|$ then  $\ds_v(s'_v,d_v,r_v)$ is the set of the $|s'_v|$ highest valued passengers from $d_v(r_v)$, breaking ties arbitrarily.
\end{itemize}
%Let $s'=\langle s_v\rangle_{v\in V}$, $d=\langle d_v\rangle_{v\in V}$, and $\langle r_v\rangle_{v\in V}$.
Given demand functions $d$, surge prices $r$, and new supply $s'$, the total demand served $\ds(s',d,r)$ and its value $\dsv(s',d,r)$ is given by
\begin{eqnarray*} \ds(s',d,r)&=& \cup_{v\in V} \ds_v(s'_v ,d_v,r_v);\\
\dsv(s',d,r)&=&
\sum_{\pass_i\in\ds(s' ,d,r)} \val(\pass_i). \end{eqnarray*}

%\begin{itemize} \item The {\sl demand served} at vertex $v$:
%%We sort the passengers $B_v$ according to their values, and the
%$\ds_v(s'_v,d_v,r_V) = d_v(r_v)$ if $|d_v(r_v)| \leq |s'_v|$. If $|s'_v| < |d_v(r_v)|$ then set $\ds_v(s'_v,d_v,r_v)\subset d_v(r_v)$, where $|\ds_v(s'_v,d_v,r_v)|=|s'_v|$ and $\ds_v(s'_v,d_v,r_v)$ are the highest valued passengers from $d_v(r_v)$, breaking ties arbitrarily.
%%$\ds(s'_i,s'_i)$, is the minimum of $s'_i$ and $s'_i$.
%\item Given demands $d_v$, supplies $s'_v$, and surcharges $r_v$ the  demand served is
%$\ds(s',d,r)= \cup_{v\in V} \ds_v(s'_v ,d_v,r)$ and its value is
%$\dsv(f,r)=
%\sum_{\pass_i\in\ds(s' ,d,r)} \val(\pass_i)$.
%\end{itemize}
\end{definition}

%The value of a flow is the sum of the values of the passengers served. We assume that the passengers with
%the highest value in some vertex are the first to get served. {\sl I.e.}, if the are some $s'_v$ taxis at some vertex
%$v$ then the $s'_v$ passengers with the highest valuations will receive service.

\begin{definition}\label{def:sw}
The social welfare is the difference between the sum of the values of the passengers served and the cost of the min cost flow,
which is the sum of the distances traveled by the taxis. Namely, for current supply $s$, new supply $s'$, demand functions $d$, and surge prices $r$, the social welfare is
%{\sl I.e. }, $\sum_{u,v\in V} f(u,v)\cdot\ell_{u,v}$.
\begin{equation}
SW(s,s',r,d)=\dsv(s',d,r)-em(s,s'). \label{eq:swdiscrete}
\end{equation}
\end{definition}

Remark: we did not define social welfare in the continuous passenger-taxicab setting where the passengers are price insensitive.
However, one can view the social welfare in the price-insensitive demand setting as a special case of the responsive demand setting when all passenger valuations are very high.

%We define the social welfare of a flow to be the value minus the cost of the flow.

Like the definitions for utility and passenger-taxicab equilibria in the continuous case, one can define them for the discrete case:
The utility of a taxicab $t_j\in s_u$
moving from $u$ to $v$, given new supply $s'$, surge prices $r$ and demand functions $d$, is
$$
\mu_{t_j}(u\mapsto v|s',r,d)=\frac{\min(|d_v(r_v)|,|s'_v|)}{|s'_v|}\cdot r_v-\ell_{u,v}.
$$

\begin{definition}\label{def:discrete.equi}
Given demand $d$ , current supply $s$ and new supply $s'$, surge prices $r$ are said to be in {\em passenger-taxicab} equilibrium if for every
min cost flow $f$ from $s$ to $s'$ and  for any
$u,v$ such that $f(u,v)>0$ we have that
\begin{itemize}
\item Taxicabs are choosing a best response: $\mu_{t_j}(u\mapsto v|s',r,d)=\max_{w\in V}(\mu_{t_j}(u\mapsto w|s',r,d))$.
\item All passengers $b\in B$ with $\val(b)>r_{\loc(b)}$ are served. No passengers $b\in B$ with $\val(b)<r_{\loc(b)}$ are served.
\end{itemize}
\end{definition}

\subinput{Online Setting}

In the online setting we inherit the continuous model setting, adding a function of time.
Time progresses in discrete time steps $1, 2, \ldots , T$.  At time
$t$ the demand vector $d^t=(d^t_1,d^t_2,\ldots,d^t_k)$ associates
each vertex $v\in V$ with some demand $d^t_v\geq 0$, and we assume
that the total demand $\sum_i d^t_i=1$. One should not think of a time step as being instantaneous, but rather as a period of time during which the demands remain steady.

Every time step $t$ also has an
associated supply vector $s^t=(s^t_1,s^t_2, \ldots, s^t_k)$, where
$s^t_i\geq 0$ and $\sum_i s^t_i =1$ for all $t$. The supply at time
$t$ is a ``reshuffle" of the supply at time $t-1$, by having infintestimally small
suppliers moving about the network.
In our model, the time required for suppliers to adjust supply from $s^{t-1}$ to $s^t$ is small relative to the period of time during which demand $d^t$ is valid.

If the demand in vertex $i$ at time $t$ is $d^t_i$, and the supply in
vertex $i$ at time $t$ is $s^t_i$, then the minimum of the two is the
actual demand served (in vertex $i$ at time $t$). Note that if the two
are not identical then there are either unhappy customers (without
service) or unhappy suppliers (with no customer to service).
Formally, we define the benefit derived during each time period, the {\sl demand served}, as in the continuous model.

We define the social welfare as follows:

\begin{definition}\label{def:socialwelfare}
	Given a demand sequence $d=(d^1, \ldots, d^T)$ and a supply sequence
	$s=(s^1, \ldots, s^T)$ we define the social welfare
	$$\sw\left(s,d\right) = \ds(s,d) - \ermv(s) = \sum_{t=1}^T
	\ds(s^t,d^t) - \sum_{t=2}^T \ermv(s^{t-1},s^t).$$
\end{definition}

An online algorithm for social welfare follows the following
structure. At time $t = 1, 2, \ldots, T$:

\begin{enumerate}
	\item A new demand vector $d^t$ appears.
	\item The online algorithm determines what the supply vector $s^t$ should be. (Indirectly, by computing and posting surge prices so that the resulting passenger-taxicab-equilibrium induces supply $s^t$).
\end{enumerate}

The goal of the online algorithm is to maximize the social welfare
as given in Definition \ref{def:socialwelfare}: Compute a supply
sequence $s$, so as to maximize $\sw(s,d)$. The supply vector $s^t$
is a function of the demand vectors  $d^1, \ldots, d^t$ but not of
any demand vector $d^\tau$, for $\tau>t$.
Implicitly, we assume that the passenger-taxicab equilibrium is attained quickly relative to the rate at which demand changes.

The competitive ratio of such an online algorithm, $\mbox{\rm Alg}$, is the worst case ratio
between the numerator: the social welfare resulting from the demand sequence $d$ and the
online supply $\mbox{\rm Alg}(d)$, and the denominator: the optimal social
welfare for the same demand sequence, {\sl i.e.},
$$
\min_{d} \frac{\sw(\mbox{\rm Alg}(d),d)}{\max_s\sw(s,d)}.
$$

\section{The Continuous Passenger-Taxicab Setting}\label{sec:surge}

In this section we deal with the continuous passenger-taxicab setting.
Given current supply $s$, demand $d$ and new supply $s'=d$, we show how to set surge prices $r$
such that they are in passenger-taxicab equilibria. Moreover, for these $s$, $d$, and $r$, the only possible $s'$ which results in a passenger-taxicab
equilibria is $s'=d$.
(Similar techniques give surge prices that induce [almost] arbitrary supply vectors, $\tilde{s}$, see below).

 Proof overview: Given some min cost flow $f^*$ from
supply  $s$ to demand $d$, we construct a unit demand market, with bidders and items. For every $x,y$ such that $f^*(x,y)>0$ we construct a bidder and an item. We also define bidder valuations for all items.
This unit demand market has Walrasian clearing prices that maximize social
welfare (Lemma~\ref{lemma:properallocation}). We show how we can
convert the Walrasian prices on items to surge pricing
(Lemma~\ref{lemma:Wprice-location}).

We then show and that the resulting surge pricing has a passenger-taxicab equilibrium which induces supply
equals demand (Lemma~\ref{lemma:exists_eq}) and it is the case with
all all passenger-taxicab equilibria (Lemma~\ref{lemma:eq_unique}).
Lemma \ref{lemma:altflow} shows that the incentive requirements in Equation
(\ref{eq:incentivesequilibrium}) also hold for any min cost flow $f\neq f^*$, from $s$ to $d$.
This proves
{\sl Theorem \ref{thm:surge}}.

\begin{figure}
\vbox{\hbox{
\begin{minipage}{0.49\textwidth}
\centering
\begin{tikzpicture}[shorten >=1pt, auto, node distance=3cm, ultra thick,
node_style/.style={circle,draw=blue,fill=blue!20!,font=\sffamily\Large\bfseries},
edge_style/.style={draw=black, ultra thick},scale=0.7]

\node[node_style] (v1) at (-4,2) {$v_1$}; \node[node_style] (v2) at
(0,2) {$v_2$}; \node[node_style] (v3) at (4,2) {$v_3$};
\node[node_style] (v4) at (-4,-2) {$v_4$}; \node[node_style] (v5) at
(0,-2) {$v_5$}; \node[node_style] (v6) at (4,-2) {$v_6$};

\draw[edge_style] (v1) edge node{1} (v2); \draw[edge_style] (v2)
edge node{1} (v3); \draw[edge_style] (v4) edge node{1} (v5);
\draw[edge_style] (v5) edge node{1} (v6); \draw[edge_style] (v1)
edge[bend left] node{2} (v3); \draw[edge_style] (v4) edge[bend
right] node{2} (v6);

\draw[edge_style] (v1) edge node[near start]{1} (v4);
\draw[edge_style] (v5) edge node[near end]{2} (v1);
\draw[edge_style] (v1) edge node[very near start]{3} (v6);

\draw[edge_style] (v4) edge node[very near end]{2} (v2);
\draw[edge_style] (v2) edge node[near start]{1} (v5);
\draw[edge_style] (v2) edge node[very near start]{2} (v6);

\draw[edge_style] (v4) edge node[very near end]{3} (v3);
\draw[edge_style] (v3) edge node[near start]{2} (v5);
\draw[edge_style] (v3) edge node[near start]{1} (v6);

\end{tikzpicture}

\captionof{figure}{Example road network, with costs along edges.}
\label{fig:example}\end{minipage}

\begin{minipage}{0.49\textwidth}
\centering
\begin{tikzpicture}[shorten >=1pt, auto, node distance=3cm, ultra thick,
node_style/.style={circle,draw=blue,fill=blue!20!,font=\sffamily\Large\bfseries},
edge_style/.style={draw=black, ultra thick},
dedge_style/.style={->,> = latex}, scale=0.6]

\node[node_style] (v1) at (-4,2) {$v_1$}; \node[node_style] (v2) at
(0,2) {$v_2$}; \node[node_style] (v3) at (4,2) {$v_3$};
\node[node_style] (v4) at (-4,-2) {$v_4$}; \node[node_style] (v5) at
(0,-2) {$v_5$}; \node[node_style] (v6) at (4,-2) {$v_6$};

\draw[dedge_style] (v3) edge[loop right] node{$\frac{1}{8}$} (v3);
\draw[dedge_style] (v3) edge node[near start]{$\frac{1}{8}$} (v6);
\draw[dedge_style] (v3) edge node[near start]{$\frac{1}{12}$} (v5);

\draw[dedge_style] (v2) edge node[near start]{$\frac{7}{24}$} (v5);
\draw[dedge_style] (v2) edge node[near
start,left=10pt]{$\frac{1}{24}$} (v4);

\draw[dedge_style] (v1) edge node[near start,left]{$\frac{1}{3}$}
(v4);

\end{tikzpicture}
\captionof{figure}{A min earthmover cost flow from supply vector
$\cursupply=\langle\frac{1}{3},\frac{1}{3},\frac{1}{3},0,0,0\rangle$ to
demand vector
$\newdemand=\langle0,0,\frac{1}{8},\frac{3}{8},\frac{3}{8},\frac{1}{8}\rangle$.}
\label{fig:flow1}
\end{minipage}
}}
\end{figure}

\begin{figure}
\vbox{\hbox{
\begin{minipage}{0.49\textwidth}
\begin{tikzpicture}[shorten >=1pt, auto, node distance=3cm, ultra thick,
node_style/.style={circle,draw=blue,fill=blue!20!,font=\sffamily\Large\bfseries},
edge_style/.style={draw=black, ultra thick},
dedge_style/.style={->,> = latex},scale=0.6]

\node[node_style] (v1) at (-4,2) {$v_1$}; \node[node_style] (v2) at
(0,2) {$v_2$}; \node[node_style] (v3) at (4,2) {$v_3$};
\node[node_style] (v4) at (-4,-2) {$v_4$}; \node[node_style] (v5) at
(0,-2) {$v_5$}; \node[node_style] (v6) at (4,-2) {$v_6$};

\draw[dedge_style] (v3) edge[loop right] node{$\frac{1}{8}$} (v3);
\draw[dedge_style] (v3) edge node[near start]{$\frac{1}{8}$} (v6);
\draw[dedge_style] (v3) edge node[near start]{$\frac{1}{24}$} (v5);
\draw[dedge_style] (v3) edge node[near start, above]{$\frac{1}{24}$}
(v4);

\draw[dedge_style] (v2) edge node[near start]{$\frac{1}{3}$} (v5);

\draw[dedge_style] (v1) edge node[near start, left]{$\frac{1}{3}$}
(v4);

\end{tikzpicture}
\captionof{figure}{Another min earthmover cost flow from from supply
vector
$\cursupply=\langle\frac{1}{3},\frac{1}{3},\frac{1}{3},0,0,0\rangle$ to
demand vector
$\newdemand=\langle0,0,\frac{1}{8},\frac{3}{8},\frac{3}{8},\frac{1}{8}\rangle$.}
\label{fig:flow2}
\end{minipage}
\begin{minipage}{0.49\textwidth} \centering
        \begin{tabular}{|c|c|c|c|c|c|c|}
            \hline
            &  $m_{14}$&  $m_{24}$&  $m_{25}$&  $m_{33}$&  $m_{35}$&  $m_{36}$\\
            \hline
            $b_{14}$&  3&  3&  2&  2&  2&  1\\
            \hline
            $b_{24}$&  2&  2&  3&  3&  3&  2\\
            \hline
            $b_{25}$&  2&  2&  3&  3&  3&  2\\
            \hline
            $b_{33}$&  1&  1&  2&  4&  2&  3\\
            \hline
            $b_{35}$&  1&  1&  2&  4&  2&  3\\
            \hline
            $b_{36}$&  1&  1&  2&  4&  2&  3\\
            \hline
            $p_{ij}$&  0&  0&  1&  3&  1&  2\\
            \hline
        \end{tabular}
            \captionof{figure}{Item valuations for bidders $B^f=\{b_{14},b_{24},b_{25},b_{33},b_{35},b_{36}\}$, items $M^f=\{m_{14},m_{24},m_{25},m_{33}, m_{35},m_{36}\}$, where $f$ is the min earthmover flow given in Figure \ref{fig:flow1}. Note that in Figure \ref{fig:example} we have $\max_{ij}\left(\ell_{ij}\right)=3$ and thus $C=4$. The last row gives Walrasian market clearing prices for items $m_{ij}$. Note that $p_{ij}=p_{ij'}$ for all $b_{ij}, b_{ij'}\in B^f$. }
    \label{fig:marketfig}
\end{minipage}}}

\end{figure}

As a running example, consider the road network in Figure
\ref{fig:example}. Also, assume that the supply vector
$\cursupply=\langle\frac{1}{3},\frac{1}{3},\frac{1}{3},0,0,0\rangle$
and demand vector
$\newdemand=\langle0,0,\frac{1}{8},\frac{3}{8},\frac{3}{8},\frac{1}{8}\rangle$.
Two minimum cost flows are given in Figures
\ref{fig:flow1} and \ref{fig:flow2}. Both these flows have
cost $1$.

Given a minimum cost flow $f^*$, % for each $v\in V$ let $j_v=|\{u | f(v,u)>0\}|$ and $i_v=|\{u | f(u,v)>0\}|$.
we define a unit demand market setting as follows:
\begin{itemize}
\item Items $M^{f^*}$, and unit demand bidders $B^{f^*}$, both of which are indexed by pairs of vertices, where
$$M^{f^*}= \left\{ m_{xy} | x,y\in V, f^*(x,y)>0 \right\} \qquad B^{f^*}= \left\{ b_{wz} | w,z\in V, f^*(w,z)>0 \right\}.$$
\item We set the value of item $m_{xy}\in M^{f^*}$ to bidder $b_{wz}\in B^{f^*}$ to be, $$\auctionvalue_{b_{wz}}(m_{xy})=C - \ell_{w,y}, \qquad\mbox{\rm where\ } C=\max_{i,j}{\ell_{i,j}+1}.$$
\item The utilities of bidders are unit demand and quasi-linear, {\sl i.e.}, the utility $\auctionutility_{b_{wz}}$ of bidder $b_{wz}\in B^{f^*}$ for item set $S$ and price $p$ is
$$ \auctionutility_{b_{wz}}\left(S\right)=\max_{m_{xy}\in S} \auctionvalue_{b_{wz}}(m_{xy})-p.$$
\end{itemize}

As an example, let $f^*$ be the minimum cost flow of Figure~\ref{fig:flow1}.
The market induced by $f^*$ is illustrated in Figure~\ref{fig:marketfig}.

Given a flow $f^*$, bidders $B^{f^*}$ and items $M^{f^*}$ we define
the following weighted bipartite graph $G(B^{f^*},M^{f^*},E)$, where
between bidder $b_{wz}\in B^{f^*}$ and item $m_{xy}\in M^{f^*}$
there is an edge of weight $C-\ell_{w,y}\geq 1$.
%

%The maximum weight matching is a perfect matching where each bidders is allocated a unique item
%(and no bidder is assigned to the $\emptyset$).

\begin{definition}
Given a flow $f^*$, a {\em matching} between bidders $B^{f^*}$ and
items $M^{f^*}$ is a function $\pi:B^{f^*}\mapsto M^{f^*} \cup
\{\emptyset\}$, where bidder $b\in B^{f^*}$ is matched to item
$\pi(b)\in M^{f^*}$ or unmatched (if $\pi(b)=\emptyset$), such that
no two bidders $b_1,b_2\in B^{f^*}$ are matched to the same item
$m\in M^{f^*}$.

As there is an edge between every bidder $b_{wz}$ and every item
$m_{xy}$ with weight $C-\ell_{w,y}\geq 1$, the maximum weight
matching is a perfect matching between bidders and items and the
mapping $\pi$ never assigns $\emptyset$ to a bidder.
\end{definition}

\begin{figure}
\begin{minipage}{0.49\textwidth}
\centering
        \begin{tabular}{|c|c|c|c|c|c|c|}
            \hline
            &  $v_1$&  $v_2$&  $v_3$&  $v_4$&  $v_5$&  $v_6$\\
            \hline
            $r^t_i$&  1&1&1&4&3&2\\
            \hline

        \end{tabular}
            \captionof{figure}{Surge prices resulting in a flow-equilibrium with $s^t=d^t$. These surge price for $v_j$ is $C-p_{ij}$ if there exists some bidder $b_{ij}\in B^f$ and 1 otherwise. The Walrasian prices $p_{ij}$ appear in Figure \ref{fig:marketfig}. }
    \label{fig:surgeprice}
\end{minipage}
\end{figure}

\begin{lemma}\label{lemma:properallocation}
The matching $g$ where $g(b_{wz})=m_{wz}$,
% $b_{wz}\in B^f$, $m_{wz}\in M^f$,  (allocates item $m_{wz}$ to bidder $b_{wz}$),
maximizes social welfare.
In addition, there exist
Walrasian prices for which $g$ is a competitive market equilibrium.
\end{lemma}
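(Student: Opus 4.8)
The plan is to reduce the maximum-weight matching question to a min-cost transportation problem and then recycle the optimality certificate of $f^*$.

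First observe that $\auctionvalue_{b_{wz}}(m_{xy}) = C - \ell_{w,y}$ depends only on the origin $w$ of the bidder and the destination $y$ of the item. Since $|B^{f^*}| = |M^{f^*}|$ and every bidder--item weight is positive, any maximum-weight matching is a perfect matching, i.e.\ a permutation of the support set $P = \{(x,y) : f^*(x,y) > 0\}$. For a vertex $w$ let $a_w = |\{z : f^*(w,z) > 0\}|$ be its support out-degree and for $y$ let $c_y = |\{x : f^*(x,y) > 0\}|$ be its support in-degree, so that $\sum_w a_w = \sum_y c_y = |P|$. Given a matching $\pi$, counting how many bidders with origin $w$ are sent to items with destination $y$ produces a nonnegative integer array $h$ with row sums $a$ and column sums $c$, i.e.\ a transportation plan from $a$ to $c$, and the matching value equals $|P|\cdot C - \sum_{w,y} h(w,y)\,\ell_{w,y}$. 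Conversely every integer plan from $a$ to $c$ is realized by some matching with the same value. Hence maximizing matching value is equivalent to minimizing the transportation cost $\sum_{w,y} h(w,y)\,\ell_{w,y}$ over plans from $a$ to $c$.

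Next I identify $g$ with the $0/1$ support-indicator plan $\bar f$ defined by $\bar f(w,y) = 1$ iff $f^*(w,y) > 0$: since $g$ sends $b_{wz}$ to $m_{wz}$, origin $w$ is matched to destination $z$ exactly once per support edge, and $\bar f$ has row sums $a$ and column sums $c$. The key step is that $\bar f$ is a \emph{min-cost} plan from $a$ to $c$. This is not immediate from $f^*$ minimizing $\sum_{u,v} f^*(u,v)\ell_{u,v}$, since $\bar f$ optimizes a different, flow-unweighted objective; the point is that transportation optimality is certified purely through the support. Concretely, because $f^*$ is a min-cost flow there are node potentials $\alpha,\beta$ with $\alpha_u + \beta_v \le \ell_{u,v}$ for all $u,v$ and $\alpha_u + \beta_v = \ell_{u,v}$ whenever $f^*(u,v) > 0$ (complementary slackness for min-cost flow). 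As $\bar f$ has exactly the same support as $f^*$, the same $\alpha,\beta$ are dual-feasible and tight on the support of $\bar f$, which is precisely the optimality condition for $\bar f$ as a plan from $a$ to $c$. Thus $\bar f$ is min-cost and, by the previous paragraph, $g$ is a maximum-weight matching.

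For the Walrasian prices I reuse these potentials. Put $\alpha^{\max} = \max_w \alpha_w$, set the price of every item with destination $y$ to $p_{m_{xy}} = C - \alpha^{\max} - \beta_y$, and assign bidder indirect utility $\alpha^{\max} - \alpha_w$. Then for any $b_{wz}$ and $m_{xy}$ one computes $\auctionvalue_{b_{wz}}(m_{xy}) - p_{m_{xy}} = \alpha^{\max} + \beta_y - \ell_{w,y} = (\alpha^{\max}-\alpha_w) + (\alpha_w+\beta_y-\ell_{w,y}) \le \alpha^{\max}-\alpha_w$, with equality exactly when $\alpha_w + \beta_y = \ell_{w,y}$, in particular on the matched pair $m_{wz}$ (a support edge). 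Hence each bidder's matched item is utility-maximizing among all items and the empty bundle, so $g$ with these prices is a competitive (Walrasian) equilibrium; prices are nonnegative since $\alpha^{\max} + \beta_y \le \alpha^{\max} + \max_y \beta_y \le \max_{i,j}\ell_{i,j} < C$. (Alternatively, existence of Walrasian prices follows from the Shapley--Shubik theorem once $g$ is known to be maximum weight.) These prices depend only on the destination $y$, matching the observation in Figure~\ref{fig:marketfig} that $p_{ij} = p_{ij'}$, and this is exactly the structure later exploited to define per-location surge prices. The main obstacle I expect is this transfer of optimality from $f^*$ to $\bar f$: the min-cost flow minimizes the flow-weighted cost while $g$ corresponds to an unweighted cost over the same support pattern, and one must argue a common optimizer; the resolution is that a transportation optimum is pinned down by its support through complementary slackness, so a single set of dual potentials certifies both.
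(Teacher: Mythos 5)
Your proof is correct, but it takes a genuinely different route from the paper's. The paper proceeds by contradiction with an exchange/perturbation argument: if some matching $\tilde g$ had strictly higher welfare than $g$, it perturbs $f^*$ by $\epsilon$ (the minimum positive flow value) along the discrepancy between $\tilde g$ and $g$, verifies the perturbed $f'$ is still a valid flow from $\cursupply$ to $\newdemand$, and shows $\ermv(f')<\ermv(f^*)$, contradicting minimality; the existence of Walrasian prices is then not proved but cited from the literature (\cite{GS99}). You instead give a duality certificate: you encode perfect matchings as transportation plans between the support out-/in-degree vectors, note the matching weight is $|P|\cdot C$ minus the plan cost, and observe that optimal dual potentials $(\alpha,\beta)$ for $f^*$ remain dual-feasible for the new transportation LP and are tight on the (common) support, so complementary slackness certifies that the support-indicator plan --- i.e., $g$ --- is optimal. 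This is the same mathematical content the paper establishes by hand with its perturbation, but your version pays off in the second claim: the Walrasian prices come out explicitly as $p_{m_{xy}}=C-\alpha^{\max}-\beta_y$, and they manifestly depend only on the item's destination, a fact the paper must prove separately afterwards (Lemma~\ref{lemma:Wprice-location}). The trade-off is that your argument leans on LP strong duality and complementary slackness for the transportation problem --- including the fact that the dual constraints $\alpha_u+\beta_v\le \ell_{u,v}$ hold for \emph{all} pairs, not just support pairs, which you correctly exploit both for the best-response inequality and for price nonnegativity --- whereas the paper's exchange argument is elementary and self-contained apart from the price-existence citation.
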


\begin{proof}
The proof is via contradiction.
Assume there exists some matching $\tilde{g}:B^{f^*}\mapsto M^{f^*}$
with strictly greater social welfare than the matching $g$. For a
bidder $b\in B^{f^*}$, define $\tilde{h}(b)=z$, iff
$\tilde{g}(b)=m_{wz}$ for some $w\in V$, and
%Let $g$ be the matching $g(b_{xy})=m_{xy}$, $b_{xy}\in B^f$, $m_{xy}\in M^f$. Also,
$h(b)=z$, iff $g(b)=m_{wz}$ for some $w\in V$. Note that
$h\left(b_{wu}\right)=u$ so for a given $w$ and $z$ we have
$\left|\left\{u| h\left(b_{wu}\right)=z\right\}\right|=1$ if
$f(w,z)>0$ and zero otherwise.

Choose $\epsilon$ to be the minimum non-zero flow in $f^*$, {\sl i.e.},
$\epsilon = \min\{f^*(w,z)|f^*(w,z)>0\}$. We now
define a flow $f'$, which is a slight perturbation of flow $f^*$. In
flow $f'$, the flow from $w$ to $z$ is:
$$f'(w,z)= f^*(w,z) + \epsilon\left(\left|\left\{u| \tilde{h}\left(b_{wu}\right)=z\right\}\right| - \left|\left\{u| h\left(b_{wu}\right)=z\right\}\right|\right).$$

We first prove that $f'$ is a valid flow, and later we show
that it has a lower cost than $f^*$, in
contradiction to the minimality of $f^*$.

%(The proof of the following lemma is deferred to Appendix \ref{sec:appmissing}.)
\begin{lemma}\label{lem:validflow}
Flow $f'$ is a valid flow from supply vector $\cursupply$ to demand
vector $\newdemand$.
\end{lemma}

\begin{proof}
    Consider the requirements that $f'$ be a valid flow:
    \begin{itemize}
        \item For all $x,y\in V$, $f'\left(x,y\right)\geq 0$ : By definition of $f'$ if $f^*(w,z)=0$ then $f'(w,z)\geq 0$ and if $f^*(w,z)>0$ then $f'(w,z)\geq f^*(w,z)-\min\{f^*(w,z)|f(w,z)>0\}\geq 0$.
        \item For all $x\in V$, $\sum_y f'\left(x,y\right)=\cursupply_x$ : By definition of $f^*$ we have $\sum_y f^*(x,y)=\cursupply_x$. Thus,
        \begin{eqnarray*}\sum_y f'(x,y)&=& \sum_y \bigg(f^*(x,y) + \big(\left|\left\{u| \tilde{h}\left(b_{xu}\right)=y\right\}\right| - \left|\left\{u| h\left(b_{xu}\right)=y\right\}\right|\big)\cdot\epsilon\bigg) \\
            &=&\cursupply_x + \left(\sum_y \left|\left\{u| \tilde{h}\left(b_{xu}\right)=y\right\}\right| - \sum_y \left|\left\{u| h\left(b_{xu}\right)=y\right\}\right|\right)\cdot\epsilon \\
            &=& \cursupply_x + \left(\left|\left\{u|b_{xu}\in B^f\right\}\right|-\left|\left\{u|b_{xu}\in B^f\right\}\right|\right)\cdot\epsilon\\
            &=&\cursupply_x.\end{eqnarray*}
        \item For all $y\in V$, $\sum_x f'\left(x,y\right)=\newdemand_y$ : By definition of $f^*$ we have $\sum_x f^*(x,y)=\newdemand_y$. Thus,
        \begin{eqnarray*}\sum_x f'(x,y)&=& \sum_x \bigg(f^*(x,y) + \big(\left|\left\{u| \tilde{h}\left(b_{uy}\right)=x\right\}\right| - \left|\left\{u| h\left(b_{uy}\right)=x\right\}\right|\big)\cdot\epsilon\bigg) \\
            &=&\newdemand_y + \left(\sum_x \left|\left\{u| \tilde{h}\left(b_{uy}\right)=x\right\}\right| - \sum_x \left|\left\{u| h\left(b_{uy}\right)=x\right\}\right|\right)\cdot\epsilon \\
            &=& \newdemand_y + \left(\left|\left\{u|b_{uy}\in B^f\right\}\right|-\left|\left\{u|b_{uy}\in B^f\right\}\right|\right)\cdot\epsilon\\
            &=&\newdemand_y.\end{eqnarray*}
    \end{itemize}
\end{proof}

From the fact that $\tilde{g}$ has a higher social welfare we get,
\[
\sum_{w,z:b_{wz}\in B^{f^*}} \auctionvalue_{b_{wz}}(\tilde{g}(b_{wz})) >
\sum_{w,z:b_{wz}\in B^{f^*}} \auctionvalue_{b_{wz}}({g}(b_{wz}))\;.
\]
Using the definition of the valuations we have,
\[
\sum_{w,z:b_{wz}\in B^{f^*}} C - \ell_{w,\tilde{h}(b_{wz})} >
\sum_{w,z:b_{wz}\in B^{f^*}} C - \ell_{w,{h}(b_{wz})}\;.
\]
This implies that
\[
\sum_{w,z:b_{wz}\in B^{f^*}} \ell_{w,{h}(b_{wz})} >
\sum_{w,z:b_{wz}\in B^{f^*}} \ell_{w,\tilde{h}(b_{wz})}\;.
%\label{eqn:diffflow}
\]
Using this last inequality, it follows
%        It follows from Equation \eqref{eqn:diffflow}
that the  cost of $f^*$ (Definition \ref{def:earthmover}) satisfies
\begin{align*}
\ermv(f^*) &= \sum_{x,y} f^*(x,y)\cdot\ell_{x,y} \\
&> \sum_{x,y} f^*(x,y)\cdot\ell_{x,y} + \left(\sum_{w,z:b_{wz}\in
B^f} \ell_{w,\tilde{h}(b_{wz})} - \sum_{w,z:b_{wz}\in B^f}
\ell_{w,{h}(b_{wz})}\right)\cdot\epsilon = \ermv(f'),
\end{align*}
which contradicts the fact that flow $f^*$ is a minimum cost
flow.

The fact that there exist Walrasian prices for $g$ that are in
competitive market equilibrium follows from \cite{GS99}. This concludes the proof of Lemma \ref{lemma:properallocation}.
\end{proof}

Let the Walrasian price of $m_{xy}$ be $p_{xy}$ as guaranteed by the
lemma above. We first show that any two prices which
correspond to the same vertex must have the same price.

\begin{lemma}
\label{lemma:Wprice-location}
 %       For the market setting above any Walrasian pricing,
For any two items $m_{xy}$ and $m_{x'y}$ we have $p_{xy}=p_{x'y}$.
\end{lemma}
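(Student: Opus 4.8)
The plan is to exploit a single structural feature of the market constructed from $f^*$: the valuation
$\auctionvalue_{b_{wz}}(m_{xy}) = C - \ell_{w,y}$ depends only on the first coordinate $w$ of the bidder and the second coordinate $y$ of the item, and \emph{not} on the first coordinate $x$ of the item. Hence, for any fixed bidder $b_{wz}$, the two items $m_{xy}$ and $m_{x'y}$ that share the same second coordinate $y$ are valued identically: $\auctionvalue_{b_{wz}}(m_{xy}) = C - \ell_{w,y} = \auctionvalue_{b_{wz}}(m_{x'y})$. The prices $p_{xy}$ are Walrasian prices witnessing that the matching $g$ of Lemma~\ref{lemma:properallocation} is a competitive market equilibrium, so each bidder is assigned an item maximizing its quasi-linear utility at these prices. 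I intend to apply this utility-maximization condition to the two bidders $b_{xy}$ and $b_{x'y}$, which exist precisely because $m_{xy},m_{x'y}\in M^{f^*}$ means $f^*(x,y)>0$ and $f^*(x',y)>0$, and which under $g$ are assigned to $m_{xy}$ and $m_{x'y}$ respectively.

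Concretely, first I would invoke the equilibrium optimality of the bundle assigned to $b_{xy}$. Since $g(b_{xy})=m_{xy}$, the item $m_{xy}$ must be at least as good for $b_{xy}$ as the item $m_{x'y}$, i.e. $\auctionvalue_{b_{xy}}(m_{xy}) - p_{xy} \ge \auctionvalue_{b_{xy}}(m_{x'y}) - p_{x'y}$. Because $\auctionvalue_{b_{xy}}(m_{xy}) = C-\ell_{x,y} = \auctionvalue_{b_{xy}}(m_{x'y})$, the valuation terms cancel and this reduces to $-p_{xy} \ge -p_{x'y}$, that is, $p_{xy} \le p_{x'y}$. Then I would run the symmetric argument at bidder $b_{x'y}$, whose assigned item is $m_{x'y}$: optimality gives $\auctionvalue_{b_{x'y}}(m_{x'y}) - p_{x'y} \ge \auctionvalue_{b_{x'y}}(m_{xy}) - p_{xy}$, and since $\auctionvalue_{b_{x'y}}(m_{x'y}) = C-\ell_{x',y} = \auctionvalue_{b_{x'y}}(m_{xy})$, this collapses to $p_{x'y} \le p_{xy}$. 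Combining the two inequalities yields $p_{xy}=p_{x'y}$, as claimed.

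There is no serious obstacle here; the lemma follows almost immediately once the coordinate-independence of the valuation is noticed and the equilibrium condition is written out. The only point requiring a little care is to make explicit which bidders realize the comparison, namely that the bidders $b_{xy}$ and $b_{x'y}$ are guaranteed to be present in $B^{f^*}$ and that $g$ assigns each of them to its ``diagonal'' item, so that the envy-free inequalities of the competitive equilibrium can be applied in both directions. I would also state up front that I am using the standard fact that in a Walrasian equilibrium with unit-demand, quasi-linear bidders, each bidder's assigned item maximizes its utility over all items at the posted prices, which is exactly the content of the competitive market equilibrium supplied by Lemma~\ref{lemma:properallocation}.
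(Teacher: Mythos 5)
Your proof is correct and rests on exactly the same idea as the paper's: the valuation $C-\ell_{w,y}$ is independent of the item's first coordinate, so the Walrasian (competitive equilibrium) optimality of each bidder's assigned item forces equal prices for $m_{xy}$ and $m_{x'y}$. The paper phrases this as a one-sided contradiction (the bidder holding the pricier item would strictly prefer the cheaper one), while you write out both envy-free inequalities directly; this is a purely stylistic difference.
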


\begin{proof}
%        If not, this contradicts the envy free properties of Walrasian prices, every bidder must get an item in the demand set for that bidder.
For contradiction assume that $p_{xy}>p_{x'y}$. Let $b_{wz}$ be the
bidder assigned $m_{xy}$. Thus, for item $m_{xy}$ bidder $b_{wz}$
has utility
$\auctionutility_{b_{wz}}(m_{xy})=C-\ell_{w,y}-p_{xy}<C-\ell_{w,y}-p_{x'y}=\auctionutility_{b_{wz}}(m_{x'y})$
which implies that $m_{xy}$ is not in the demand set for bidder
$b_{wz}$. A contradiction to the fact that $p$ are Walrasian prices.
\end{proof}

For any $y\in V$ such that there exist items of the form $m_{xy}$
for some $x\in V$, let $p_y$ denote the Walrasian price for such
items (By Lemma~\ref{lemma:Wprice-location} all those Walrasian prices are
identical). If no items of the form $m_{xy}$ exist, this implies
that demand at vertex $y$, $\newdemand_y=0$, and we can set $p_y=0$.
Define surge prices, $\surgevec_y = C-p_y$, for all $y\in V$.

\begin{lemma}
\label{lemma:exists_eq}
Given current supply $\cursupply$ and demand $\newdemand$, surge prices $\surgevec_y=C-p_y$, new support $s'=d$, and $x,y,w\in V$
 such that $f^*(x,y)>0$ then
 $$\flowutility\left(x\rightarrow y\right|s',r,d)\geq \flowutility\left(x\rightarrow w\right|s',r,d).$$
\end{lemma}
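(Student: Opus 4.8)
The plan is to reduce the claimed inequality to the Walrasian demand-set condition for the single bidder $b_{xy}$, after first collapsing the service-probability factor using $s'=d$. First I would note that since $s'=d$, for every vertex $v$ with $d_v>0$ we have $\ds(s'_v,d_v)/s'_v=\min(d_v,d_v)/d_v=1$, so by Definition~\ref{def:utility}, $\mu(u\mapsto v\mid s',r,d)=r_v-\ell_{u,v}$; whereas for every $v$ with $d_v=0$ the factor is $0$ and $\mu(u\mapsto v\mid s',r,d)=-\ell_{u,v}\le 0$. Because $f^*(x,y)>0$ forces $d_y=s'_y>0$, the left-hand side is exactly $\mu(x\mapsto y\mid s',r,d)=r_y-\ell_{x,y}=(C-p_y)-\ell_{x,y}$.

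Next I would unwind the Walrasian condition. Under the matching $g$ of Lemma~\ref{lemma:properallocation}, bidder $b_{xy}$ is assigned item $m_{xy}$, and by Lemma~\ref{lemma:Wprice-location} every item $m_{x'w}$ carries price $p_w$. Since $p$ are Walrasian prices, $m_{xy}$ lies in the demand set of $b_{xy}$, so $\eta_{b_{xy}}(m_{xy})\ge\eta_{b_{xy}}(m_{x'w})$ for every item $m_{x'w}\in M^{f^*}$. Recalling that $\zeta_{b_{xy}}(m_{x'w})=C-\ell_{x,w}$ depends on the bidder's first index $x$ and the item's second index $w$, this reads $(C-\ell_{x,y})-p_y\ge(C-\ell_{x,w})-p_w$, i.e. $(C-p_y)-\ell_{x,y}\ge(C-p_w)-\ell_{x,w}$, which is precisely $\mu(x\mapsto y\mid s',r,d)\ge\mu(x\mapsto w\mid s',r,d)$. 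This settles every $w$ for which some item $m_{x'w}$ exists, equivalently (since $s'=d$ and $f^*$ routes $d_w$ units into $w$) every $w$ with $d_w>0$.

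Finally I would dispose of the remaining vertices $w$ with $d_w=0$, for which no item $m_{x'w}$ exists and the Walrasian inequality says nothing. For such $w$ the service probability is $0$, so $\mu(x\mapsto w\mid s',r,d)=-\ell_{x,w}\le 0$, and it suffices to check $\mu(x\mapsto y\mid s',r,d)\ge 0$. But this quantity equals the Walrasian utility $\eta_{b_{xy}}(m_{xy})$ of bidder $b_{xy}$, which is nonnegative: the empty bundle is always available and yields utility $0$, and a Walrasian allocation assigns each bidder a utility-maximizing bundle. Hence $\mu(x\mapsto y\mid s',r,d)\ge 0\ge-\ell_{x,w}=\mu(x\mapsto w\mid s',r,d)$, completing this case.

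The only delicate points are bookkeeping. The first is the index alignment $\zeta_{b_{xy}}(m_{x'w})=C-\ell_{x,w}$, which is exactly what lets the vertex-indexed surge price $r_w=C-p_w$ reproduce the item-indexed Walrasian price inside the utility formula; getting this term-by-term correspondence right is the crux of the argument. The second is the $d_w=0$ case, where the $0/0$ in the service-probability factor must be read as $0$ (a single deviating taxicab to a vertex with no demand serves no one), which I handle through nonnegativity of the Walrasian utility rather than through the demand-set inequality.
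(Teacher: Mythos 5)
Your proof is correct and follows essentially the same route as the paper's: collapse the service-probability factor using $s'=d$, identify $r_y-\ell_{x,y}$ with the bidder utility $\eta_{b_{xy}}(m_{xy})$, invoke the Walrasian demand-set inequality $\eta_{b_{xy}}(m_{xy})\geq \eta_{b_{xy}}(m_{x'w})$, and treat zero-demand vertices separately. Your handling of the $d_w=0$ case (reading the service probability as $0$ and then using nonnegativity of the Walrasian utility to get the left-hand side above $0$) is in fact slightly tidier than the paper's, which asserts the surge price at such $w$ is zero---not quite consistent with its own convention $r_w=C-p_w$ with $p_w=0$---but both arguments reach the same conclusion that deviating to a zero-demand vertex yields nonpositive utility.
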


\begin{proof}
    Let $x,y$ be such that $f^*(x,y)>0$.% with surge prices $\surgevec_y$.
    %generated via the market equilibrium between $B^f$ and $M^f$, then:
    Then,
    \begin{eqnarray}
    \flowutility\left(x\rightarrow y\right|s',r,d) &=& \surgevec_y\cdot\min{\left(1,\frac{\newdemand_y}{\newsupply_y}\right)}-\ell_{x,y} \label{eq:firstdef}\\ %By definition
    &=& \surgevec_y-\ell_{x,y} \label{eq:minone}\\ %This is because the equilibrium sets s^t=d^t and thus the minimum is alway 1
    &=& C-p_y-\ell_{x,y} \label{eq:seconddef}\\ %By the definition of the surge prices
    &=& \auctionutility_{b_{xy}}\left(m_{xy}\right) \label{eq:thirddef}\\ %This is the utility in the AUCTION
    &\geq& \auctionutility_{b_{xy}}\left(m_{zw}\right)\qquad \forall m_{zw}\in M^{f^*} \Leftrightarrow \forall m_{zw} : s'_w>0 \label{eq:marketeq}\\ %This is a direct derivation from the walrasian pricing
    &=& C-p_w-\ell_{x,w}  \label{eq:thirddefmirror}\\
    &=& \surgevec_w-\ell_{x,w} \label{eq:seconddefmirror}\\
    &\geq& \surgevec_w\cdot\min{\left(1,\frac{\newdemand_w}{\newsupply_w}\right)}-\ell_{x,w} \label{eq:secondminonw}\\
    &=& \flowutility\left(x\rightarrow w\right|s',r,d) \label{eq:firstdefmirror}%This is again the utility as defined by the flow function
    \end{eqnarray}

    Equations \eqref{eq:firstdef},\eqref{eq:firstdefmirror} follow from
    the definition of the utility in the continuous passenger-taxicab setting, definition \ref{def:utility}.\\
    Equations \eqref{eq:minone},\eqref{eq:secondminonw} follows from
    considering the passenger-taxicab equilibrium where $\newsupply=\newdemand$ resulting in
    $\frac{\newdemand_y}{\newsupply_y}=1$ for all $y$.\\
    Equations \eqref{eq:seconddef},\eqref{eq:seconddefmirror} follow
    from the definition of the surge prices.\\
    Equations \eqref{eq:thirddef},\eqref{eq:thirddefmirror} follow from
    the definition of the utility in the market setting.\\
    Equation \eqref{eq:marketeq} follows from the market equilibrium.
    %This proves the equilibrium for the flow functio

    So, we have that $$\flowutility\left(x\rightarrow y\right|s',r,d) \geq \flowutility\left(x\rightarrow w\right|s',r,d) \qquad \forall w : s'_w>0.$$
    It remains to consider $\flowutility\left(x\rightarrow w\right|s',r,d)$ for $w$ such that $s'_w=0$. In this case the surge price at $w$ is zero,
    so the utility $\flowutility\left(x\rightarrow w\right|s',r,d)\leq 0$.
\end{proof}

The following lemma shows that the incentive requirements of Equation (\ref{eq:incentivesequilibrium}) hold,
not only for the flow $f^*$, but also for {\sl any} min cost flow from $s$ to $d$.
\begin{lemma}
\label{lemma:altflow}
Fix current supply $\cursupply$ and demand $\newdemand$, surge prices $\surgevec_y=C-p_y$, and new support $s'=d$. Let $f'$ be an arbitrary min
cost flow from $s$ to $s'=d$, then, for any $x,y,w\in V$
 such that $f'(x,y)>0$ we have that
 $$\flowutility\left(x\rightarrow y\right|s',r,d)\geq \flowutility\left(x\rightarrow w\right|s',r,d).$$
\end{lemma}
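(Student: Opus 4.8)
The plan is to reduce the best-response condition to a statement about minimizing $p_w+\ell_{x,w}$, and then to extract that statement for an \emph{arbitrary} min cost flow from the equality of costs $\ermv(f')=\ermv(f^*)$ via a tightness argument. Since $s'=d$, for every $y$ with $s'_y>0$ we have $\min(1,d_y/s'_y)=1$, so exactly as in \eqref{eq:minone}--\eqref{eq:seconddef} the utility collapses to $\flowutility(x\rightarrow y|s',r,d)=\surgevec_y-\ell_{x,y}=C-p_y-\ell_{x,y}$. Hence, among destinations carrying demand, the inequality $\flowutility(x\rightarrow y|s',r,d)\ge\flowutility(x\rightarrow w|s',r,d)$ is equivalent to $p_y+\ell_{x,y}\le p_w+\ell_{x,w}$. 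Writing $\beta_x:=\min_{w:\,s'_w>0}(p_w+\ell_{x,w})$, Lemma~\ref{lemma:exists_eq} says precisely that $f^*(x,y)>0$ implies $p_y+\ell_{x,y}=\beta_x$. The goal is therefore to prove the same tightness for an arbitrary min cost flow: $f'(x,y)>0\Rightarrow p_y+\ell_{x,y}=\beta_x$.

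The heart of the argument is an averaging identity valid for any flow $f$ from $s$ to $d$. Using the sink constraint $\sum_x f(x,y)=d_y$,
\[
\sum_{x,y} f(x,y)\left(p_y+\ell_{x,y}\right)=\sum_y p_y d_y+\ermv(f).
\]
Applied to $f^*$, the left-hand side also equals $\sum_x\beta_x s_x$, because $f^*(x,y)>0\Rightarrow p_y+\ell_{x,y}=\beta_x$ and $\sum_y f^*(x,y)=s_x$; thus $\sum_x\beta_x s_x=\sum_y p_y d_y+\ermv(f^*)$. Applying the identity to $f'$ and invoking $\ermv(f')=\ermv(f^*)$ (both are min cost flows) then yields
\[
\sum_{x,y} f'(x,y)\left(p_y+\ell_{x,y}\right)=\sum_y p_y d_y+\ermv(f^*)=\sum_x\beta_x s_x=\sum_{x,y} f'(x,y)\,\beta_x,
\]
the final equality using $\sum_y f'(x,y)=s_x$. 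Every destination with $f'(x,y)>0$ has $s'_y=d_y>0$, so each summand obeys $p_y+\ell_{x,y}\ge\beta_x$; since the two sides are equal and the weights $f'(x,y)\ge 0$, every positive-flow term must be tight, giving exactly $f'(x,y)>0\Rightarrow p_y+\ell_{x,y}=\beta_x$.

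It remains to compare against destinations $w$ with $s'_w=0$, which I would handle as in Lemma~\ref{lemma:exists_eq}: there the surge price is $0$, so $\flowutility(x\rightarrow w|s',r,d)=-\ell_{x,w}\le 0$, whereas the tightness just established gives $\flowutility(x\rightarrow y|s',r,d)=C-\beta_x$. This is nonnegative: choosing any $y_0$ with $f^*(x,y_0)>0$ (one exists because $s_x>0$ forces outflow of $f^*$ at $x$), the bidder $b_{xy_0}$ is matched to $m_{xy_0}$ with nonnegative Walrasian utility $C-\ell_{x,y_0}-p_{y_0}\ge 0$, whence $\beta_x=p_{y_0}+\ell_{x,y_0}\le C$. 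Combining the two cases gives $\flowutility(x\rightarrow y|s',r,d)\ge\flowutility(x\rightarrow w|s',r,d)$ for all $w$. I expect the averaging/tightness step to be the main obstacle; it is in essence complementary slackness for the transportation LP, with $(p_y)$ and $(\beta_x)$ forming an optimal dual pair, but routing it through the cost identity $\ermv(f')=\ermv(f^*)$ keeps the proof self-contained and avoids invoking LP duality explicitly.
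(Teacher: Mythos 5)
Your proof is correct and is essentially the paper's own argument: after the substitution $r_y = C - p_y$, your averaging identity $\sum_{x,y} f(x,y)\left(p_y+\ell_{x,y}\right) = \sum_y p_y d_y + \ermv(f)$ is exactly the paper's potential $\Gamma(f) = \sum_{v} s'_v r_v - \ermv(s,s')$, and both proofs combine the equality $\ermv(f')=\ermv(f^*)$ with the per-edge optimality of $f^*$ from Lemma~\ref{lemma:exists_eq} to force tightness on every positive-flow edge of $f'$ (the paper phrases this as a contradiction, you as direct complementary slackness). Your explicit check that $\beta_x \leq C$ via nonnegative Walrasian utility, needed to dominate destinations $w$ with $s'_w = 0$, is a detail the paper leaves implicit, but it is a refinement rather than a different route.
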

\begin{proof}
Define $$\Gamma(f) = \sum_{u\in V}\sum_{v\in V} f(u,v)\cdot(r_v - \ell_{u,v}) = \sum_{v\in V} s'_v\cdot r_v - \ermv(s,s').$$
As $f'$ and $f^*$ are both min cost flows from $s$ to $s'$ we have that $\Gamma(f^*)=\Gamma(f')$.

%Let $f'$ be some min cost flow from $s$ to $s'=d$. In
For contradiction assume there exist some $u,v$ such that $f'(u,v)>0$ and
$\mu(u\mapsto v|s',r,d)<\max_{w\in V}\mu(u\mapsto w|s',r,d)$.
\begin{eqnarray}
\Gamma(f^*) &=& \sum_{u\in V} \sum_{v\in V} f^*(u,v)\cdot(r_v-\ell_{u,v})  \label{eq:gamma1}\\
&=& \sum_{u\in V} \sum_{v\in V} f^*(u,v)\cdot\max_{v\in V}(r_v-\ell_{u,v}) \label{eq:gamma2}\\
&=&\sum_{u\in V} s_u\cdot\max_{v\in V}(r_v-\ell_{u,v}) \label{eq:gamma3}\\
 &=& \sum_{u\in V} \sum_{v\in V} f'(u,v)\cdot\max_{v\in V}(r_v-\ell_{u,v}) \label{eq:gamma4}\\
&>& \sum_{u\in V} \sum_{v\in V} f'(u,v)\cdot(r_v-\ell_{u,v}) = \Gamma(f')\;. \label{eq:gamma5}
\end{eqnarray}
Eq. (\ref{eq:gamma1}) follows from the definition of $\Gamma$. Eq. (\ref{eq:gamma2}) follows from
Lemma~\ref{lemma:exists_eq}. Eq. (\ref{eq:gamma3}) and (\ref{eq:gamma4}) follows from the definition of a flow,
since for any flow $f$ from $s$ we have that $s_u=\sum_{v\in V}f(u,v)$.
Eq. (\ref{eq:gamma5}) holds since we assumed, for contradiction, that there exist some $u,v$ such that $f'(u,v)>0$ and
$\mu(u\mapsto v|s',r,d)<\max_{w\in V}\mu(u\mapsto w|s',r,d)$.
Hence, we reached a contradiction to the assumption that $f'$ is a min cost flow.
\end{proof}

If follows from the Lemma above that computing surge prices $r$ via flow $f^*$ ensures that
taxicab routing using any other min cost flow $f'$ is also a best response under surge prices $r$.

Next we show that all relevant passenger-taxicab equilibria all have new supply $\newsupply=\newdemand$.

\begin{lemma}
\label{lemma:eq_unique} Given current supply $s$, demand $d$, and surge prices $\surgevec_y=C-p_y$,
all passenger-taxicab equilibria induce $\newsupply=d$.
\end{lemma}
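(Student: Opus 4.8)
The plan is to argue by contradiction: suppose the surge prices $r$ (with $r_y=C-p_y$) admit a passenger-taxicab equilibrium inducing some new supply $s'\neq d$, and let $f$ be an associated min cost flow from $s$ to $s'$. Since $\sum_v s'_v=\sum_v d_v=1$, if I can show $s'_v\le d_v$ for every $v$, then summing forces $s'_v=d_v$ everywhere, contradicting $s'\neq d$. Hence it suffices to rule out any strictly oversupplied vertex (one with $s'_v>d_v$). The device is to sandwich the total taxicab utility $U:=\sum_{u,v}f(u,v)\,\mu(u\mapsto v\,|\,s',r,d)$ between two expressions whose gap is controlled by the oversupply.

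First I record one identity and two bounds. Writing the serving probability as $\min(1,d_v/s'_v)$ and using $\sum_u f(u,v)=s'_v$ gives the identity $U=\sum_v r_v\min(s'_v,d_v)-\ermv(s,s')$. For the upper bound, set $\Gamma(f)=\sum_{u,v}f(u,v)(r_v-\ell_{u,v})=\sum_v s'_v r_v-\ermv(s,s')$ as in Lemma~\ref{lemma:altflow}; replacing $r_v-\ell_{u,v}$ by $\max_{v'}(r_{v'}-\ell_{u,v'})$ shows $\Gamma$ is maximized over all flows out of $s$ by $f^*$, so $\Gamma(f)\le\Gamma(f^*)=\sum_v d_v r_v-\ermv(s,d)$, which rearranges to $\ermv(s,s')-\ermv(s,d)\ge\sum_v r_v(s'_v-d_v)$. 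For the lower bound, the equilibrium condition gives $U=\sum_u s_u\max_w\mu(u\mapsto w\,|\,s',r,d)$; routing instead along $f^*$ (which carries $d_v$ units into each $v$ at total cost $\ermv(s,d)$) can do no better, so $U\ge\sum_{u,v}f^*(u,v)\,\mu(u\mapsto v\,|\,s',r,d)=\sum_v r_v d_v\min(1,d_v/s'_v)-\ermv(s,d)$.

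Combining the identity with the lower bound yields $\ermv(s,s')-\ermv(s,d)\le\sum_v r_v\big(\min(s'_v,d_v)-d_v\min(1,d_v/s'_v)\big)$, and chaining this with the upper bound gives $\sum_v r_v\,T_v\ge 0$, where $T_v=\min(s'_v,d_v)-d_v\min(1,d_v/s'_v)-(s'_v-d_v)$. A direct case check shows $T_v=0$ whenever $s'_v\le d_v$, whereas $T_v=-(s'_v-d_v)^2/s'_v<0$ whenever $s'_v>d_v$. Since every $r_v\ge 0$ and the surge prices are strictly positive, the inequality $\sum_v r_v T_v\ge 0$ forces $(s'_v-d_v)^2=0$ at every vertex, so there is no oversupplied vertex; that is, $s'_v\le d_v$ for all $v$, which as noted forces $s'=d$, the desired contradiction.

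The step I expect to be the main obstacle is establishing the lower bound on $U$ rigorously, since it couples the equilibrium property (each unit of flow attains $\max_w\mu(u\mapsto w)$) with the min-cost optimality of $f^*$ for target $d$, and it must correctly treat the boundary case $s'_v=0$ (a demanded vertex the equilibrium leaves empty): there a single deviating taxicab serves with probability $1$, so $\mu(u\mapsto v)=r_v-\ell_{u,v}$, and with this reading one checks $T_v=0$ again. A secondary point requiring care is the positivity of surge prices used in the final step. By construction $r_v=C-p_v>0$ at every demanded vertex (cf. Figure~\ref{fig:surgeprice}), and a non-demanded vertex has $d_v=0$, so if it were oversupplied then $T_v=-s'_v<0$ with $r_v>0$, which is incompatible with $\sum_v r_v T_v\ge 0$; the only residual degeneracy—oversupply concentrated at a zero-price vertex fed entirely by co-located ($\ell_{u,v}=0$) sources—is eliminated by identifying vertices at distance $0$.
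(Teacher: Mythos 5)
Your proof is correct, but it follows a genuinely different route from the paper's. The paper argues locally and combinatorially: assuming an equilibrium inducing $\bar s\neq d$, it forms the oversupplied set $H=\{y:\bar s_y>d_y\}$ and its feeder set $H'$, uses a flow-counting argument to produce a vertex $w\in H'$ that sends flow to some $x\notin H$ under the min cost flow from $s$ to $d$ while sending flow to some $y\in H$ under the flow to $\bar s$, and then exhibits a single profitable deviation (the taxicabs moving $w\mapsto y$ strictly prefer $x$), contradicting best response. You instead run a global accounting argument: you sandwich the aggregate utility $U$ of the hypothetical equilibrium between the exact identity $U=\sum_v r_v\min(s'_v,d_v)-\ermv(s,s')$ and a lower bound obtained by rerouting the supply along $f^*$, and separately invoke the $\Gamma$-maximality of $f^*$ among all flows out of $s$ (which is precisely the computation in Lemma~\ref{lemma:altflow}, resting on Lemma~\ref{lemma:exists_eq}) as the complementary upper bound; chaining gives $\sum_v r_v T_v\ge 0$ with $T_v=-(s'_v-d_v)^2/s'_v\le 0$ at oversupplied vertices and $T_v=0$ elsewhere, wiping out all oversupply at once rather than at a single vertex. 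What your version buys is a quantitative statement---any equilibrium must satisfy $\sum_v r_v(s'_v-d_v)^2/s'_v\le 0$, so the argument degrades gracefully for approximate equilibria---and it avoids the $H,H'$ case analysis; what it costs is heavier reliance on the aggregate identities and on conventions at degenerate vertices ($s'_v=0$, $d_v=0$), which you do handle explicitly. Finally, the strict positivity of surge prices at oversupplied vertices, which you flag as the caveat of your last step, is equally (and more silently) assumed by the paper: its strict inequality $\mu(w\mapsto y|s',r,d)<r_y-\ell_{w,y}$ for $y\in H$ fails if $r_y=0$, so this is a shared edge case rather than a gap specific to your argument.
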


%Due to  lack of space, the proof of this lemma is deferred to Appendix \ref{sec:appmissing}.

\begin{proof}
    Let $f$ be a min cost flow from $s$ to $d$. For
    contradiction, assume that that there exists some $\bar{s}\neq d$ such that $s,s'=\bar{s},d,r$ are in a passenger-taxicab equilibrium. Let $f'$
    be some min cost flow from $s$ to $\bar{s}$.

    Consider $H=\left\{y|
    \sum_x f'(x,y)>\newdemand_y\right\}$ ({\sl i.e.}, the set of all vertices for
    which the  flow $f'$ results in strictly more supply
    than demand). Since $s'\neq d$ and they both sum to 1, we have that $H\neq \emptyset$.  Let $H'=\left\{x| \exists y\in H \, \mathrm{s.t.} \,
    f'(x,y)>0 \right\}$ ({\sl i.e.}, the set of all vertices from which
    supply flows to $H$). As $H\neq \emptyset$ it follows that $H'\neq \emptyset$.

    We claim that there exists some $w\in H'$, $y\not\in H$, such that
    $f(w,y)>0$. For contradiction assume that all the flow in $f$ from
    vertices in $H'$ is to vertices in $H$. By definition of flows: $\sum_{y\in V}
    f\left(x,y\right) = \cursupply_x= \sum_{y\in V} f'\left(x,y\right)$. We now
    have,
    \begin{eqnarray*}
         \sum_{x\in H'}\sum_{y\in V} f(x,y) &=& \sum_{x\in H'}\sum_{y\in H} f(x,y) \\
         &=& \sum_{x\in H'}\sum_{y\in H} f'(x,y) \\ &=&\sum_{y\in H} \sum_{x\in H'} f'(x,y) \\ %This is directly from the above statement about flows in general
        &>& \sum_{y\in H} \newdemand_y \\ %This is due to the fact that we sum up all of the inputs to every x in H
        &=& \sum_{x\in H'}\sum_{y\in H} f(x,y), %This is from the contradiction assumption
    \end{eqnarray*}
    which is a contradiction.

    This implies that there exist  $w\in H'$, $x\not\in
    H$ such that $f\left(w,x\right)>0$. Since $w\in H'$ there also exists some $y\in H$ such that
    $f'(w,y)>0$.

    We have shown that $s,s'=d,d,r$ is in passenger-taxicab equilibrium,
    this implies that $$\flowutility(w\mapsto x|s',r,d)=\surgevec_x - \ell_{w,x}
    \geq \surgevec_y - \ell_{w,y}=\flowutility(w\mapsto y|s',r,d).$$ Since $y\in H$ we have
    we have that $\sum_u f'\left(u,y\right)>\newdemand_y$ resulting in the utility
    $$
    \flowutility\left(w\mapsto y\right|s',r,d)= \left(\surgevec_y -
    \ell_{w,y}\right)\cdot\min\left(1,\frac{\newdemand_y}{\newsupply_y}\right)<\surgevec_y-\ell_{w,y}
    \leq \surgevec_x-\ell_{w,x} = \flowutility\left(w\mapsto x|s',r,d\right),
    $$
    where $x\not\in H$ implies the last equality. This is in contradiction
    to the $s,s'=\bar{s},d,r$ being a passenger-taxicab equilibrium.
\end{proof}

Theorem \ref{thm:surge} follows from Lemma~\ref{lemma:exists_eq}, Lemma~\ref{lemma:altflow}, and
Lemma~\ref{lemma:eq_unique}.

\begin{theorem}\label{thm:surge}
Given distances $\ell_{i,j}$ and an arbitrary supply vector, $\cursupply=\langle \cursupply_1, \ldots, \cursupply_k \rangle$.
Let the demand vector be $\newdemand=\langle \newdemand_1, \ldots,
\newdemand_k\rangle$. Then, there exists a surge price vector $\surgevec=\langle
\surgevec_1,\ldots,\surgevec_k\rangle$ that results in a passenger-texicab
equilibrium which induces a supply $\newsupply=\newdemand$.
Moreover, any passenger-taxicab equilibrium of $\surgevec$ induces supply
$\newsupply=\newdemand$, and the surge prices $\surgevec$ can be
computed in polynomial time.
\end{theorem}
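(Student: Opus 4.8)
The plan is to assemble the theorem from the four lemmas already established, treating it essentially as a corollary of that machinery. To construct the surge price vector I would first fix an arbitrary min cost flow $f^*$ from $s$ to $d$ (computable in polynomial time by standard min cost flow algorithms) and build the associated unit-demand market with item set $M^{f^*}$ and bidder set $B^{f^*}$. By Lemma~\ref{lemma:properallocation}, the diagonal matching $g(b_{wz})=m_{wz}$ maximizes social welfare and admits Walrasian prices $p$ under which $g$ is a competitive market equilibrium. By Lemma~\ref{lemma:Wprice-location} these prices depend only on the destination coordinate, so writing $p_y$ for the common price of all items $m_{xy}$ (and $p_y=0$ when no such item exists, i.e.\ when $\newdemand_y=0$), I would define the surge prices $\surgevec_y=C-p_y$, with $C=\max_{i,j}\ell_{i,j}+1$.

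For the existence assertion I would take $\newsupply=\newdemand$ and verify the equilibrium condition of Definition~\ref{def:flowequilibrium}. The subtle point is that this definition quantifies over \emph{every} min cost flow from $s$ to $d$, not merely the flow $f^*$ used to set the prices. Lemma~\ref{lemma:exists_eq} establishes the best-response inequality $\flowutility(x\mapsto y\mid s',r,d)\geq\flowutility(x\mapsto w\mid s',r,d)$ for edges carrying flow in $f^*$, and Lemma~\ref{lemma:altflow} upgrades this to hold along any min cost flow $f'$ from $s$ to $\newdemand$. Together these yield that along every min cost flow each edge with positive flow is a best response, which is precisely Equation~(\ref{eq:incentivesequilibrium}). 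Since passengers are non-strategic in the continuous setting, Definition~\ref{def:flowequilibrium} imposes no further requirement, so $\surgevec$ is a passenger-taxicab equilibrium inducing $\newsupply=\newdemand$.

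For the uniqueness assertion I would invoke Lemma~\ref{lemma:eq_unique} directly: under these same surge prices no supply vector $\bar{s}\neq\newdemand$ can arise in a passenger-taxicab equilibrium, so $\newsupply=\newdemand$ is the only induced supply. For the polynomial-time claim I would note that every ingredient is polynomial in $k$: the flow $f^*$ has support at most $k^2$, so the induced market has at most $k^2$ bidders and items, and Walrasian prices for a unit-demand market of this size are computable in polynomial time via the maximum-weight-matching / ascending-auction machinery cited in \cite{GS99}; the final prices $\surgevec_y=C-p_y$ follow immediately.

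Because the heavy lifting is done by the lemmas, I do not expect a genuine obstacle in the assembly itself; the one place demanding care is the quantifier in the equilibrium definition, where one must not conflate ``best response along $f^*$'' with ``equilibrium'' — it is exactly Lemma~\ref{lemma:altflow} that bridges this gap. Were I instead proving the statement from scratch, the two genuinely hard steps would be showing that the Walrasian item prices collapse to well-defined per-location prices (Lemma~\ref{lemma:Wprice-location}) and ruling out spurious equilibria with $\newsupply\neq\newdemand$ (Lemma~\ref{lemma:eq_unique}), the latter via the flow-rerouting argument that plays overserved vertices off against underserved ones.
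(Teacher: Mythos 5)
Your proposal is correct and follows exactly the paper's own route: the theorem is assembled from Lemma~\ref{lemma:properallocation}, Lemma~\ref{lemma:Wprice-location}, Lemma~\ref{lemma:exists_eq}, Lemma~\ref{lemma:altflow}, and Lemma~\ref{lemma:eq_unique}, with surge prices $\surgevec_y=C-p_y$ derived from the Walrasian prices of the market induced by a min cost flow $f^*$, and poly-time computability from standard min cost flow and unit-demand market-clearing algorithms. Your emphasis on Lemma~\ref{lemma:altflow} as the bridge between ``best response along $f^*$'' and the all-min-cost-flows quantifier in Definition~\ref{def:flowequilibrium} is precisely the point the paper itself highlights.
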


We can extend the result from equating supply and demand to
modifying the supply vector $\cursupply$ to any supply $\newsupply$, with
the restriction that if $\newsupply_i>0$ then $\newdemand_i>0$. The new surge
prices are computed as follows. First we compute, as before, the
surge prices $\surgevec$ from $\cursupply$ to $\newdemand$. Then, we set
$\bar{\surgevec}_i=\max\{1,\frac{\newsupply_i}{\newdemand_i}\}\surgevec_i$ and the
resulting surge prices are $\bar{\surgevec}$. In a similar way we can
establish,

\begin{theorem}\label{thm:anyst}
Let $\newdemand=\langle \newdemand_1,\ldots, \newdemand_k\rangle$
and let $\alpha=\langle \alpha_1,\ldots, \alpha_k\rangle$ be the
target supply vector, subject to the restriction that if
$\alpha_i>0$ then $\newdemand_i>0$. Then there exist surge prices
$\bar{\surgevec}$ for which some passenger-taxicab equilibrium induces supply
$\alpha$.
\end{theorem}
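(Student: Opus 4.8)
The plan is to reduce the statement to Theorem~\ref{thm:surge} by a price-rescaling trick. For the given current supply $s$ (as in the preceding discussion), I would first invoke Theorem~\ref{thm:surge} to obtain surge prices $r=\langle r_v\rangle$ whose passenger-taxicab equilibrium induces $s'=d$; these are the Walrasian-based prices $r_v=C-p_v$ of Lemma~\ref{lemma:exists_eq}. Then I would define the rescaled prices $\bar{r}_v=\max\{1,\alpha_v/d_v\}\,r_v$, which are well defined precisely because of the hypothesis: whenever $\alpha_v>0$ we also have $d_v>0$, so the ratio $\alpha_v/d_v$ never divides by zero.

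The key computation, and the natural analogue of Lemma~\ref{lemma:exists_eq}, is the following identity for the per-taxicab utility under target supply $\alpha$ and prices $\bar{r}$. For every $v$ with $\alpha_v>0$,
\[
\mu(u\mapsto v\mid \alpha,\bar{r},d)=\bar{r}_v\cdot\min\!\Big(1,\tfrac{d_v}{\alpha_v}\Big)-\ell_{u,v}=r_v-\ell_{u,v}.
\]
This is verified by splitting into the cases $\alpha_v\le d_v$ (so $\bar{r}_v=r_v$ and the service probability is $1$) and $\alpha_v>d_v$ (so $\bar{r}_v=\tfrac{\alpha_v}{d_v}r_v$ and the service probability is $\tfrac{d_v}{\alpha_v}$); in both cases the boost in price exactly cancels the drop in the probability of serving a passenger. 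The payoff is that, route by route, the utilities under $(\alpha,\bar{r})$ coincide with the utilities $r_v-\ell_{u,v}$ of the original equilibrium under $(d,r)$, so the best-response structure of Theorem~\ref{thm:surge} should transfer. Vertices $w$ with $\alpha_w=0$ carry no inflow and I would dispose of them exactly as in Lemma~\ref{lemma:exists_eq}, observing that the utility of moving there is non-positive and hence never beats a served vertex.

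With the identity in hand I would mirror the two-step argument of Lemmas~\ref{lemma:exists_eq} and~\ref{lemma:altflow}. Setting $\Gamma(f)=\sum_{u,v}f(u,v)(r_v-\ell_{u,v})=\sum_v\alpha_v r_v-\ermv(s,\alpha)$, this quantity is identical for all min cost flows from $s$ to $\alpha$, so the exchange argument of Lemma~\ref{lemma:altflow} reduces the equilibrium condition for \emph{every} min cost flow to the condition for a single seed flow. The seed flow is produced by the $\alpha$-analogue of Lemma~\ref{lemma:exists_eq}: a min cost flow $f$ from $s$ to $\alpha$ must route each source's mass only along edges maximizing $r_v-\ell_{u,v}$ over $v\in\mathrm{supp}(\alpha)$.

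I expect this last point to be the \emph{main obstacle}. In the original proof the fact that $f^*$ uses only utility-maximizing edges is immediate from Walrasian complementary slackness, and the maximizing set automatically lies in $\mathrm{supp}(d)$. Here, however, the admissible sinks are restricted to the possibly strictly smaller set $\mathrm{supp}(\alpha)\subseteq\mathrm{supp}(d)$, and a source's globally best sink within $\mathrm{supp}(d)$ may have $\alpha_v=0$, forcing a min cost flow to $\alpha$ onto an a~priori worse edge. To close the gap I would put $\Psi(u)=\max_{w:\alpha_w>0}(r_w-\ell_{u,w})$ and argue by LP duality for the transportation problem: the potentials $\phi_u=-\Psi(u)$, $\psi_v=r_v$ are dual feasible for shipping $s$ to $\alpha$ since $r_v-\ell_{u,v}\le\Psi(u)$ by definition of $\Psi$, so by complementary slackness every min cost flow uses only the $\Psi$-tight edges, provided the $\Psi$-tight subgraph actually admits a flow from $s$ to $\alpha$. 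Verifying this feasibility, equivalently a Hall/Gale--Hoffman condition on the tight bipartite graph, is the delicate heart of the argument and is where I would concentrate the effort.
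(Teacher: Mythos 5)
Your construction is the one the paper itself sketches (rescale the Theorem~\ref{thm:surge} prices by $\max\{1,\alpha_v/d_v\}$), and you have put your finger on exactly the right spot --- but the spot is not merely ``delicate'', it is fatal: the seed-flow/feasibility claim you would need is \emph{false}, so the Hall/Gale--Hoffman verification you propose cannot succeed. Concretely, take $V=\{a,b,x,y\}$ with $\ell_{a,x}=\ell_{a,y}=\ell_{b,y}=\ell_{x,y}=1$ and $\ell_{b,x}=\ell_{a,b}=2$ (a metric), current supply $s_a=s_b=\tfrac12$, demand $d_x=\tfrac14,\ d_y=\tfrac34$, and target $\alpha_x=\tfrac34,\ \alpha_y=\tfrac14$, which satisfies the hypothesis. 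The unique min cost flow from $s$ to $d$ splits $a$'s mass between $x$ and $y$, so \emph{every} choice of Walrasian prices yields $r_x-\ell_{a,x}=r_y-\ell_{a,y}$, i.e.\ $r_x=r_y$. Your rescaling is engineered so that $\bar r_v\min(1,d_v/\alpha_v)=r_v$, hence it leaves every route's effective payoff $r_v-\ell_{u,v}$ unchanged and in particular cannot change the best-response graph. But the unique min cost flow from $s$ to $\alpha$ must split $b$'s mass between $x$ and $y$, and the edge $(b,x)$ has payoff $r_x-2<r_y-1=\mu(b\mapsto y)$, violating Definition~\ref{def:flowequilibrium}; note this deviation is inside $\mathrm{supp}(\alpha)$, so no $0/0$ convention can save it. (A second, smaller slip: for $w$ with $\alpha_w=0<d_w$ your prices keep $\bar r_w=r_w$, and an infinitesimal deviator to such a $w$ is served with probability one, so its utility is $r_w-\ell_{u,w}$, not ``non-positive''; the disposal in Lemma~\ref{lemma:exists_eq} works only because there $s'=d$ forces $d_w=0$.)

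The structural reason is that rescaling the $s\to d$ prices preserves the $(s,d)$ best-response graph, while the target $\alpha$ in general forces min cost flows to use edges outside that graph; equivalently, the dual optimal faces of the transportation problems $s\to d$ and $s\to\alpha$ need not intersect. The repair is to decouple the prices from $d$ entirely: run the whole market construction of Section~\ref{sec:surge} on a min cost flow from $s$ to $\alpha$ (treat $\alpha$ as the demand vector for pricing purposes), obtaining prices $\rho$ for which the arguments of Lemmas~\ref{lemma:exists_eq} and~\ref{lemma:altflow} show that every min cost flow from $s$ to $\alpha$ uses only edges maximizing $\rho_v-\ell_{u,v}$, with nonnegative maxima; then post $\bar r_v=\max\{1,\alpha_v/d_v\}\,\rho_v$ on $\mathrm{supp}(\alpha)$ (your cancellation identity applies verbatim with $\rho$ in place of $r$) and $\bar r_v=0$ elsewhere, so that all off-support deviations have utility at most $0$. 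The hypothesis $\alpha_v>0\Rightarrow d_v>0$ is needed only to keep the rescaling finite. In fairness, the paper's own one-paragraph justification is the same rescaling of the $s\to d$ prices and shares this defect; but as a proof, your write-up has a genuine, counterexample-able gap at precisely the step you flagged as needing work.
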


\section{The Discrete Passenger-Taxicab Setting}\label{sec:respdemand}

In this section we consider a more realistic scenario where both demand and supply are sensitive to the surge pricing.
All else being equal, higher surge prices mean less demand and more supply.

We define social welfare to be the sum of valuations of passengers served minus the sum of the distances traversed by the taxicabs to serve these passengers (Definition \ref{def:sw}).
 Given current supply $s$ and a passenger profile $P$, we give an algorithm for computing surge prices $r$ that creates a passenger-taxicab equilibrium that maximizes social welfare.

The location of a passenger $b_i$ and taxi $\taxi_j$
is denoted by $\loc(\pass_i)$ and $\loc(\taxi_j)$ respectively ({\sl i.e.}, $\pass_i\in P_{\loc(\pass_i)}$ $\taxi_j\in s(\loc(\taxi_j))$).
For brevity, we use the notation $\overline{\ell}_{i,j}=\ell_{\loc(\pass_i),\loc(\taxi_j)}$.

% Also, for some location $\gamma$ denote $\dist(\gamma,\taxi_j)=\ell_{\gamma,j}$.

%
%OLD: Each passenger in node $w$ will have a valuation for the
%requested ride, w.l.o.g. assume
%$\pass_{w,1}\geq\pass_{w,2}\geq\pass_{w,3}\ldots$. In order to
%maximize social welfare, if there are $m$ taxis at node $w$, the
%served passengers are $\pass_{w,1},\ldots,\pass_{w,m}$. The total
%number of taxis is $n$ and denote by $\taxi_i$ the location of the
%$i$-th taxi.

%Assume then for each node $v$ there exists a demand curve function $f_v$

%\subinput{Results}
%In this section we show a polynomial algorithm based on the minimum cost flow problem that calculates the
%maximal social welfare regardless of prices. Next, we show that in some cases this maximal utility cannot
%be achieved for any prices. Lastly, %TODO

\subinput{Maximizing Social Welfare}

As in the continuous case, we reduce the problem of computing surge prices to computing market clearing prices
in a unit demand market.
Given a set of passengers $B$ and taxicabs $T$, we construct a unit demand market $M(B,T)$, where $B$ is the set of buyers and $T$ is the set of items. For the unit demand market, $M(B,T)$, we set the value of buyer $\pass_i\in B$ for item $\taxi_j\in T$  to be
$\auctionvalue_{\pass_i}(\taxi_j)=\val(\pass_i)-\overline{\ell}_{i,j}.$

%Note that for all allocations of items to buyers in $M(B,T)$,
%the social welfare in $M(B,T)$ (sum of buyer valuations) is equal to the social
%welfare defined in Equation \ref{eq:swdiscrete}.

Let the allocation where item $\taxi_j$ is given to $\buyer(\taxi_j)=\pass_i$ be a social welfare maximizing allocation in the unit demand market $M(B,T)$. Also, let $\buyer(\taxi_j)=\emptyset$ if item $\taxi_j$ is unallocated.

 This social welfare maximizing allocation in $M(B,T)$ translates into a flow $f^*$ for the discrete passenger-taxicab problem where $\taxi_j$ moves from $\loc(\taxi_j)$ to $\loc(\pass_i)$ if $\buyer(\taxi_j)=\pass_i$. Ergo,
 $$f^*(u,v) = \left\{\begin{array}{lr}
        |\{(i,j)|\pass_i\in P_v,\taxi_j\in s_u,\buyer(\taxi_j) =\pass_i\} |   &\mathrm{if\ }u\neq v, \\
        |\{(i,j)|\pass_i\in P_v,\taxi_j\in s_u,\buyer(\taxi_j) =\pass_i\} | +
        |\{j|\taxi_j\in s_u, \buyer(\taxi_j) =\emptyset\}|,  &\mathrm{if\ }u=v.
        \end{array}\right.$$

Let $s'$ be such that $s'_v=\sum_u f^*(u,v)$ for all $v\in V$. We say that the new supply $s'$ is {\sl induced} by $f^*$.
We now show that
\begin{lemma}
 The flow $f^*$ is a min cost flow from $s$ to $s'$.
\end{lemma}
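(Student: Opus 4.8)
The plan is to leverage the optimality of the matching $\buyer$ in the unit-demand market $M(B,T)$ directly, rather than by contradiction. The first thing I would record is the dictionary between the market and the flow: since an unmatched taxicab stays put (contributing $\ell_{v,v}=0$) while a matched taxicab $\taxi_j$ travels from $\loc(\taxi_j)$ to $\loc(\buyer(\taxi_j))$, the cost of $f^*$ is exactly the total travel distance of the matched pairs, $\ermv(f^*)=\sum_{j:\,\buyer(\taxi_j)\neq\emptyset}\overline{\ell}_{i,j}$ (with $\pass_i=\buyer(\taxi_j)$). Writing $S$ for the set of passengers served by $\buyer$ and $a_v=|S\cap P_v|$, a direct count from the definition of $f^*$ gives $s'_v=a_v+z_v$, where $z_v\ge 0$ is the number of unmatched taxicabs originating at $v$; in particular $s'_v\ge a_v$.

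The core step is to take an arbitrary flow $g$ from $s$ to $s'$ and manufacture from it a matching $\mu'$ in $M(B,T)$ that serves exactly the passengers in $S$. I would decompose $g$ into individual unit taxicab moves, so that each of the $s'_v$ units arriving at $v$ has a known origin $u$ and contributes $\ell_{u,v}$ to $\ermv(g)$. At every vertex $v$ I assign $a_v$ of the arriving units to serve the $a_v$ passengers of $S\cap P_v$ — feasible precisely because $s'_v\ge a_v$ — and leave the remaining $z_v$ units unmatched. The key point is that the distance $\ell_{u,v}$ charged to a served pair in the matching value $\val(\pass_i)-\overline{\ell}_{i,j}$ is the very same quantity that the corresponding unit contributes to $\ermv(g)$. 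Hence, summing over the served units, the total travel distance of $\mu'$ is at most $\ermv(g)$, the slack being the nonnegative cost of the unmatched units; and since $\mu'$ serves exactly $S$, its total market value equals that of $\buyer$.

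Finally I would invoke optimality: because $\buyer$ maximizes social welfare in $M(B,T)$, its total value $\sum(\val(\pass_i)-\overline{\ell}_{i,j})$ is at least that of $\mu'$. As $\mu'$ serves the same passenger set $S$, the served-passenger terms cancel and this reduces to $\ermv(f^*)\le(\text{dist of }\mu')$; combined with $(\text{dist of }\mu')\le\ermv(g)$ this gives $\ermv(f^*)\le\ermv(g)$. Since $g$ was an arbitrary flow from $s$ to $s'$, this is precisely the claim. The step I expect to be the main obstacle is this core construction: one must turn the anonymous flow $g$ into an honest matching while simultaneously guaranteeing (i) that the served set is exactly $S$, so that the values cancel, and (ii) that no matching distance is charged more than its flow edge. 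The feasibility inequality $s'_v\ge a_v$ is what licenses the assignment in (i), and the nonnegativity of the unmatched units' costs is what secures the bound in (ii).
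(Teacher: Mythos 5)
Your proof is correct and takes essentially the same route as the paper: the paper assumes a strictly cheaper flow $f'$, decomposes it into unit flows, and reinterprets it as an allocation in $M(B,T)$ of strictly higher market welfare, contradicting the optimality of $\buyer$. You run the identical decomposition argument directly (for an arbitrary competing flow) and spell out the details the paper leaves implicit --- that the constructed matching serves exactly the set $S$, so the valuation terms cancel and lower flow cost is equivalent to higher market welfare.
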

\begin{proof}
  Assume that $f'$ is a flow from $s$ to $s'$ of strictly lower cost. As $f'$ is an integral flow it can be decomposed into a union of unit flows. This can be interpreted as an alternative allocation in the $M(B,T)$ unit demand market, with strictly higher social welfare. This is in contradiction to our construction.
\end{proof}

Choose the minimal Walrasian prices to clear the unit demand market $M(B,T)$. Such prices are also VCG prices
\cite{Leonard83}. Let the Walrasian price for item $\taxi_j$ be $\price_{\taxi_j}$.
We now define surge prices $r_v$, $v\in V$, for the discrete passenger-taxicab problem.
Specifically, for all $v\in V$, set \begin{equation} r_v=\min_{\taxi_j\in T}(\ell_{\loc(\taxi_j),v}+\price_{\taxi_j}).\label{eq:discretesurgeprices}\end{equation}

\begin{lemma}\label{lem:assignmentismax}
	Assigning $\taxi_j$ to serve passenger $\buyer(\taxi_j)$ is a social welfare maximizing allocation.
\end{lemma}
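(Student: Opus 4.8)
The plan is to reduce the optimality of the assignment $\buyer$ to the optimality of the corresponding allocation in the unit demand market $M(B,T)$, by showing that the social welfare (Definition~\ref{def:sw}) of \emph{any} taxicab-to-passenger assignment is numerically equal to the market value of the allocation it corresponds to. Since $\buyer$ is, by construction, a social welfare maximizing allocation in $M(B,T)$, carrying this optimality across the equality yields the lemma.

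First I would make the correspondence between the two worlds precise. An assignment in the discrete passenger-taxicab problem is a partial matching that sends each taxicab $\taxi_j$ either to a distinct passenger it serves or to ``idle''; symmetrically, each passenger is served by at most one taxicab. This is exactly an allocation in $M(B,T)$: a matched pair $(\taxi_j,\pass_i)$ corresponds to allocating item $\taxi_j$ to buyer $\pass_i$, an idle taxicab to an unallocated item, and an unserved passenger to an unmatched buyer. Under this identification, $\buyer$ is precisely the welfare maximizing allocation of $M(B,T)$.

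Next I would verify that the two objectives agree term by term. For an assignment with induced supply $s'$, the value of served demand is $\dsv(s',d,r)=\sum_{(\taxi_j,\pass_i)}\val(\pass_i)$, the sum of the values of the matched passengers, while the travel cost is $\ermv(s,s')=\sum_{(\taxi_j,\pass_i)}\overline{\ell}_{i,j}$; here I invoke the preceding lemma, namely that the flow $f^*$ induced by the matching is itself a \emph{min cost} flow to $s'$, so that the earthmover cost $\ermv(s,s')$ really is the total distance traveled by the matched taxicabs (idle taxicabs contributing $0$). Subtracting, the social welfare equals
\[
\sum_{(\taxi_j,\pass_i)}\bigl(\val(\pass_i)-\overline{\ell}_{i,j}\bigr)=\sum_{\taxi_j:\,\buyer(\taxi_j)\neq\emptyset}\auctionvalue_{\buyer(\taxi_j)}(\taxi_j),
\]
which is exactly the market value of the corresponding allocation. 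Maximizing one maximizes the other, and since $\buyer$ attains the market maximum, it attains the maximum social welfare.

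The step I expect to be the main obstacle is justifying that restricting attention to matchings loses no generality, i.e., that no more elaborate routing of taxicabs can beat the best bipartite matching. Concretely, one must argue that serving a chosen set of passengers forces a one-to-one assignment of those passengers to the taxicabs that serve them, and that the cheapest way to realize the resulting target supply is exactly the matching cost — this is where the decomposition of an integral min cost flow into unit flows (as used in the preceding lemma) does the work, letting me rewrite $\ermv(s,s')$ as a sum over matched pairs rather than over flow edges. A secondary care point is the interplay with the surge prices hidden inside $\dsv$: one must confirm that, at the welfare maximizing allocation, the passengers actually served are precisely the matched ones, so that $\dsv$ collapses to $\sum\val(\pass_i)$ and the objective equality is exact rather than merely an inequality.
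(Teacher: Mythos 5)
Your approach is the same as the paper's: reduce the claim to optimality in the unit demand market $M(B,T)$ by exhibiting a welfare-preserving correspondence between taxicab-to-passenger matchings and market allocations, and your term-by-term accounting (including the use of the preceding lemma to identify $\ermv(s,s')$ with the sum of matched distances $\sum \overline{\ell}_{i,j}$) is exactly what the paper does, stated if anything slightly more carefully on the flow side.

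However, what you defer as a ``secondary care point'' is precisely the substantive content of the paper's proof, and your proposal never closes it. The issue is that $\dsv(s',d,r)$ is defined through $\ds_v$, which at a vertex with excess demand serves the \emph{highest-valued} interested passengers; so for $\dsv$ to collapse to the sum of values of the matched passengers, you must show that the market-optimal allocation never serves a lower-valued passenger at a location while leaving a higher-valued passenger at the same location unserved. The paper proves this by an exchange argument: if $\loc(b_1)=\loc(b_2)$, $\Phi_{\max}(b_1)\neq\emptyset$, $\Phi_{\max}(b_2)=\emptyset$, but $\val(b_1)<\val(b_2)$, then reassigning the taxicab $\Phi_{\max}(b_1)$ to $b_2$ changes the market welfare by
\[
\bigl(\val(b_2)-\ell_{\loc(b_2),\loc(\Phi_{\max}(b_1))}\bigr)-\bigl(\val(b_1)-\ell_{\loc(b_1),\loc(\Phi_{\max}(b_1))}\bigr)=\val(b_2)-\val(b_1)>0,
\]
since the two passengers share a location and hence the distance terms cancel, contradicting optimality of $\Pi_{\max}$. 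Without this step your claimed equality between the two objectives is, as you yourself note, only an inequality in one direction, and the lemma does not follow. Since you identified the right obstacle and the fix is this short swap argument, the gap is real but easily repaired.
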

\begin{proof}
    First, we show that for any allocation of taxicabs to passengers in the taxicab-passenger setting there exists an allocation of items to buyers in the unit demand market $M(B,T)$ such that the social welfare is the same. Then, we show that for the allocation of items to buyers that maximizes the social welfare in the unit demand market there exists an allocation of taxicabs to passengers with the same social welfare.

    Fix an allocation of passengers to taxicabs, {\sl i.e.}, $\Phi:B\rightarrow T \cup \{\emptyset\}$ is a matching. Given the matching $\Phi$ we define an allocation $\Pi:B\rightarrow T \cup \{\emptyset\}$ in the unit demand market where $\Phi(b) = \Pi(b)$ for all $b\in B$.

    The social welfare of $\Phi$ in the taxicab-passenger setting is $\sum_{b\in B} (\val(b) - \ell_{\loc(b),\loc(\Phi(b))})I_{\Phi(b)\neq \emptyset}$. Similarly, the social welfare of $\Pi$ in the unit demand market setting is $\sum_{b\in B}\auctionvalue_b(\Pi(b))=\sum_{b\in B} (\val(b) - \ell_{\loc(b),\loc(\Pi(b))})I_{\Pi(b)\neq \emptyset}$. Since $\Phi(b)=\Pi(b)$ it follows that any allocation in the taxicab-passenger setting has a corresponding allocation in the unit demand market with the same social welfare.

    We now show that an allocation in the unit demand market that maximizes social welfare has a corresponding allocation in the passenger-taxicab setting that also maximizes social welfare.
    Denote the maximal allocation in the unit demand market by $\Pi_{\max}:B\rightarrow T \cup \{\emptyset\}$.
    Define the corresponding matching of passengers to taxicabs by $\Phi_{\max}:B\rightarrow T \cup \{\emptyset\}$, where $\Phi_{\max}(b)=\Pi_{\max}(b)$ for all $b\in B$ ($\Phi_{\max}$ is a matching since $\Pi_{\max}$ is a valid allocation in a unit demand market).

    Moreover, we need to show that higher valued passengers have priority over lower valued passengers at the same location. {\sl I.e.}, we need to show that for any two passengers, $b_1,b_2\in B$, such that $\loc(b_1)=\loc(b_2)$ and $\Phi_{\max}(b_1)\neq\emptyset$, $\Phi_{\max}(b_2)=\emptyset$ we have that $\val(b_1)\geq\val(b_2)$.

    Contrariwise, assume for some $b_1,b_2\in B$ we have that $\loc(b_1)=\loc(b_2)$,  $\Phi_{\max}(b_1)\neq\emptyset$, $\Phi_{\max}(b_2)=\emptyset$, but $\val(b_1)<\val(b_2)$. Define in the unit demand market then $\Pi':B\rightarrow T\cup\{\emptyset\}$ such that $\Pi'(b_1)=\emptyset$, $\Pi'(b_2)=\Phi_{\max}(b_1)$ and $\Pi'(b)=\Pi_{\max}(b)$ for all $b\notin \{b_1,b_2\}$. 
    
    We now show that the social welfare under $\Pi$ is strictly greater than the social welfare under $\Pi'$: 
    \begin{eqnarray*} \sum_{b\in B}(\auctionvalue_b(\Pi'(b)))&=&\sum_{b\in B,b\neq b_1,b_2}(\auctionvalue_b(\Pi'(b))) + \auctionvalue_{b_2}(\Phi_{\max}(b_1))\\
    &=&\sum_{b\in B,b\neq b_1,b_2}(\auctionvalue_b(\Pi_{\max}(b))) + \val(b_2)-\dist(\loc(b_2),\loc(\Phi_{\max}(b_1)))\\
    &>&\sum_{b\in B,b\neq b_1,b_2}(\auctionvalue_b(\Pi_{\max}(b))) + \val(b_1)-\dist(\loc(b_1),\loc(\Phi_{\max}(b_1)))\\
    &=& \sum_{b\in B}(\auctionvalue_b(\Pi_{\max}(b))).\end{eqnarray*}
    Thus, $\Pi'$ has strictly higher social welfare than $\Pi_{\max}$ in unit demand setting in contradiction to $\Pi_{\max}$ maximizing social welfare. Thus, $\Phi_{\max}$ is a valid allocation in the taxicab-passenger setting which maximizes the social welfare.
%	The social welfare of an allocation in the unit demand market is $$\sum_{\taxi_j:\buyer(\taxi_j)\neq\emptyset} \auctionvalue_{\buyer(\taxi_j)}(\taxi_j) =   \sum_{\taxi_j:\buyer(\taxi_j)\neq\emptyset} \val(\buyer(\taxi_j))-\ell_{\loc(\buyer(\taxi_j)),\loc(\taxi_j)}.$$
%	Let $r'$ be the all zero price vector, and let $s'$ be as above.
%The social welfare in the unit demand market allocation is \begin{eqnarray*} SW(s,s',r',d)&=&\dsv(s',d,r')-em(s,s')\\ &\geq& \sum_{\taxi_j:\buyer(\taxi_j)\neq\emptyset} \val(\buyer(\taxi_j))-\ell_{\loc(\buyer(\taxi_j)),\loc(\taxi_j)}\\ &=&\sum_{\taxi_j:\buyer(\taxi_j)\neq\emptyset} \auctionvalue_{\buyer(\taxi_j)}(\taxi_j).\end{eqnarray*}
\end{proof}

%we claim that
%(a)
%$f^*$ is a min cost flow from $s$ to $s'$.
%(b) $f^*$ is a flow equilibrium for $s$, $s'$, surge prices $r$ as defined above,
%(c) for all $\pass_i$, if $\val(\pass_i)>r_{\loc(\pass_i)}$ then $\pass_i$ is served. (Define in model).
%(d) the computation of the surge prices is incentive compatible for the passengers

\begin{lemma}\label{lem:achievesmin}
For any passenger $\pass_i$ such that $\pass_i=\buyer(\taxi_j)$ we have that
$$\overline{\ell}_{i,j}+\price_{\taxi_j}=\min_{\taxi_z\in T}(\overline{\ell}_{i,z}+\price_{\taxi_z})=r_{\loc(\pass_i)}.$$
\end{lemma}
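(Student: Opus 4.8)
The statement splits into two equalities, and the plan is to handle them separately. I would derive the right-hand equality, $\min_{\taxi_z \in T}(\overline{\ell}_{i,z} + \price_{\taxi_z}) = r_{\loc(\pass_i)}$, directly from the definition of surge prices in Equation~(\ref{eq:discretesurgeprices}), and the left-hand equality, $\overline{\ell}_{i,j} + \price_{\taxi_j} = \min_{\taxi_z \in T}(\overline{\ell}_{i,z} + \price_{\taxi_z})$, from the fact that $\price$ are Walrasian clearing prices for $M(B,T)$.

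For the right-hand equality, recall that $\overline{\ell}_{i,z} = \ell_{\loc(\pass_i),\loc(\taxi_z)}$, and since $\ell$ is a metric it is symmetric, so $\overline{\ell}_{i,z} = \ell_{\loc(\taxi_z),\loc(\pass_i)}$. Substituting $v = \loc(\pass_i)$ into Equation~(\ref{eq:discretesurgeprices}) gives $r_{\loc(\pass_i)} = \min_{\taxi_z \in T}(\ell_{\loc(\taxi_z),\loc(\pass_i)} + \price_{\taxi_z}) = \min_{\taxi_z \in T}(\overline{\ell}_{i,z} + \price_{\taxi_z})$, which is precisely the claimed identity.

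For the left-hand equality, I would invoke the defining property of a Walrasian equilibrium: each buyer is allocated an item lying in her demand set at the prices $\price$. Concretely, since $\buyer(\taxi_j) = \pass_i$, the quasi-linear unit-demand utility of $\pass_i$ for item $\taxi_z$ at price $\price_{\taxi_z}$, namely $\auctionvalue_{\pass_i}(\taxi_z) - \price_{\taxi_z}$, is maximized at $z = j$, so $\auctionvalue_{\pass_i}(\taxi_j) - \price_{\taxi_j} \geq \auctionvalue_{\pass_i}(\taxi_z) - \price_{\taxi_z}$ for every $\taxi_z \in T$. Plugging in $\auctionvalue_{\pass_i}(\taxi_z) = \val(\pass_i) - \overline{\ell}_{i,z}$ and cancelling the common term $\val(\pass_i)$ turns this into $\overline{\ell}_{i,j} + \price_{\taxi_j} \leq \overline{\ell}_{i,z} + \price_{\taxi_z}$ for every $\taxi_z$. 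Since this lower bound is attained at $z = j$, taking the minimum over $z$ yields the equality.

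The argument is essentially a direct unfolding of definitions, so I do not anticipate a genuine obstacle. The only points requiring care are invoking the correct form of the Walrasian demand-set condition, namely that the allocated item maximizes the buyer's utility over all items (including confirming the sign flip when passing from utility maximization to minimization of $\overline{\ell}_{i,z} + \price_{\taxi_z}$), and using metric symmetry to reconcile the index ordering in $\overline{\ell}_{i,z}$ with that in the definition of $r_v$.
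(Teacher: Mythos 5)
Your proposal is correct and follows essentially the same route as the paper's own proof: the left-hand equality comes from the Walrasian demand-set property (the allocated item $\taxi_j$ maximizes $\pass_i$'s quasi-linear utility, and cancelling $\val(\pass_i)$ flips the maximization of utility into the minimization of $\overline{\ell}_{i,z}+\price_{\taxi_z}$), while the right-hand equality is just Equation~(\ref{eq:discretesurgeprices}) evaluated at $v=\loc(\pass_i)$. The only cosmetic difference is that you separate the two equalities and make the metric-symmetry step explicit, whereas the paper chains both through the expression $\auctionutility_{\pass_i}(\taxi_j)=\val(\pass_i)-r_{\loc(\pass_i)}$; the content is identical.
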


\begin{proof}
Since $\pass_i=\buyer(\taxi_j)$, and $\price$ are Walrasian prices,  we have that buyer $\pass_i$ maximizes its utility $\auctionutility_{\pass_i}$. Ergo,
\begin{eqnarray*} \auctionutility_{\pass_i}(\taxi_j)&=&\max_{t_x\in T}(\auctionutility_{\pass_i}(\taxi_x))\\ &=&
\max_{t_x\in T}(\val(\pass_i)-\overline{\ell}_{i,x}-\price_{\taxi_x})\\ &=&\val(\pass_i)-\min_{t_x\in T}(\overline{\ell}_{i,x}+\price_{\taxi_x})\\
&=&\val(\pass_i)-r_{\loc(\pass_i)}.\end{eqnarray*}

As
$$\auctionutility_{\pass_i}(\taxi_j)=\val(\pass_i)-\overline{\ell}_{i,j}-\price_{\taxi_j}=\val(\pass_i) -r_{\loc(\pass_i)} $$
it follows that
$$\overline{\ell}_{i,j}+\price_{\taxi_j}=\min_{\taxi_x\in T}(\overline{\ell}_{i,x}+\price_{\taxi_x}).$$
\end{proof}

\begin{lemma} \label{lemma:envyfree}
Any passenger $\pass_i$  that is not served is not interested in being served (or is indifferent), {\sl i.e.}, then
$\val(\pass_i)\leq r_{\loc(\pass_i)}$. Any passenger $\pass_i$ that is served has $\val(\pass_i)\geq r_{\loc(\pass_i)}$.
\end{lemma}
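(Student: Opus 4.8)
The plan is to read off both inequalities directly from the defining property of the Walrasian prices $\price$, using Lemma~\ref{lem:achievesmin} and Lemma~\ref{lem:assignmentismax}. The crucial observation is that in a Walrasian equilibrium every buyer is allocated a utility-maximizing bundle at the prices $\price$, and since the utilities are unit demand and quasi-linear, the empty bundle (buying nothing, with utility $0$) is always an available option. Consequently, a buyer who \emph{is} allocated an item realizes nonnegative utility, while a buyer who is \emph{not} allocated an item must find every item unprofitable at the current prices. I would first invoke Lemma~\ref{lem:assignmentismax} to identify being \emph{served} in the passenger-taxicab setting with being matched by $\buyer$, i.e.\ being allocated an item in $M(B,T)$; this is the only place where the translation between the two settings is needed.

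For a served passenger $\pass_i$, say $\pass_i=\buyer(\taxi_j)$, I would apply Lemma~\ref{lem:achievesmin} to write $r_{\loc(\pass_i)}=\overline{\ell}_{i,j}+\price_{\taxi_j}$. Because $\pass_i$ is allocated $\taxi_j$ rather than the empty bundle, individual rationality of the Walrasian equilibrium gives $\auctionutility_{\pass_i}(\taxi_j)=\val(\pass_i)-\overline{\ell}_{i,j}-\price_{\taxi_j}\geq 0$, which rearranges immediately to $\val(\pass_i)\geq \overline{\ell}_{i,j}+\price_{\taxi_j}=r_{\loc(\pass_i)}$.

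For an unserved passenger $\pass_i$, the empty bundle is a utility-maximizing choice, so for \emph{every} item $\taxi_j$ we have $\auctionvalue_{\pass_i}(\taxi_j)-\price_{\taxi_j}=\val(\pass_i)-\overline{\ell}_{i,j}-\price_{\taxi_j}\leq 0$, that is, $\val(\pass_i)\leq \overline{\ell}_{i,j}+\price_{\taxi_j}$ for all $j$. Minimizing the right-hand side over $\taxi_j$ and using the metric symmetry $\overline{\ell}_{i,j}=\ell_{\loc(\taxi_j),\loc(\pass_i)}$ together with the definition $r_{\loc(\pass_i)}=\min_{\taxi_j\in T}\bigl(\ell_{\loc(\taxi_j),\loc(\pass_i)}+\price_{\taxi_j}\bigr)$ from Equation~\eqref{eq:discretesurgeprices} yields $\val(\pass_i)\leq r_{\loc(\pass_i)}$.

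I do not expect a substantive obstacle here: once the identification of ``served'' with ``matched by $\buyer$'' is in place, each inequality is a one-line consequence of Walrasian optimality (nonnegativity of realized utility for allocated buyers, and nonpositivity of every item's net value for unallocated buyers). The only point requiring genuine care is making the empty-bundle/individual-rationality argument explicit, since it is precisely the availability of the outside option of utility $0$ that converts the equilibrium condition into the two-sided value threshold statement.
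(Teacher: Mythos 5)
Your proof is correct and follows essentially the same route as the paper's: both cases are read off from the Walrasian equilibrium conditions (nonpositive utility of every item for an unallocated buyer, nonnegative utility of the allocated item for an allocated buyer) combined with the surge-price definition in Equation~\eqref{eq:discretesurgeprices}. Your use of Lemma~\ref{lem:achievesmin} in the served case is only a cosmetic variation on the paper's step $\max_{\taxi_x\in T}\auctionutility_{\pass_i}(\taxi_x)\geq 0 \Rightarrow \val(\pass_i)\geq\min_{\taxi_x\in T}\left(\overline{\ell}_{i,x}+\price_{\taxi_x}\right)=r_{\loc(\pass_i)}$.
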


\begin{proof}
Let  $\pass_i$ be some buyer allocated no item in the social welfare maximizing allocation for $M(B,T)$, then it must be that  $\max_{\taxi_x\in T}\auctionutility_{\pass_i}(\taxi_x)\leq 0$.
It follows that
$$\max_{\taxi_x\in T}(\val(\pass_i)-\overline{\ell}_{i,x}-\price_{\taxi_x}) \leq  0,$$
and thus
$$\val(\pass_i)\leq\min_{\taxi_x\in T}(\overline{\ell}_{i,x}+\price_{\taxi_x})=r_{\loc(\pass_i)}.$$

Consider some buyer $\pass_i$ that was allocated an item, $t_j$, in the social welfare maximizing allocation for $M(B,T)$.
It follows that $\max_{\taxi_x\in T}\auctionutility_{\pass_i}(\taxi_x)\geq 0$.
Thus,
$$\max_{\taxi_x\in T}(\val(\pass_i)-\overline{\ell}_{i,x}-\price_{\taxi_x}) \geq  0,$$
and
$$\val(\pass_i)\geq\min_{\taxi_x\in T}(\overline{\ell}_{i,x}+\price_{\taxi_x})=r_{\loc(\pass_i)}.$$
\end{proof}

\begin{lemma}\label{lemma:discretetaxibr}
For supply $s$, demand $d$,  surge prices $r$, and new supply $s'$ as defined above. A taxicab $t_j$ that serves passenger $\buyer(\taxi_j)$ is doing a best response. \end{lemma}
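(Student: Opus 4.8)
Write $\taxi_j$ for the serving taxicab, $\pass_i=\buyer(\taxi_j)$ for the passenger it serves, $u=\loc(\taxi_j)$ and $v=\loc(\pass_i)$; recall that $\overline{\ell}_{i,j}=\ell_{u,v}$. The plan is to reduce the best-response claim to two inequalities: that any deviation of $\taxi_j$ to a vertex $w$ earns utility at most $\price_{\taxi_j}$, and that following the allocation (moving to $v$) earns utility exactly $\price_{\taxi_j}$. Since $\price_{\taxi_j}\geq 0$, these together give $\mu_{\taxi_j}(u\mapsto v\,|\,s',r,d)=\max_{w\in V}\mu_{\taxi_j}(u\mapsto w\,|\,s',r,d)$, which is precisely the definition of a best response.

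First I would bound an arbitrary deviation. For any $w$ with $s'_w>0$ the serving probability $\frac{\min(|d_w(r_w)|,|s'_w|)}{|s'_w|}$ lies in $[0,1]$ and $r_w\geq 0$, so $\mu_{\taxi_j}(u\mapsto w\,|\,s',r,d)\leq r_w-\ell_{u,w}$. Plugging in $r_w=\min_{\taxi_z\in T}(\ell_{\loc(\taxi_z),w}+\price_{\taxi_z})$ and taking $\taxi_z=\taxi_j$ gives $r_w\leq \ell_{u,w}+\price_{\taxi_j}$, hence $\mu_{\taxi_j}(u\mapsto w\,|\,s',r,d)\leq \price_{\taxi_j}$. (If $s'_w=0$ the deviating taxicab is the unique supplier at $w$, so its serving probability is still at most $1$ and the identical bound applies.)

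For the serving move, Lemma \ref{lem:achievesmin} gives $r_v=\overline{\ell}_{i,j}+\price_{\taxi_j}=\ell_{u,v}+\price_{\taxi_j}$, so $r_v-\ell_{u,v}=\price_{\taxi_j}$. It remains to show the serving-probability factor at $v$ does not shrink this, i.e. that $\frac{\min(|d_v(r_v)|,|s'_v|)}{|s'_v|}\cdot r_v=r_v$. If $r_v=0$ this is immediate. If $r_v>0$ I claim the ratio equals $1$, equivalently $|d_v(r_v)|\geq |s'_v|$. Now $s'_v$ counts the taxicabs that $f^*$ routes to $v$, namely the passengers served at $v$ together with the unallocated taxicabs sitting at $v$. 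Every served passenger at $v$ has value at least $r_v$ by Lemma \ref{lemma:envyfree}, hence lies in $d_v(r_v)$, so the served passengers produce no excess supply; the crux is to rule out an idle taxicab at $v$ when $r_v>0$. Under the minimal Walrasian prices an unsold item has price $0$, so an unallocated taxicab $\taxi_k$ at $v$ would give $r_v\leq \ell_{v,v}+\price_{\taxi_k}=0$, a contradiction. Thus when $r_v>0$ there are no idle taxicabs at $v$, $|s'_v|$ equals the number of served passengers, and this is at most $|d_v(r_v)|$, forcing the ratio to $1$. In every case $\mu_{\taxi_j}(u\mapsto v\,|\,s',r,d)=r_v-\ell_{u,v}=\price_{\taxi_j}$, which combined with the deviation bound proves the best-response property.

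I expect the main obstacle to be exactly this last step: a vertex may receive more taxicabs than it has interested passengers, driving the serving probability below $1$. The clean resolution is the price-zero property of unsold items under minimal Walrasian prices, which collapses the ratio-below-one case into $r_v=0$, where the serving factor merely multiplies a zero surge price and is harmless. An alternative, purely combinatorial route is an exchange argument showing that no efficient allocation imports a taxicab to serve a passenger at $v$ while a local taxicab sits idle; but that still requires the price-zero fact to conclude $r_v=0$, so I would rely on the Walrasian characterization directly.
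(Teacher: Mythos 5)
Your proof of the allocated case follows the same route as the paper's: Lemma~\ref{lem:achievesmin} gives the serving utility $\price_{\taxi_j}$, and instantiating the minimum in $r_w=\min_{\taxi_z\in T}(\ell_{\loc(\taxi_z),w}+\price_{\taxi_z})$ at $\taxi_z=\taxi_j$ bounds every deviation by $\price_{\taxi_j}$. Where you differ is in rigor: the paper's proof silently treats the utility of the serving move as $r_v-\ell_{u,v}$, i.e.\ it ignores the serving-probability factor $\min(|d_v(r_v)|,|s'_v|)/|s'_v|$ in the definition of $\mu_{t_j}$. Your argument that this factor cannot shrink the payoff --- served passengers at $v$ lie in $d_v(r_v)$ by Lemma~\ref{lemma:envyfree}, and an idle taxicab parked at $v$ would force $r_v\leq \ell_{v,v}+\price_{\taxi_k}=0$ because unsold items have price zero --- is a genuine strengthening that closes a step the paper's own write-up leaves implicit, and your handling of the degenerate cases ($r_v=0$, $s'_w=0$) is sound.

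One omission on your side: the lemma is also invoked for taxicabs with $\buyer(\taxi_j)=\emptyset$, and the paper devotes its first case to them, showing that since $\price_{\taxi_j}=0$ for an unsold item, $r_w-\ell_{w,\loc(\taxi_j)}\leq 0$ for every $w$, so staying idle is a best response. Your write-up assumes from the first line that $\taxi_j$ actually serves a passenger, so this case is never addressed. It is not a hard gap --- your own deviation bound with $\price_{\taxi_j}=0$ gives it in one line --- but it must be stated, since the equilibrium claim in Theorem~\ref{thm:finaldiscrete} needs idle taxicabs to be best-responding as well.
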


\begin{proof}

%We first consider the requirement regarding taxicabs,
%For $u,v$ such that $f^*(u,v)>0$ the utility $$\mu(u\mapsto v|r,s',d)\geq \mu(u\mapsto w|r,s',d)\qquad \forall w\in V.$$

Consider the following cases:
\begin{enumerate}\item
Item $\taxi_j$ is not allocated, {\sl i.e.}, $\buyer(\taxi_j)=\emptyset$. It follows that the Walrasian pricing for item $\taxi_j$ is zero: $\price_{\taxi_j}=0$. Now, for any $w\in V$ we have that
$$r_w=\min_{\taxi_x\in T}(\ell_{w,\loc(\taxi_x)}+\price_{\taxi_x})\leq\ell_{w,\loc(\taxi_j)}+\price_{\taxi_j}=\ell_{w,\loc(\taxi_j)},$$
hence, $r_w-\ell_{w,\loc(\taxi_j)}\leq 0$.
Ergo, not serving any passenger is a best response for $\taxi_j$.
\item Item $\taxi_j$ is allocated to some buyer $\pass_i$.  From Lemma \ref{lem:achievesmin} we know that $\overline{\ell}_{i,j}+\price_{\taxi_j}=\min_{\taxi_x\in T}(\overline{\ell}_{i,x}+\price_{\taxi_x})=r_{\loc(\pass_i)}$ and thus $\taxi_j$ gains a utility of $\price_{\taxi_j}$ from serving $\pass_i$.
If taxicab $\taxi_j$ could serve a passenger at location $w\in V$, it will gain a utility of
$$
r_w-\ell_{w,\loc(\taxi_j)}=
\min_{\taxi_x\in T}(\ell_{w,\loc(\taxi_x)}+\price_{\taxi_x})-\ell_{w,\loc(\taxi_j)}
\leq\ell_{w,\loc(\taxi_j)}+\price_{\taxi_j}-\ell_{w,\loc(\taxi_j)}=\price_{\taxi_j}.
$$
Implying that serving passenger $\pass_i$ is a best response for taxicab $\taxi_j$.
\end{enumerate}
\end{proof}

\begin{lemma}\label {lemma:discretetruth}
It is a dominant strategy for the passengers to reveal their true valuations given that surge prices are computed via the algorithm above. \end{lemma}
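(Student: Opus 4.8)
The plan is to reduce truthfulness to the dominant-strategy incentive compatibility (DSIC) of VCG. Recall that the surge prices are derived from the minimal Walrasian prices of the unit demand market $M(B,T)$, and that by \cite{Leonard83} these minimal Walrasian prices coincide with the VCG prices. Since the VCG mechanism in an assignment market is DSIC for the buyers, each passenger $\pass_i$ (acting as a buyer) maximizes its true quasi-linear market utility $\auctionutility_{\pass_i}$ by reporting its true valuation vector. The only strategic lever a passenger controls is its reported value $\val(\pass_i)$; since $\loc(\pass_i)$ and all distances are fixed and public, a report $\tilde v$ induces the market valuations $\auctionvalue_{\pass_i}(\taxi_j)=\tilde v-\overline{\ell}_{i,j}$, i.e. a uniform shift of the whole valuation vector. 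Thus a value report is exactly a (restricted) valuation-vector report in $M(B,T)$, and the truthful report induces the true valuation vector.

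First I would establish that the passenger's realized utility in the discrete passenger-taxicab setting equals its realized utility as a buyer in $M(B,T)$, and crucially that this identity holds for \emph{every} report, not only the truthful one. Suppose $\pass_i$ reports some value and the resulting market allocates it taxicab $\taxi_j$, with Walrasian prices $\price$ and induced surge prices $r$. The argument of Lemma~\ref{lem:achievesmin} uses only that the allocation is Walrasian-optimal, hence applies verbatim to whatever market is actually solved, giving $\overline{\ell}_{i,j}+\price_{\taxi_j}=\min_{\taxi_z\in T}(\overline{\ell}_{i,z}+\price_{\taxi_z})=r_{\loc(\pass_i)}$. Therefore the served passenger's true utility $\val(\pass_i)-r_{\loc(\pass_i)}$ equals $\val(\pass_i)-\overline{\ell}_{i,j}-\price_{\taxi_j}=\auctionvalue_{\pass_i}(\taxi_j)-\price_{\taxi_j}=\auctionutility_{\pass_i}(\taxi_j)$, the quasi-linear market utility evaluated at the true value. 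When the passenger is unallocated (unserved), both utilities are zero; by Lemma~\ref{lemma:envyfree} the served/unserved split is consistent with the sign of the market utility, so the two utility functions coincide pointwise over all reports.

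With the identity in hand the conclusion is immediate: the passenger's taxicab utility, viewed as a function of its report, equals its VCG market utility, and VCG is DSIC, so reporting the true value (which induces the true valuation vector) weakly dominates every other report. Restricting the report space from arbitrary valuation vectors to the one-parameter family induced by value reports cannot create a profitable deviation, since truthful reporting already beats all vector reports, a fortiori all value-induced ones.

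I expect the main obstacle to be the off-path part of the utility identity together with the treatment of the outside option. One must check that the minimum-achieving property of Lemma~\ref{lem:achievesmin} survives a misreport---this is clear once one observes that its proof invokes only Walrasian optimality of the (possibly manipulated) market, not truthfulness---and that an unserved passenger indeed obtains utility zero rather than being forced into a losing trade, matching the zero outside option of an unallocated VCG buyer. Making precise that the mechanism serves exactly the market-allocated passengers at the posted surge price, so that no report can force negative utility, is the step that anchors the whole reduction.
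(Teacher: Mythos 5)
Your proposal is correct and follows essentially the same route as the paper: both reduce truthfulness to the DSIC property of VCG via the fact (from \cite{Leonard83}) that minimal Walrasian prices coincide with VCG payments, combined with the identity between a passenger's taxicab utility and the corresponding bidder's market utility. Your write-up is more careful than the paper's — in particular, you make explicit that the utility identity must hold for every report (via re-applying Lemma~\ref{lem:achievesmin} to the manipulated market) and that restricting reports to the one-parameter family of value shifts cannot create a profitable deviation — but these are elaborations of the paper's argument, not a different one.
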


\begin{proof}
The utilities of the bidders for the minimal Walrasian prices in a unit demand market coincide with VCG payments \cite{Leonard83}. This implies that buyers truthfully reveal their valuations for the items. In our setting the utility for a passenger $\pass_i$  is exactly equal to the utility for the corresponding bidder $\pass_i$. Ergo, misreporting passenger valuation implies misreporting bidder valuations. As misreporting item valuations in the unit demand market setting cannot benefit buyers (and thus passengers) we conclude it is a dominant strategy for passengers to report true valuations.
\end{proof}

To summarize, our main result in this section, Theorem \ref{thm:finaldiscrete}, follows from Lemma \ref{lem:assignmentismax}, Lemma \ref{lemma:envyfree}, Lemma \ref{lemma:discretetaxibr}, and Lemma \ref{lemma:discretetruth}.

\begin{theorem} \label{thm:finaldiscrete}
For any Profile $P$ and supply $s$ there exist surge prices $r$, demand $d(r)$ and new supply $s'$ such that
\begin{itemize} \item Supply $s$, new supply $s'$, demand $d(r)$, and surge prices $r$ are in passenger-taxicab equilibrium.
\item $s'$ is social welfare maximizing with respect to supply $s$, profile $P$, and demand $d$.
\item The surge prices $r$ can be computed in polynomial time.
\item It is a dominant strategy for passengers to report their true valuations to the surge-price computation.
\end{itemize}
\end{theorem}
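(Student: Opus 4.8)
The plan is to read off the four bullets of Theorem~\ref{thm:finaldiscrete} from the construction together with the lemmas already in hand, taking as fixed: the social welfare maximizing allocation $\buyer$ of the unit demand market $M(B,T)$, its minimal Walrasian prices $\price_{\taxi_j}$ (which are also the VCG prices), the flow $f^*$ that this allocation induces, the new supply $s'$ with $s'_v=\sum_u f^*(u,v)$, the surge prices $r_v=\min_{\taxi_j\in T}(\ell_{\loc(\taxi_j),v}+\price_{\taxi_j})$, and the threshold demand $d=d(r)$.

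First I would dispatch the social welfare claim. Lemma~\ref{lem:assignmentismax} already shows that $\buyer$ maximizes social welfare among allocations of taxicabs to passengers, and the (unlabeled) lemma preceding it shows $f^*$ is a min cost flow from $s$ to $s'$. Consequently $SW(s,s',r,d)=\dsv(s',d,r)-\ermv(s,s')$ equals the social welfare of the allocation $\buyer$, and since every competing new supply is realized by some min cost flow corresponding to an allocation of no greater value, $s'$ is social welfare maximizing.

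The equilibrium claim is where the real work lies, so I would treat its two halves against Definition~\ref{def:discrete.equi}. The passenger half is immediate from Lemma~\ref{lemma:envyfree}: any served passenger has $\val(\pass_i)\ge r_{\loc(\pass_i)}$ and any unserved passenger has $\val(\pass_i)\le r_{\loc(\pass_i)}$, yielding the required served/not-served dichotomy at strict inequalities. For the taxicab half, Lemma~\ref{lemma:discretetaxibr} shows that serving $\buyer(\taxi_j)$ is a best response along $f^*$. The main obstacle is that Definition~\ref{def:discrete.equi} quantifies over \emph{every} min cost flow from $s$ to $s'$, while the taxicab utility $\mu_{\taxi_j}(u\mapsto w|s',r,d)$ carries the demand served factor $\min(|d_w(r_w)|,|s'_w|)/|s'_w|$ that Lemma~\ref{lemma:discretetaxibr} silently sets to $1$. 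I would first justify that factor: by Lemma~\ref{lemma:envyfree} every vertex with $s'_w>0$ has at least $|s'_w|$ passengers of value at least $r_w$, so $|d_w(r_w)|\ge|s'_w|$ and the factor is indeed $1$, reducing the utility to $r_w-\ell_{u,w}$. Then, mirroring the continuous argument of Lemma~\ref{lemma:altflow}, I would set $\Gamma(f)=\sum_{u,v}f(u,v)(r_v-\ell_{u,v})=\sum_v s'_v r_v-\ermv(s,s')$, note that $\Gamma$ takes the same value on all min cost flows from $s$ to $s'$, and conclude from the $f^*$ best-response property that no min cost flow can place positive flow on a non-best-response edge without strictly lowering $\Gamma$, a contradiction; hence the best-response property holds for every min cost flow.

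The last two bullets are short. Polynomial-time computability follows because the maximum weight matching $\buyer$ and the minimal Walrasian (hence VCG) prices in a market with $m+n$ buyers and items are computable in polynomial time, while each $r_v$ is a minimum over the $n$ taxicabs. The dominant-strategy conclusion is exactly Lemma~\ref{lemma:discretetruth}. Collecting the four verifications proves Theorem~\ref{thm:finaldiscrete}; I expect the demand served factor together with the all-min-cost-flows quantifier to be the only genuine subtlety, everything else being a direct appeal to the cited lemmas.
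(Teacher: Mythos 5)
Your overall route coincides with the paper's: Theorem~\ref{thm:finaldiscrete} is proved there exactly by assembling Lemma~\ref{lem:assignmentismax}, Lemma~\ref{lemma:envyfree}, Lemma~\ref{lemma:discretetaxibr}, and Lemma~\ref{lemma:discretetruth}, which is what you do. Your $\Gamma$-based extension of the best-response property from the particular flow $f^*$ to \emph{every} min cost flow from $s$ to $s'$ (mirroring Lemma~\ref{lemma:altflow}) is a genuine addition: the paper never explicitly discharges the ``for every min cost flow'' quantifier of Definition~\ref{def:discrete.equi} in the discrete setting, and your argument is sound since $\sum_{u,v}f(u,v)\ell_{u,v}$ is by definition the same for all min cost flows, so $\Gamma$ is constant over them.

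However, one step you rely on is false as stated: the claim that, by Lemma~\ref{lemma:envyfree}, every vertex $w$ with $s'_w>0$ has $|d_w(r_w)|\geq|s'_w|$. The supply $s'_w$ also counts \emph{idle} taxicabs, i.e.\ taxicabs $\taxi_k$ with $\buyer(\taxi_k)=\emptyset$ that remain at $w$ (they are folded into $f^*(w,w)$ by the construction), and Lemma~\ref{lemma:envyfree} says nothing about passengers matching those units of supply. Concretely, take one passenger of value $10$ and two taxicabs all at the same vertex $w$: the minimal Walrasian prices are $\price_{\taxi_1}=\price_{\taxi_2}=0$, hence $r_w=0$, $d_w(r_w)=P_w$ has size $1$, yet $|s'_w|=2$, so your inequality fails and the factor is $\tfrac12$, not $1$. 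The conclusion you want survives via a different argument: if some taxicab $\taxi_k$ idles at $w$, then its minimal Walrasian price is $\price_{\taxi_k}=0$ (unallocated items have zero price --- the same fact used in case~1 of Lemma~\ref{lemma:discretetaxibr}), hence $r_w\leq \ell_{w,\loc(\taxi_k)}+\price_{\taxi_k}=0$, so $r_w=0$ and the factor multiplies zero; whereas at vertices with no idle taxicabs every unit of $s'_w$ serves a passenger of value at least $r_w$ and your argument is correct. In either case $\mu_{\taxi_j}(u\mapsto w|s',r,d)=r_w-\ell_{u,w}$, which is exactly what your $\Gamma$ argument needs. So the gap is local and repairable, but the case split on idle taxicabs is necessary; as written, the justification is incorrect.
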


\section{Optimal Competitive Online Algorithms for Social Welfare}\label{sec:onlinealg}

In this section we give online algorithms that determine supply
(using surge prices) so as to maximize social welfare as given in
Definition \ref{def:socialwelfare}. {\sl I.e.}, striking a balance
between maximizing the quality of service {\sl vs.} the costs
associated with shifting resources about.

The results\footnote{These are randomized online algorithms.
Alternately, one could give deterministic online algorithms with the
same guarantees by using the passenger-taxicab equilibria and surge prices
derived from Theorem \ref{thm:anyst}, with the disadvantages that
the equilibria is no longer unique and that this requires some
additional technical assumptions.}  in this section can be obtained
by online algorithms that set the supply to be one of the following:
\begin{enumerate}
    \item Set supply at time $t$ equal demand at time $t$, {\sl i.e.}, set $s^t= d^t$.
    \item Set supply at time $t$ equal to the supply at time $t-1$, {\sl i.e.}, set $s^t=s^{t-1}$.
\end{enumerate}
It follows from Theorem \ref{thm:surge} that using appropriate surge
prices we can determine that $s^t=d^t$ as the unique
passenger-taxicab equilibrium. It is easy to leave the supply unchanged by
choosing $r^t_i=1$ for all $i$. It follows that the resulting
passenger-taxicab equilibrium has no positive flow from $i$ to $j\neq i$, as
 $\ell_{ij}\geq 1$ for all $j\neq i$ --- ergo
$s^t=s^{t-1}$.

Given a demand sequence $d$ we define $\rd$ as the inverse of the
maximum demand at any vertex and time, {\sl i.e.},
$1/\rd={\max_{i,t} d^t_i}$. Note that $\rd\leq k$ since at any
time $t$ there is a vertex $i$ such that $d^t_i\geq 1/k$.
Moreover, $\rd\geq 1$ since $d^t_i\leq 1$, for any time $t$ and
vertex $i$.

%In this section we will consider a few models for the demand in the
%online settings: (1) Restricted Demand per vertex --- We have a
%parameter $\rd\geq 1$ and the demand in a vertex never exceeds
%$\frac{1}{\rd}$. {\sl I.e.}, for all $i,t$, $d^t_i\leq \frac{1}{\rd}$.
%Note that
%$\rd=1$ only implies that $\|d^t\|_1=1$, and implicitly $\rd\leq k$
%since there is always $d^t_i\geq \frac{1}{k}$. (2)

Consider the following online algorithms:
\begin{description}
    \item [$\rand(p)$] --- With probability $p$ set surge prices such that supply equals demand at all vertices. {\sl I.e.},
    at time $t=1$ set $s^1=d^1$; for all $t>1$ with probability $p$ set $s^t=d^t$ and with probability $1-p$ set $s^t=s^{t-1}$.
    \item [$\stay$] --- Split the supply equally over all vertices.
    {\sl I.e.}, at time $t=1$ set $s^1=\langle \frac{1}{k},\frac{1}{k},\ldots,\frac{1}{k}\rangle$ and for all $t>1$ set $s^t=s^{t-1}$.
    \item [$\match$] --- Always set supply equal demand, {\sl i.e.},
    set $s^t=d^t$ for all $t\geq 1$ . Note that $\match$ and $\rand(1)$ are identical.
    \item [$\comp(p)$] --- Toss a fair coin, if heads run $\stay$ otherwise run $\rand(p)$.
    The expected social welfare of $\comp(p)$ satisfies $\expec[\comp(p)]=\expec[\stay]/2 + \expec[\rand(p)]/2$.
\end{description}

In different scenarios different algorithms are useful. We later
discuss how to switch between different online algorithms in
changing circumstances, varying over time.

%Algorithm $\comp$ would be the one used for the case of uniform
%distances and Restricted Demand per node, algorithm $\stay$ would be the one used for the case of arbitrary distances, and algorithm $\match$ would be used
%for the case of Restricted Drift.

Like many other online problems, we first show
that the optimal solution can be assumed to be
``lazy", never move supply about unnecessarily  (Section~\ref{sec:lazy}).
Section \ref{sec:uniform} gives our main technical result. In this
setting the cost of moving from one vertex to another always equals
$1$, i.e., $\ell_{ij}=1$ for $i\neq j$. In this scenario we show
that $\comp({\sqrt{1/{k}}})$ achieves [an optimal] $\Theta(1/\sqrt{k})$
fraction of the optimal social welfare. More generally, the
competitive ratio improves  as a function of the maximal demand in a
single vertex (a $1/\rho$ fraction of the total demand) --- in this
setting $\comp({\sqrt{{\rd}/{k}}})$ achieves [an optimal]
$\Theta(\sqrt{\rho/{k}})$  fraction of the optimal social welfare.
The positive result appears in Theorem \ref{thm:compfinal}, whereas
optimality follows from Lemma \ref{thm:upperuniform}.

In Section \ref{sec:extensions} we consider several other scenarios:
\begin{itemize}
\item
Clearly, even for completely arbitrary costs $\ell_{ij}$ (to move
supply from $i$ to $j$), algorithm $\stay$ is trivially $\rho/k$
competitive.  In Section \ref{sec:arbitrarymetric} we prove that
this cannot be improved. This shows that it is critical that
$\ell_{ij}=1$ to obtain a non-trivial bound, without other
assumptions on the input sequence.
\item
In Section \ref{sec:restricted} we consider inputs where the total
drift (average total variation distance between successive demand vectors) is
small. In such settings the $\match$ algorithm approaches the
optimal social welfare, for sufficiently small drift. Moreover,
essentially the same bounds are tight.
\end{itemize}

%Due to  lack of space, some of the proof of this section are
%deferred to Appendix \ref{sec:appmissing}.

\subinput{The Optimal Supply Sequence is Lazy}
\label{sec:lazy}

We define lazy sequences and show that without loss of generality
the optimal supply sequence is a lazy sequence. We have two types of
``non-lazy" actions: increasing supply in a
location with supply greater than demand (over supply), or
reducing supply in a location while creating over demand.  Both
actions can be avoided, without loss in social welfare. We start
by defining a lazy sequence.

\begin{definition}
A supply sequence is {\em lazy} if for any time $t$ and any $u,v\in V,u\neq v$ such
that $f^t(u,v)>0$ then both (1) $s^t_v\leq d^t_v$ and (2) $s_u^{t-1}>
d^t_u$.
\end{definition}

%We first show that the optimal sequence can be lazy. Namely, if
%there is a flow at time $t$ from $u$ to $v$, then the resulting
%supply at $v$ does not exceed the demand in $v$.

We show that for any supply sequence there exists a lazy
supply sequence whose social welfare is at least the social welfare of the
original sequence.

\begin{lemma}\label{lem:tolazy}
Fix a demand
sequence $d$. Given an arbitrary supply sequence $s$, there exists a lazy supply sequence
$\bar{s}$ such that $\sw(\bar{s})\geq \sw(s)$.
%%Formally,
%Fix any demand sequence. For any time $t$: if $f^t(u,v)>0$ then
%$s^t_v\leq d^t_v$.
\end{lemma}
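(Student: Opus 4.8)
The plan is to avoid a naive step-by-step rewriting (whose termination is delicate in a continuous setting) and instead argue by minimality. Among all supply sequences $\bar s$ with $\sw(\bar s,d)\geq \sw(s,d)$, I would consider the time-weighted movement cost $\Phi(\bar s)=\sum_{t=2}^{T}(T-t+1)\,\ermv(\bar s^{t-1},\bar s^t)$, which penalizes \emph{early} relocations more than late ones. The feasible set is closed and bounded inside $\prod_t \Delta^{|V|-1}$, and $\sw(\cdot,d)$ and $\ermv(\cdot,\cdot)$ are continuous, so $\Phi$ attains a minimum at some $\bar s$ with $\sw(\bar s,d)\geq \sw(s,d)$. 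The whole lemma then reduces to showing that this minimizer is lazy.

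The heart of the argument is a local exchange showing that any violation of laziness admits a modification that does not decrease social welfare yet strictly decreases $\Phi$, contradicting minimality. Fix $t$ and a min cost flow $f^t$ realizing $\ermv(\bar s^{t-1},\bar s^t)$; by the standard normalization of optimal transport on a metric, I may assume no vertex both sends and receives flow in $f^t$. Suppose $f^t(u,v)>0$ with $u\neq v$ violates (1) or (2). I would take an infinitesimal amount $\delta>0$ of this flow and \emph{defer} its move by one step: rather than travelling $u\to v$ between times $t-1$ and $t$ and then continuing $v\to w$ (where $f^{t+1}(v,w)>0$, possibly $w=v$) between $t$ and $t+1$, the $\delta$ mass stays at $u$ during time $t$ and then goes directly $u\to w$. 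This changes only $\bar s^t$ (decreasing $\bar s^t_v$ and increasing $\bar s^t_u$ by $\delta$) and leaves every other $\bar s^{\tau}$ untouched, so only $\ds(\bar s^t,d^t)$, $\ermv(\bar s^{t-1},\bar s^t)$, and $\ermv(\bar s^t,\bar s^{t+1})$ can move. The two-step travel of the $\delta$ mass drops from $\delta(\ell_{u,v}+\ell_{v,w})$ to $\delta\,\ell_{u,w}$, no larger by the triangle inequality; since step $t$ loses at least $\delta\,\ell_{u,v}$ and step $t+1$ gains at most $\delta\,\ell_{u,v}$, the weights in $\Phi$ force $\Phi$ to drop by at least $\delta\,\ell_{u,v}>0$ while total movement does not increase.

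It remains to size $\delta$ so the demand-served term cannot fall. For a type-(1) violation $v$ is a pure sink with $\bar s^t_v>d^t_v$, so taking $\delta\leq\min\{f^t(u,v),\ \bar s^t_v-d^t_v\}$ keeps $\bar s^t_v-\delta\geq d^t_v$, leaving $\min(\bar s^t_v,d^t_v)$ unchanged while $\min(\bar s^t_u,d^t_u)$ can only rise. For a type-(2) violation $u$ is a pure source, hence $\bar s^t_u\leq \bar s^{t-1}_u\leq d^t_u$ and in fact $\bar s^t_u<d^t_u$ since $u$ has positive outflow; taking $\delta\leq\min\{f^t(u,v),\ d^t_u-\bar s^t_u\}$ makes the gain $\min(\bar s^t_u+\delta,d^t_u)-\min(\bar s^t_u,d^t_u)=\delta$ at $u$ exactly offset the loss of at most $\delta$ at $v$. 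In both cases $\ds(\bar s^t,d^t)$ does not decrease, so $\sw$ does not decrease; when $t=T$ there is no successor step and the $\delta$ mass is simply left at $u$, which only removes movement. Thus $\sw(\cdot,d)$ stays $\geq \sw(s,d)$ while $\Phi$ strictly decreases, the desired contradiction, so $\bar s$ is lazy.

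The main obstacle I anticipate is precisely the coupling between consecutive time steps: a local edit of $\bar s^t$ disturbs both adjacent earth-mover terms and the non-linear, $\min$-based served demand at $t$. The deferral-plus-triangle-inequality device is what decouples this, localizing the edit to a single supply vector and converting a possible cost increase at step $t+1$ into a guaranteed overall non-increase, while the weighting of $\Phi$ together with compactness replaces a fragile iterative termination argument. A secondary point needing care is the normalization of $f^t$ to a flow in which no vertex simultaneously sends and receives, as this is what guarantees a type-(2) source is strictly under-supplied at time $t$ and hence that the slack $d^t_u-\bar s^t_u$ used to choose $\delta$ is positive.
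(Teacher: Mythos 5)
Your proof is correct, and while the local move at its heart coincides with the paper's, your global argument is genuinely different. The paper fixes a laziness violation $f^t(u,v)>0$ exactly as you do: defer an $\epsilon$ of that flow (keep it at $u$ during step $t$, then route it at step $t+1$ directly from $u$ to $v$'s continuation targets, split proportionally over all edges $(v,w)$ with $f^{t+1}(v,w)>0$), use the triangle inequality to bound the step-$(t+1)$ cost increase by the step-$t$ saving, and check that the served demand does not drop. Where you differ is the wrapper that turns this exchange into the lemma: the paper argues by contradiction, selecting among all optimal sequences one whose violating time and pair $(t,(u,v))$ is extremal in some full order, applying the exchange, and declaring a contradiction to that extremal choice; you instead minimize the time-weighted potential $\Phi(\bar s)=\sum_{t}(T-t+1)\,\ermv(\bar s^{t-1},\bar s^t)$ over the compact set $\{\bar s:\sw(\bar s,d)\ge\sw(s,d)\}$ and show any violation lets the exchange strictly decrease $\Phi$ while staying feasible. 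Your wrapper buys rigor precisely where the paper is fragile: in a continuous setting iterated local fixes need not terminate, the exchange can in principle create new violations (at time $t$ for edges into $u$, or at time $t+1$ on the newly created edges), and the existence of the paper's extremal optimal sequence is asserted rather than proved; your explicit progress measure plus compactness and continuity of $\ermv$ settles all of this cleanly. The price is that you need two facts the paper does not invoke: continuity of the earthmover distance, and the normalization that some min cost flow has no vertex both sending and receiving.

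Two small points to tighten. First, if you defer $\delta$ along a single continuation edge $(v,w)$, you must also cap $\delta\le f^{t+1}(v,w)$; alternatively split the deferred mass proportionally over all of $v$'s outgoing edges at time $t+1$, as the paper does. Second, the normalization fact deserves a one-line justification, e.g., among all min cost flows take one minimizing $\sum_{u\ne v}f(u,v)$; if some vertex both sent and received, rerouting would preserve min cost (triangle inequality) while strictly decreasing this secondary objective, a contradiction.
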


%Due to  lack of space, the proof of this lemma is deferred to Appendix \ref{sec:appmissing}.

\begin{proof}
	For contradiction, assume there is a sequence $s$ for which for any
	lazy sequence $\bar{s}$ we have $\sw(s)>\sw(\bar{s})$. Note that
	essentially we are saying that there is  an optimal sequence $s$ for
	which no lazy sequence has the same social welfare. This implies
	that for any optimal sequence $s$ there is a time $t$ such that
	$f^t(u,v)>0$ and either (1) $s^t_v > d^t_v$ or (2) $s^{t-1}_u < d^t_u$.
	% optimal sequence there is a time $t$ such both $f^t(u,v)>0$ then $s^t_v > d^t_v$.
	Out of all the optimal sequences, consider the optimal sequence $s$
	with the largest such time $t$ and largest pair $(u,v)$ (given some
	full order on the pairs $V\times V$).
	
	We create a new flow $\bar{f}$ depending on the type of violation.
	Assume that we have $f^t(u,v)>0$ and $s^t_v > d^t_v$. At time $t$
	set $\bar{f}^t(u,v)= f^t(u,v)-\epsilon$ and $\bar{f}^t(u,u)=
	f^t(u,u)+\epsilon$, where $\epsilon=\min\{s^t_v - d^t_v,
	f^t(u,v)\}$. The rest of the flow remains unchanged, {\sl i.e.},
	$\bar{f}^t(u',v')= f^t(u',v')$ for $(u',v')\neq (u,v)$ or
	$(u',v')\neq (u,u)$.
	
	At time $t+1$ we adjust the flow to correspond to the original supply.
	Namely, for all $w\in V$ such that $f^{t+1}(v,w)>0$, we set
	$\bar{f}^{t+1}(v,w)=f^{t+1}(v,w)\frac{s^t_v-\epsilon}{s^t_v}$ and
	$\bar{f}^{t+1}(u,w)=f^{t+1}(u,w)+ f^{t+1}(v,w)
	\frac{\epsilon}{s^t_v}$, and all the remaining flows remain
	unchanged. It is straightforward to verify that $\bar{f}$ is a valid
	flow, and we set $s^{t+1}_v=\bar{s}^{t+1}_v=\sum_u \bar{f}^{t+1}(u,v)$.
	
	Note that the only influence on the social welfare are in times $t$
	and $t+1$. Comparing the movement cost of $\bar{s}$ to $s$, at time
	$t$ it decreased by $\epsilon$ and in time $t+1$ increased by at
	most $\epsilon$. The demand served in $\bar{s}$ and $s$ at time $t$
	and $t+1$ in unchanged (since the $\epsilon$ flow that was modified
	did not serve any demand in time $t$ and at time $t+1$ the supplies
	are identical). This implies that the social welfare of $\bar{s}$ is
	at least that of $s$. Therefore we have a contradiction to our
	selection of $t$ and $(u,v)$.
	
	The case that we have $f^t(u,v)>0$ and $s^{t-1}_u < d^t_u$ is similar
	and omitted.
\end{proof}

We derive the following immediate corollary.

\begin{corollary}
Without loss of generality the optimal supply sequence is lazy.
\end{corollary}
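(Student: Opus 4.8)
The plan is to derive the corollary directly from Lemma~\ref{lem:tolazy}, which already does all of the real work. Fix the demand sequence $d$. First I would argue that an optimal supply sequence exists: each supply vector $s^t$ lives in the probability simplex (its entries are nonnegative and sum to $1$), so the set of admissible supply sequences of length $T$ is compact, and the social welfare $\sw(\cdot,d)$ is a continuous function of the supply sequence (it is a sum of pointwise minima and of earthmover distances, both continuous). Hence the value $\max_s \sw(s,d)$ is attained by some sequence $s^\star$.

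Next I would apply Lemma~\ref{lem:tolazy} to $s^\star$, obtaining a lazy supply sequence $\bar{s}$ with $\sw(\bar{s}) \geq \sw(s^\star)$. On the other hand, $s^\star$ is optimal, so $\sw(s^\star) \geq \sw(\bar{s})$. Combining the two inequalities forces $\sw(\bar{s}) = \sw(s^\star)$, so the lazy sequence $\bar{s}$ is itself optimal. Therefore there is always an optimal supply sequence that is lazy, which is exactly the assertion that ``without loss of generality the optimal supply sequence is lazy.''

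There is essentially no obstacle here beyond Lemma~\ref{lem:tolazy} itself; the corollary is a one-line squeeze once the existence of a maximizer is granted. The only point deserving a word of care is that the lemma yields the inequality in a single direction ($\sw(\bar{s}) \geq \sw(s)$), and it is the optimality of $s^\star$ that supplies the reverse inequality and pins down equality. If one prefers to sidestep the compactness argument altogether, the statement can be phrased purely comparatively: for every supply sequence there is a lazy sequence at least as good, so nothing is lost by restricting attention to lazy sequences. In that form the corollary is a verbatim restatement of Lemma~\ref{lem:tolazy} and requires no additional argument.
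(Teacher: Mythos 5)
Your proposal is correct and matches the paper's treatment: the paper states this as an immediate consequence of Lemma~\ref{lem:tolazy}, exactly the one-line squeeze you describe (apply the lemma to an optimal sequence and use optimality for the reverse inequality). Your additional compactness argument for the existence of a maximizer is a reasonable piece of extra care, but it is not a different approach, just a detail the paper leaves implicit.
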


\subinput{Online Algorithms for Social Welfare Maximization when $\ell_{ij}=1$}
\label{sec:uniform}
%\begin{lemma}
%There exists some optimal supply sequence $OPT$ for which, for all
%$i,t$ :  $s^{t-1}_i<s^{t}_i \rightarrow d^t_i\geq s^t_i$.
%\end{lemma}
%
%[[LS - this might not be so easily. Not sure how I want to portray
%this - we have this variation to the sequence that we want to cause.
%However, it isn't easy to say exactly what change it is because it
%isn't flow supply from one specific city ot another...]]
%
%\begin{proof}
%    TODO - This is easily provable by taking some $OPT$ for which this does not hold and postponing the movement to the relevant node until such demand exists.
%\end{proof}
%
%Thus, from now on we will assume that $OPT$ satisfies the lemma.
%[[LS - we need to denote such an $OPT$ in some way to ease the usage of such an $OPT$ later on]]

We now analyse the lazy optimal supply sequence. We first introduce some notation.
Given an optimal lazy supply sequence $s$, define
$h^t_i=\min\{s^{t-1}_i,d^t_i\}$. Let $n\geq 0$ be an integer parameter, and define\footnote{For notational convenience we define $d^t_i=0$ and $s^t_i = s^1_i$ for all $t\leq 0$.} 
 $$z^t_i=\max\{0,
h^t_i-g^t_i\}, \mbox{\rm\ where\ } g^t_i=\max_{\tau\in[\max(1,t-n),t-1]}d^\tau_i.$$
Note that the definitions depend on $s$, but we use a fixed
optimal lazy sequence $s$. Note too that $n$ is yet undetermined.

\begin{lemma}
\label{lemma:opt}%
Fix a demand sequence $d$ and an optimal lazy supply sequence $s$ for $d$.
The resulting social welfare
\[
\opt=\sw(s,d)=\sum_{t,i} h_i^t \leq \sum_{t,i} z_i^t + \sum_{t,i}
g_i^t.
\]
\end{lemma}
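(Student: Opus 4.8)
The plan is to split the statement into two parts: the substantive identity $\sw(s,d)=\sum_{t,i}h^t_i$, and the routine bound $\sum_{t,i}h^t_i\le \sum_{t,i}z^t_i+\sum_{t,i}g^t_i$. Since $s$ is optimal, $\opt=\sw(s,d)$ comes for free, so the real work is the identity. I would prove it one time step at a time. Expand $\sw(s,d)=\sum_{t=1}^T\ds(s^t,d^t)-\sum_{t=2}^T\ermv(s^{t-1},s^t)$, and recall that throughout this section $\ell_{ij}=1$ for $i\neq j$, so the cost of the min cost flow $f^t$ from $s^{t-1}$ to $s^t$ is exactly the mass it transports between distinct vertices: $\ermv(s^{t-1},s^t)=\sum_v \mathrm{in}^t_v$, where $\mathrm{in}^t_v=\sum_{u\ne v}f^t(u,v)$ and $\mathrm{out}^t_v=\sum_{u\ne v}f^t(v,u)$.

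For a fixed $t\ge 2$ I would classify each vertex by $f^t$. Because $f^t$ is a min cost flow under a metric, no vertex can be both a source and a sink of flow (rerouting the through-flow directly strictly lowers the cost when $\ell\equiv 1$), so every vertex is a \emph{sender} ($\mathrm{out}^t_i>0$), a \emph{receiver} ($\mathrm{in}^t_i>0$), or \emph{idle}. Laziness supplies the two facts I need: by condition~(1) a receiver has $s^t_i\le d^t_i$, and by condition~(2) a sender has $s^{t-1}_i> d^t_i$. The per-step computation is then immediate: a receiver has $s^{t-1}_i\le s^t_i\le d^t_i$, so $\min(s^t_i,d^t_i)-\min(s^{t-1}_i,d^t_i)=s^t_i-s^{t-1}_i=\mathrm{in}^t_i$; an idle vertex contributes $0$; and a sender has $\min(s^{t-1}_i,d^t_i)=d^t_i$, with $\min(s^t_i,d^t_i)=d^t_i$ as well provided it does not overshoot into over-demand ($s^t_i\ge d^t_i$), contributing $0$. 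Summing yields $\sum_i\min(s^t_i,d^t_i)-\sum_i\min(s^{t-1}_i,d^t_i)=\sum_i\mathrm{in}^t_i=\ermv(s^{t-1},s^t)$, i.e. $\ds(s^t,d^t)-\ermv(s^{t-1},s^t)=\sum_i\min(s^{t-1}_i,d^t_i)=\sum_i h^t_i$. Under the convention $s^0=s^1$ the $t=1$ term equals $\sum_i\min(s^1_i,d^1_i)=\sum_i h^1_i$ with no movement cost, so summing over all $t$ telescopes to $\sw(s,d)=\sum_{t,i}h^t_i$.

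The one place needing care — and the main obstacle — is the sender case, where I used $s^t_i\ge d^t_i$ (a sender does not drain itself below its own demand). This is precisely the ``creating over demand'' move that Lemma~\ref{lem:tolazy} shows can be removed without loss of social welfare, so I would take the optimal lazy sequence $s$ to additionally be free of over-demand creation; the same exchange argument as in Lemma~\ref{lem:tolazy} shows any such flow can be shrunk without decreasing $\sw$. I would also stress that only the inequality $\sw(s,d)\le\sum_{t,i}h^t_i$ is needed downstream: at a sender $\min(s^t_i,d^t_i)\le d^t_i=\min(s^{t-1}_i,d^t_i)$ regardless of overshoot, so the per-step difference is at most $\ermv(s^{t-1},s^t)$, giving $\sw(s,d)\le\sum_{t,i}h^t_i$ for \emph{every} optimal lazy sequence, with equality exactly in the overshoot-free case.

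Finally, the stated inequality is immediate: by definition $z^t_i=\max\{0,h^t_i-g^t_i\}\ge h^t_i-g^t_i$, hence $h^t_i\le z^t_i+g^t_i$ for all $t,i$, and summing over $t$ and $i$ gives $\sum_{t,i}h^t_i\le\sum_{t,i}z^t_i+\sum_{t,i}g^t_i$, completing the chain $\opt=\sw(s,d)=\sum_{t,i}h^t_i\le\sum_{t,i}z^t_i+\sum_{t,i}g^t_i$.
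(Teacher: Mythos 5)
Your proof is correct and takes essentially the same route as the paper's: a per-time-step, per-vertex case analysis in which laziness makes the movement cost exactly cancel the increase in served demand (giving $\sw(s,d)=\sum_{t,i}h^t_i$), followed by the trivial bound $h^t_i\le z^t_i+g^t_i$; your sender/receiver/idle classification via the min cost flow is just a flow-level restatement of the paper's comparison of $s^t_i$ against $s^{t-1}_i$. The one genuine difference is to your credit: you notice that condition (2) of the laziness definition constrains $s^{t-1}_u$ rather than $s^t_u$, so the sender case needs either the no-overshoot refinement (via the exchange argument of Lemma~\ref{lem:tolazy}) or the observation that only the inequality $\sw(s,d)\le\sum_{t,i}h^t_i$ is used downstream --- a gap the paper's own proof elides by asserting $s^t_i\ge d^t_i$ directly from laziness.
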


\begin{proof}
Note that when $\ell_{ij}=1$ for all $i,j$ we get that $\ermv(s)=\sum_t
\frac{1}{2}\|s^t-s^{t-1}\|_1$. This means that for an optimal lazy
sequence we have
$$
\opt=\sw(s,d)=\ds(s,d)-\ermv(s)=\sum_t\sum_i
\min(s^{t}_i,d^t_i)-\sum_t\sum_{i:s^t_i\geq s^{t-1}_i}
\left(s^t_i-s^{t-1}_i\right).
%=\sum_t\sum_i \min(s^{t-1}_i,d^t_i).
$$
First consider $s^t_i > s^{t-1}_i$.
% then $\min(s^{t}_i,d^t_i)-(s^t_i-s^{t-1}_i)= \min(s^{t-1}_i,d^t_i)$.
Since the sequence is lazy and $s^t_i > s^{t-1}_i$ this implies that $s^t_i\leq
d^{t}_i$. Hence, $\min(s^{t}_i,d^t_i)=s^t_i$ and
$\min(s^{t-1}_i,d^t_i)=s^{t-1}_i$. It follows that the identity
$\min(s^{t}_i,d^t_i)-(s^t_i-s^{t-1}_i)= \min(s^{t-1}_i,d^t_i)$
holds.

Next consider $s^t_i< s^{t-1}_i$.
% then $\min(s^{t}_i,d^t_i)= \min(s^{t-1}_i,d^t_i)$.
Since the sequence is lazy and  $s^t_i< s^{t-1}_i$ implies that
$s^t_i\geq d^t_i$  and that
$\min(s^{t}_i,d^t_i)=d^t_i=\min(s^{t-1}_i,d^t_i)$. It follows yet again that the identity
$\min(s^{t}_i,d^t_i)= \min(s^{t-1}_i,d^t_i)$ holds.

Combining both identities we have
\[
\opt=\sw(s,d)=\sum_t\sum_i \min(s^{t-1}_i,d^t_i)= \sum_t\sum_i h^t_i,
\]
by the definition of $h_i^t$. Since, $h_i^t\leq z_i^t+g_i^t$ the
lemma follows.
\end{proof}

Our next goal is to bound the sum of $z_i^t$ and relate it to
the social welfare of the algorithm $\stay$. We first prove the following properties
of the optimal lazy supply sequence.

\begin{lemma}
\label{lemma:1}
Fix an optimal lazy sequence $s$ and a parameter $n\geq 1$. If for
some $i,t$ we have $s^{t-1}_i\geq\max_{\tau\in [t-n,t)} d^\tau_i$
then we have $\min_{\tau\in [t-n,t)} s^\tau_i\geq s^{t-1}_{i}$.
\end{lemma}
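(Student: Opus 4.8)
The plan is to argue by contradiction, exploiting the laziness condition that any \emph{increase} of supply at a vertex is capped by the current demand there. Write $M := s^{t-1}_i$ and suppose, for contradiction, that $\min_{\tau \in [t-n,t)} s^\tau_i < M$. Since the endpoint $t-1$ of the window already attains $s^{t-1}_i = M$, the violation must occur strictly before $t-1$; let $\tau^\ast \in [t-n, t-1)$ be the \emph{latest} time with $s^{\tau^\ast}_i < M$. By maximality of $\tau^\ast$ we have $s^{\tau^\ast+1}_i \ge M > s^{\tau^\ast}_i$, so the supply at $i$ strictly increases across the step $\tau^\ast \to \tau^\ast+1$.

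First I would translate this increase into the language of the flow $f^{\tau^\ast+1}$. Flow conservation gives $s^{\tau^\ast+1}_i - s^{\tau^\ast}_i = \sum_{u\ne i} f^{\tau^\ast+1}(u,i) - \sum_{v\ne i} f^{\tau^\ast+1}(i,v)$, so a strict increase forces $f^{\tau^\ast+1}(u,i) > 0$ for some $u \ne i$. The laziness condition~(1) applied to this edge then yields $s^{\tau^\ast+1}_i \le d^{\tau^\ast+1}_i$. Since $\tau^\ast + 1$ lies in the window $[t-n, t)$, the hypothesis bounds $d^{\tau^\ast+1}_i \le \max_{\tau\in[t-n,t)} d^\tau_i \le M$, and combining gives $s^{\tau^\ast+1}_i \le M$. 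Together with $s^{\tau^\ast+1}_i \ge M$ this collapses to the chain $M \le s^{\tau^\ast+1}_i \le d^{\tau^\ast+1}_i \le M$. When the hypothesis holds with strict inequality $s^{t-1}_i > \max_{\tau} d^\tau_i$ — which is exactly the situation in which the lemma is later invoked, since $z^t_i > 0$ forces $\min(s^{t-1}_i,d^t_i) > g^t_i$ and hence $s^{t-1}_i > g^t_i$ — this chain is already contradictory, and the proof is complete by pure laziness, with no appeal to optimality.

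The one delicate point, which I expect to be the main obstacle, is the boundary case where the hypothesis is tight, i.e. $s^{\tau^\ast+1}_i = d^{\tau^\ast+1}_i = M$. Laziness alone permits this: supply may dip to $s^{\tau^\ast}_i = M - \delta$ and be topped back up to exactly the demand level $M$. To rule this out I would invoke optimality rather than laziness. The recovery routes $\delta := M - s^{\tau^\ast}_i > 0$ units into $i$ at step $\tau^\ast+1$, mirroring an earlier departure of supply from $i$; since $\ell_{ij}=1$, this round trip has strictly positive movement cost, whereas keeping those $\delta$ units at $i$ throughout the window costs nothing and — because every demand $d^\tau_i$ in the window is at most $M$ — serves at least as much demand at $i$. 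Formalizing this as an exchange performed, as in the proof of Lemma~\ref{lem:tolazy}, on a movement-minimal optimal lazy sequence, and with the flow at the subsequent steps adjusted to re-absorb the $\delta$ units, produces a lazy sequence of no smaller welfare and strictly smaller cost, contradicting the extremal choice of $s$. The cascading re-routing of the flow at the later steps $\tau^\ast+2, \dots, t-1$ is the technical crux; I would handle it exactly as in Lemma~\ref{lem:tolazy}, propagating the $\delta$-correction one step at a time and checking that neither the demand served nor the validity of the flow is harmed.
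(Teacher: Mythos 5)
Your treatment of the main case is correct and is exactly the paper's argument: take the latest time $\tau^\ast\in[t-n,t-1)$ with $s^{\tau^\ast}_i<M$, note that the strict increase across $\tau^\ast\to\tau^\ast+1$ forces cross flow into $i$, apply lazy condition (1) to get $s^{\tau^\ast+1}_i\le d^{\tau^\ast+1}_i$, and collide this with $d^{\tau^\ast+1}_i\le M\le s^{\tau^\ast+1}_i$. You are also right that the strict-hypothesis case is the only one the rest of the paper needs: in Lemma~\ref{lemma:sum-z}, $z_i^{\tau_j}>0$ yields $s^{\tau_j-1}_i>g_i^{\tau_j}$, and strictness propagates through both applications of Lemma~\ref{lemma:1} inside the proof of Corollary~\ref{cor:2}. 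In fact your diagnosis here is sharper than the paper's own proof, which silently asserts $d^{\tau+1}_i<s^{\tau+1}_i$ at a point where only the weak inequality $d^{\tau+1}_i\le s^{t-1}_i\le s^{\tau+1}_i$ is available --- i.e., the paper glosses over precisely the boundary case you isolate.

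However, your plan for that boundary case cannot be carried out, because under the weak hypothesis the statement is simply false. Counterexample: two vertices, $\ell_{12}=1$, $T=4$, demands $d^1=d^2=(0,1)$ and $d^3=d^4=(1,0)$. The welfare of a sequence $s^t=(s^t_1,1-s^t_1)$ is $2-s^1_1-s^2_1+s^3_1+s^4_1-\sum_{t=2}^4|s^t_1-s^{t-1}_1|$, whose maximum $3$ is attained \emph{uniquely} at $s^1=s^2=(0,1)$, $s^3=s^4=(1,0)$; this sequence is lazy, since its only cross flow (one unit from vertex $2$ to vertex $1$ at time $3$) satisfies $s^3_1=1\le d^3_1=1$ and $s^2_2=1>d^3_2=0$. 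Taking $i=1$, $t=4$, $n=3$, the hypothesis $s^3_1=1\ge\max\{d^1_1,d^2_1,d^3_1\}=1$ holds, yet $\min_{\tau\in[1,4)}s^\tau_1=0<1=s^3_1$. Since the optimal lazy sequence here is unique, no exchange argument --- movement-minimal or otherwise --- can rescue the equality case. The flawed step in your proposal is the phrase ``mirroring an earlier departure of supply from $i$'': the $\delta$ units arriving at time $\tau^\ast+1$ need never have been at $i$ at all; supply at $i$ may be low from the start of the window, and the inflow that tops it up to exactly $d^{\tau^\ast+1}_i=M$ can be strictly profitable (it serves new demand at $\tau^\ast+1$ and afterwards), so it can be neither removed nor deferred without losing welfare. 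The correct repair is the one you mention only in passing: restate the lemma with the strict hypothesis $s^{t-1}_i>\max_{\tau\in[t-n,t)}d^\tau_i$ (which is all that Corollary~\ref{cor:2} and Lemma~\ref{lemma:sum-z} ever use), whereupon your laziness-only argument is already a complete proof.
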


\begin{proof}
For contradiction assume there exists some maximal $\tau\in [t-n,t)$
such that $s^\tau_i< s^{t-1}_{i}$. Then, $\tau\neq t-1$ and thus
$\tau+1\in[t-n+1,t)$ which by the assumption of the lemma implies
that $s^{t-1}_i\geq d^{\tau+1}_i$. Also, because this is the maximal
such $\tau$ we have that $s^{\tau+1}_i\geq s^{t-1}_i$. Thus, we have
$s^{\tau}_i < s^{\tau+1}_i$ and $d^{\tau+1}_i < s^{\tau+1}_i$.
This contradicts the assumption that $s$ is an optimal lazy sequence,
since there is a flow to $i$ at time $\tau+1$ which strictly exceeds
the demand.
\end{proof}

We derive the following immediate corollary:

\begin{corollary}
\label{cor:2}
Fix an optimal lazy sequence $s$  and a parameter $n\geq 1$. If for
some $i,t$ we have $s^{t-1}_i\geq\max_{\tau\in [t-n,t)} d^\tau_i$
then for any $\tau\in [t-n+1,t)$ we have $ s^{\tau-1}_i\geq
s^{\tau}_{i}$.
\end{corollary}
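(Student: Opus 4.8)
The plan is to derive the corollary directly from Lemma~\ref{lemma:1} together with the laziness of $s$, so the only real work is to upgrade the ``all values dominate the endpoint'' conclusion of Lemma~\ref{lemma:1} into a step-by-step monotonicity statement. Under the hypothesis $s^{t-1}_i \geq \max_{\tau\in[t-n,t)} d^\tau_i$, Lemma~\ref{lemma:1} already gives $\min_{\tau\in[t-n,t)} s^\tau_i \geq s^{t-1}_i$, i.e. $s^\sigma_i \geq s^{t-1}_i$ for every $\sigma\in[t-n,t)$. The corollary claims the stronger fact that the supply at $i$ never increases as we step forward through the window, namely $s^{\tau-1}_i \geq s^\tau_i$ for all $\tau\in[t-n+1,t)$. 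First I would record the conclusion of Lemma~\ref{lemma:1} for reuse, since it will be applied at both $\tau-1$ and $\tau$.

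Next I would argue by contradiction: suppose there is some $\tau\in[t-n+1,t)$ with $s^{\tau-1}_i < s^\tau_i$. A strict increase of supply at $i$ between times $\tau-1$ and $\tau$ forces positive incoming flow from a distinct vertex. Concretely, flow conservation gives $\sum_u f^\tau(u,i)=s^\tau_i$ and $\sum_v f^\tau(i,v)=s^{\tau-1}_i$; the self-loop $f^\tau(i,i)$ cancels on both sides of the strict inequality $\sum_u f^\tau(u,i) > \sum_v f^\tau(i,v)$, leaving $\sum_{u\neq i} f^\tau(u,i)>0$, so some genuine cross-edge carries flow into $i$, i.e. $f^\tau(u,i)>0$ for some $u\neq i$. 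Laziness condition (1) then yields $s^\tau_i \leq d^\tau_i$.

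I would then chain the two bounds on the window $[t-n,t)$. Since $\tau\in[t-n,t)$, the hypothesis gives $d^\tau_i \leq \max_{\sigma\in[t-n,t)} d^\sigma_i \leq s^{t-1}_i$, hence $s^\tau_i \leq s^{t-1}_i$. On the other hand $\tau-1\in[t-n,t)$, so Lemma~\ref{lemma:1} applied at $\tau-1$ gives $s^{\tau-1}_i \geq s^{t-1}_i$. Combining, $s^{\tau-1}_i \geq s^{t-1}_i \geq s^\tau_i$, which contradicts the assumed strict increase $s^{\tau-1}_i < s^\tau_i$. This establishes $s^{\tau-1}_i \geq s^\tau_i$ for every $\tau\in[t-n+1,t)$, as required.

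I expect the only delicate point to be the passage from a strict supply increase to the existence of an incoming cross-edge carrying flow, since it is precisely that cross-edge which licenses the use of laziness condition (1) to get $s^\tau_i \leq d^\tau_i$. Everything else is careful bookkeeping on the index window, making sure that both the hypothesis and Lemma~\ref{lemma:1} apply simultaneously to the indices $\tau$ and $\tau-1$; this is why the corollary restricts $\tau$ to $[t-n+1,t)$ rather than $[t-n,t)$, so that $\tau-1$ still lies in $[t-n,t)$.
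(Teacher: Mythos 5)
Your proof is correct. It takes a mildly different route from the paper's: the paper proves the corollary by a \emph{second, nested application} of Lemma~\ref{lemma:1} --- having established $s^\tau_i \geq s^{t-1}_i \geq \max_{\sigma\in[t-n,t)} d^\sigma_i$ for every $\tau$ in the window, it notes that the hypothesis of Lemma~\ref{lemma:1} is then satisfied again at each intermediate time with the shortened window $[t-n,\tau)$, and the conclusion of that second application delivers $s^{\tau-1}_i \geq s^\tau_i$. You instead invoke Lemma~\ref{lemma:1} only once and get the per-step monotonicity by rerunning the laziness mechanism directly: a strict increase $s^{\tau-1}_i < s^\tau_i$ forces, by flow conservation and cancellation of the self-loop, a cross-edge $f^\tau(u,i)>0$ with $u\neq i$; laziness condition (1) then gives $s^\tau_i \leq d^\tau_i$, and the chain $s^{\tau-1}_i \geq s^{t-1}_i \geq d^\tau_i \geq s^\tau_i$ contradicts the assumed increase. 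This is essentially the same ``increase above demand contradicts laziness'' step that sits inside the paper's own proof of Lemma~\ref{lemma:1}, so in effect you inline that argument rather than cite the lemma a second time. The trade-off: the paper's version is shorter and treats Lemma~\ref{lemma:1} as a black box (at the cost of some index gymnastics with the shifted window), while yours is more self-contained and makes fully explicit the flow-conservation step (strict increase $\Rightarrow$ incoming cross-flow) that the paper leaves implicit; your bookkeeping ensuring that both $\tau$ and $\tau-1$ lie in $[t-n,t)$ is exactly the right justification for the corollary's restricted range $\tau\in[t-n+1,t)$.
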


\begin{proof}
From Lemma~\ref{lemma:1}, for any $\tau\in [t-n,t)$ we have that $s_i^\tau\geq s_i^{t-1}\geq
\max_{\tau'\in [t-n,t)} d^{\tau'}_i$. Therefore, $s_i^\tau\geq
\max_{\tau'\in [\tau-n',\tau)} d^{\tau'}_i$, where $n'=\tau-(t-n)>0$. Now
applying Lemma~\ref{lemma:1} again we obtain the corollary.
\end{proof}

\begin{lemma}
\label{lemma:sum-z}
Fix an optimal lazy sequence $s$ and a parameter $n\geq 1$. Then,
$\sum_i \sum_{\tau\in[t-n,t)} z_i^\tau \leq 1$.
\end{lemma}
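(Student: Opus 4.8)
The plan is to bound, for each fixed vertex $i$ separately, the window sum $\sum_{\tau\in[t-n,t)} z_i^\tau$ by the single supply coordinate $s^{t-n-1}_i$, and then sum over $i$; since supply always sums to $1$, this immediately gives $\sum_i\sum_{\tau\in[t-n,t)}z_i^\tau\le\sum_i s^{t-n-1}_i=1$. Write $W=[t-n,t)$ and fix $i$. First I would record two preliminary facts. Using the convention $d^\tau_i=0$ for $\tau\le 0$, a term $z_i^\tau$ can be positive only for $\tau\ge 1$, since otherwise $h_i^\tau=\min(s^{\tau-1}_i,0)=0$. More importantly, $z_i^\tau>0$ forces $s^{\tau-1}_i\ge h_i^\tau>g_i^\tau=\max_{\sigma\in[\tau-n,\tau)}d^\sigma_i$, which is precisely the hypothesis of Lemma~\ref{lemma:1} at time $\tau$.

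The first main step is a telescoping estimate. Let $\tau_1<\dots<\tau_m$ enumerate the times of $W$ with $z_i^{\tau_j}>0$. For a consecutive pair, since $\tau_{j-1},\tau_j\in W$ one checks that $\tau_{j-1}\in[\tau_j-n,\tau_j)$ (here $\tau_j\le t-1$ gives $\tau_j-n\le t-n-1<t-n\le\tau_{j-1}$), so $\tau_{j-1}$ lies in the window defining $g_i^{\tau_j}$. Hence $g_i^{\tau_j}\ge d^{\tau_{j-1}}_i\ge h_i^{\tau_{j-1}}$, using $h\le d$, and therefore $z_i^{\tau_j}=h_i^{\tau_j}-g_i^{\tau_j}\le h_i^{\tau_j}-h_i^{\tau_{j-1}}$ for $j\ge 2$, while $z_i^{\tau_1}=h_i^{\tau_1}-g_i^{\tau_1}\le h_i^{\tau_1}$. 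Summing telescopes to
$$\sum_{\tau\in W} z_i^\tau=\sum_{j=1}^{m} z_i^{\tau_j}\le h_i^{\tau_m}\le s^{\tau_m-1}_i.$$

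The second step replaces the vertex-dependent time $\tau_m$ by the common reference time $t-n-1$. Because $z_i^{\tau_m}>0$, Lemma~\ref{lemma:1} applies at time $\tau_m$ and yields $\min_{\sigma\in[\tau_m-n,\tau_m)}s^\sigma_i\ge s^{\tau_m-1}_i$. A short index check shows $t-n-1\in[\tau_m-n,\tau_m)$: from $t-n\le\tau_m\le t-1$ we get $\tau_m-n\le t-n-1$ and $t-n-1<\tau_m$. Consequently $s^{t-n-1}_i\ge s^{\tau_m-1}_i\ge\sum_{\tau\in W}z_i^\tau$. This inequality holds trivially for any vertex with no positive term in $W$, so it holds for every $i$, and summing over $i$ finishes the argument (even when $t-n-1\le 0$, the convention $s^\tau_i=s^1_i$ keeps $\sum_i s^{t-n-1}_i=1$).

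I expect the main obstacle to be the bookkeeping of the half-open window $W=[t-n,t)$ together with the boundary conventions, since the argument needs two distinct index inclusions to line up simultaneously: that $\tau_{j-1}$ sits inside the $g$-window of $\tau_j$ (for the telescope) and that $t-n-1$ sits inside the window to which Lemma~\ref{lemma:1} is applied at $\tau_m$ (for the common-reference step). The conceptual core is the pairing of the telescoping bound $z_i^{\tau_j}\le h_i^{\tau_j}-h_i^{\tau_{j-1}}$ with the supply monotonicity of Lemma~\ref{lemma:1}; together they collapse the entire window sum at vertex $i$ onto one supply coordinate evaluated at a time common to all vertices, which is exactly what makes the final summation telescope to the total supply mass $1$.
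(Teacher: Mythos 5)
Your proof is correct, and while it follows the paper's overall skeleton (bound each vertex's window sum by a single supply coordinate at a time common to all vertices, then sum over vertices using the fact that supply sums to $1$), your execution of the key step is genuinely simpler than the paper's. The paper telescopes over demands: it first proves, by contradiction using Corollary~\ref{cor:2}, that $s_i^{\tau_j-1} > d_i^{\tau_j}$ and hence $h_i^{\tau_j}=d_i^{\tau_j}$ for $j\le m-1$, and only then obtains $z_i^{\tau_j}\le d_i^{\tau_j}-d_i^{\tau_{j-1}}$. You telescope over $h$ directly: since $\tau_{j-1}$ lies in the window defining $g_i^{\tau_j}$, you get $g_i^{\tau_j}\ge d_i^{\tau_{j-1}}\ge h_i^{\tau_{j-1}}$ and hence $z_i^{\tau_j}\le h_i^{\tau_j}-h_i^{\tau_{j-1}}$, with no contradiction argument and no use of Corollary~\ref{cor:2} at all; supply monotonicity (Lemma~\ref{lemma:1}) is invoked exactly once, at $\tau_m$, to pass from $s_i^{\tau_m-1}$ to the common coordinate $s_i^{t-n-1}$. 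Your version also handles the boundary bookkeeping more carefully than the paper: the paper's final chain $\sum_i h_i^{\tau_{m_i}}\le\sum_i s_i^{\tau_{m_i}}\le\sum_i s_i^{t-n}=1$ is loose with indices (what $h_i^{\tau_{m_i}}$ actually bounds is $s_i^{\tau_{m_i}-1}$, and the comparison to a common time needs an edge case when $\tau_{m_i}=t-n$), whereas your choice of reference time $t-n-1$, verified to lie in $[\tau_m-n,\tau_m)$ for every vertex, closes that gap cleanly. The only thing the paper's longer route buys is the structural fact that $h_i^{\tau_j}=d_i^{\tau_j}$ at the positive-$z$ times, but that fact is not used anywhere else, so your shortcut loses nothing.
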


\begin{proof}
	Clearly we care only about $z_i^\tau>0$. Fix a location $i$ and let
	$\tau_1, \ldots, \tau_m$ be all the times $\tau\in [t-n,t)$ for
	which $z_i^\tau>0$. Clearly, $ \sum_{\tau\in[t-n,t)} z_i^\tau =
	\sum_{j=1}^m z_i^{\tau_j}$.
	
	First, if $s^{t-1}_i\leq \max_{\tau\in [t-n,t)} d^\tau_i=g_i^t$,
	since $h_i^t\leq s^{t-1}_i$ then $z_i^t=0$. Therefore, at any time
	$\tau_j$ we have $s^{\tau_j-1}_i> \max_{\hat{\tau}\in [t-n,t)}
	d^{\hat{\tau}}_i$, which implies that we can apply
	Corollary~\ref{cor:2} at the times $\tau_j$.
	
	We claim that $s_i^{\tau_j-1}> d_i^{\tau_j}$ for $1\leq j\leq m-1$.
	For contradiction assume that $s_i^{\tau_j-1}\leq d_i^{\tau_j}$. We
	have
	\[
	h_i^{\tau_m} \leq s^{\tau_m-1}_i  \leq s^{\tau_j-1}_i\leq
	d_i^{\tau_j}\leq g^{\tau_m}_i,
	\]
	where the first inequality is from the definition of $h$, the second
	follows from Corollary~\ref{cor:2}, the third from our assumption,
	and the fourth from the definition of $g$. This implies that
	$z_i^{\tau_m}=\max\{0,h_i^{\tau_m}-g_i^{\tau_m}\}=0$. In
	contradiction to our construction that $z_i^{\tau_m}>0$. Therefore,
	$s_i^{\tau_j-1}> d_i^{\tau_j}$, which implies that
	$h_i^{\tau_j}=d_i^{\tau_j}$.\footnote{This applies only to $j\leq
		m-1$ since $g^{\tau_m}_i$ does not include $d_i^{\tau_m}$ but does
		include all previous $d_i^{\tau_j}$.}
	
	Since $z_i^{\tau_j}>0$, we have that
	$z_i^{\tau_j}=h_i^{\tau_j}-g_i^{\tau_j}$. We showed that
	$h_i^{\tau_j}=d_i^{\tau_j}$ and $g_i^{\tau_j}\geq d_i^{\tau_{j-1}}$,
	hence, $z_i^{\tau_j}\leq d_i^{\tau_{j}}-d_i^{\tau_{j-1}}$, for
	$2\leq j\leq m-1$.
	
	Summing over all $\tau_j$ we have
	\begin{align*}
	\sum_{\hat{\tau}\in[t-n,t)} z_i^{\hat{\tau}} &= \sum_{j=1}^m
	z_i^{\tau_j} \\
	&= z_i^{\tau_m} + z_i^{\tau_1}+\sum_{j=2}^{m-1}
	z_i^{\tau_j} \\
	&\leq z_i^{\tau_m} +z_i^{\tau_1}+ \sum_{j=2}^{m-1}
	d_i^{\tau_{j}}-d_i^{\tau_{j-1}}\\
	&\leq z_i^{\tau_m} +z_i^{\tau_1}+ d^{\tau_{m-1}}_i -d^{\tau_1}_i\\
	&\leq h^{\tau_m}_i -(g^{\tau_m}_i -d^{\tau_{m-1}}_i) + (h^{\tau_1}_i
	-g^{\tau_1}_i -d^{\tau_{1}}_i)  \\
	&\leq h^{\tau_m}_i
	\end{align*}
	For the last inequality note that $g^{\tau_m}_i \geq
	d^{\tau_{m-1}}_i$ and that $h^{\tau_1}_i \leq d^{\tau_{1}}_i$.
	
	Summing over all locations $i$ we have
	\[
	\sum_i \sum_{\hat{\tau}\in[t-n,t)} z_i^{\hat{\tau}}\leq \sum_i
	h^{\tau_{m_i}}_i \leq \sum_i s^{\tau_{m_i}}_i \leq \sum_i s^{t-n}_i
	=1
	\]
	where the last inequality uses again Corollary~\ref{cor:2}.
\end{proof}

We now analyze $\stay$ for arbitrary relocation costs $\ell_{ij}$.

\begin{lemma}
\label{lemma:stay}%
 At all times $t$, the demand served by $\stay$ is at least 
$\rd/k$ of the total demand.
\end{lemma}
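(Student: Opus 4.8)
The plan is to exploit the fact that $\stay$ holds the supply fixed at the uniform vector, so that at every time step the analysis decouples across vertices. First I would observe that by the definition of $\stay$ we have $s^t_i = 1/k$ for all $i$ and all $t$, whence the demand served at time $t$ is $\ds(s^t,d^t) = \sum_i \min(s^t_i, d^t_i) = \sum_i \min(1/k, d^t_i)$. Since the total demand $\sum_i d^t_i = 1$, the statement ``at least $\rd/k$ of the total demand'' reduces to the single inequality $\sum_i \min(1/k, d^t_i) \geq \rd/k$.

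The heart of the argument is a per-coordinate bound: I claim that $\min(1/k, d^t_i) \geq \frac{\rd}{k}\, d^t_i$ for every vertex $i$. I would prove this by splitting on the two cases defining the minimum. If $d^t_i \leq 1/k$, then the left side equals $d^t_i$ and the inequality becomes $d^t_i \geq \frac{\rd}{k} d^t_i$, which holds because $\rd \leq k$. If instead $d^t_i > 1/k$, then the left side equals $1/k$ and the inequality becomes $1/k \geq \frac{\rd}{k} d^t_i$, i.e.\ $d^t_i \leq 1/\rd$, which is exactly the defining property of $\rd$ (namely $1/\rd = \max_{i,t} d^t_i$). Both ingredients---the bound $\rd \leq k$ on one branch and $d^t_i \leq 1/\rd$ on the other---are precisely the two stated facts about $\rd$, so nothing beyond the given setup is needed.

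Summing the per-coordinate bound over all $i$ then yields $\sum_i \min(1/k, d^t_i) \geq \frac{\rd}{k} \sum_i d^t_i = \rd/k$, which is the desired conclusion. I do not anticipate a genuine obstacle here: the lemma is elementary once one notices that the uniform supply makes each coordinate independent and that the case analysis matches the two defining bounds on $\rd$. The only points worth double-checking are that the two branches together cover all cases (they do, since they partition on whether $d^t_i \leq 1/k$ or $d^t_i > 1/k$) and that the definition of $\rd$ is invoked in the correct direction on the large-demand branch.
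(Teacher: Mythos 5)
Your proof is correct, but it takes a genuinely different route from the paper's. You establish the pointwise inequality $\min(1/k, d^t_i) \geq \frac{\rd}{k}\, d^t_i$ at every vertex, splitting per-vertex on whether $d^t_i \leq 1/k$, and then sum over $i$; the two branches consume exactly the two given facts, $\rd \leq k$ and $d^t_i \leq 1/\rd$. The paper instead makes a global case split: it introduces the set $S$ of vertices whose demand is at least $1/k$ (the paper writes $s^t_i \geq \frac{1}{k}$, evidently a typo for $d^t_i \geq \frac{1}{k}$), observes that if $|S| \geq \rd$ then the vertices of $S$ alone already serve $|S|/k \geq \rd/k$, and otherwise combines the $|S|/k$ served inside $S$ with the fully served demand outside $S$, which is at least $1 - |S|/\rd$, finishing with an algebraic manipulation of $|S|\cdot\frac{1}{k} + 1 - \frac{|S|}{\rd}$. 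Your decomposition is the more economical one: it avoids the global case analysis and the closing algebra, it does not rely on the normalization $\sum_i d^t_i = 1$ (it literally yields a $\rd/k$ fraction of whatever the total demand is, matching the lemma's phrasing), and it proves the slightly stronger per-vertex statement that a $\rd/k$ fraction of the demand at \emph{each} location is served. The paper's argument routes the slack through the cardinality of $S$ rather than through each coordinate separately and gains nothing substantive in exchange, so both proofs deliver the same bound, with yours being the cleaner of the two.
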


\begin{proof}
Recall that $\ds\left(s^t,d^t\right)=\sum_i
\min\left(s^t_i,d^t_i\right)=\sum_i
\min\left(\frac{1}{k},d^t_i\right)$. Denote $S=\left\{i|s^t_i\geq
\frac{1}{k}\right\}$.
If we have $|S|\geq \rd$ then $\ds\left(s^t,d^t\right)\geq
\frac{1}{k}\cdot |S|\geq \frac{\rd}{k}$.
%
    %TODO - fix this up - The beginning is the worst case scenario
    %TODO - at some point we also use the fact the r<=k (to say that r-k is negative) and we use j<=r
Otherwise, since $\frac{1}{\rd}\geq \frac{1}{k}$ the total demand
not in $S$ is at least $1-\frac{|S|}{\rd}$ and it is completely
served by $\stay$. Therefore,
\[\ds\left(s^t,d^t\right)\geq
|S|\cdot\frac{1}{k}+1-\frac{|S|}{\rd}=\frac{k\rd+|S|\rd-|S|k}{k\rd}=\frac{k\rd-|S|(k-\rd)}{k\rd}\geq\frac{k\rd+\rd^2-k\rd}{k\rd}=\frac{\rd}{k}.
\]
\end{proof}

Now we analyze $\rand(p)$ and relate it to $g_i^t$.
\begin{lemma}
\label{lemma:rand}%
Let $\widehat{s}_i^t$ be the random variable representing the supply
of $\rand(p)$ time $t$ in vertex $i$. Then, $\expec
[\widehat{s}_i^t]\geq g_i^t p (1-p)^n$. In addition, the expected
social welfare of $\rand(p)$ is at least $p(1-p)^n \sum_{i,t}
g_i^t$.
\end{lemma}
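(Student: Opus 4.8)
The plan is to prove the two assertions in turn, using the first as the main tool for the second.

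For the expectation bound I would condition on the most recent ``match'' of $\rand(p)$. Let $\tau^*(t)$ denote the largest $\tau\le t$ at which the algorithm set $s^\tau=d^\tau$; such a $\tau$ always exists because $t=1$ is always a match, and the independent per-step coins give $\widehat s^t_i=d^{\tau^*(t)}_i$ once $\tau^*(t)$ is fixed, with $\Pr[\tau^*(t)=\tau]=p(1-p)^{t-\tau}$ for $2\le\tau\le t$ and $\Pr[\tau^*(t)=1]=(1-p)^{t-1}$. Choosing $\tau_0\in[\max(1,t-n),t-1]$ to attain the maximum defining $g^t_i$, so that $d^{\tau_0}_i=g^t_i$, the event $\{\tau^*(t)=\tau_0\}$ has probability at least $p(1-p)^{t-\tau_0}\ge p(1-p)^n$, since $t-\tau_0\le n$; on that event $\widehat s^t_i=g^t_i$, and as $\widehat s^t_i\ge 0$ always, taking expectations gives $\expec{\widehat s^t_i}\ge p(1-p)^n g^t_i$. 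The only point needing separate checking is $\tau_0=1$, where $\max(1,t-n)=1$ forces $t-1\le n$ and hence $(1-p)^{t-1}\ge(1-p)^n\ge p(1-p)^n$, so the bound survives.

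For the social welfare I would first use the hypothesis $\ell_{ij}=1$ to remove the movement cost from the bookkeeping. With unit distances $\ermv(s^{t-1},s^t)=\tfrac12\|s^{t-1}-s^t\|_1$, and since both vectors are probability vectors, $\ds(s^{t-1},d^t)=1-\tfrac12\|s^{t-1}-d^t\|_1$. Consequently, at a step where $\rand(p)$ matches, the relocation cost $\tfrac12\|s^{t-1}-d^t\|_1$ is exactly the extra demand it serves over leaving the supply at $s^{t-1}$, so the net per-step contribution is $\ds(s^{t-1},d^t)$ regardless of whether the coin said match or stay. Pathwise this yields the identity
\[
\sw(\rand(p))=\sum_t\ds(s^{t-1},d^t)=\sum_{i,t}\min(\widehat s^{t-1}_i,d^t_i),
\]
which is what lets the random (and potentially wasteful) relocations be charged against the demand they actually serve; the same identity underlies Lemma~\ref{lemma:opt} for the optimum.

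It then remains to bound $\sum_{i,t}\expec{\min(\widehat s^{t-1}_i,d^t_i)}$ from below by $p(1-p)^n\sum_{i,t}g^t_i$, and this is the step I expect to be the real obstacle. The direct approach reruns the first part at time $t-1$: with probability at least $p(1-p)^n$ the last match at or before $t-1$ falls on the witness $\tau_0$ of $g^t_i$, forcing $\widehat s^{t-1}_i=g^t_i$, so $\expec{\min(\widehat s^{t-1}_i,d^t_i)}\ge p(1-p)^n\min(g^t_i,d^t_i)$, and summation gives the bound with $\min(g^t_i,d^t_i)$ in place of $g^t_i$. Closing the gap between $\min(g^t_i,d^t_i)$ and $g^t_i$ is the crux: one must ensure that the recently observed demand recorded in $g^t_i$ is charged against demand that is genuinely present, so that the $p(1-p)^n$ fraction of tracked supply guaranteed by the first part is realised as served demand rather than stranded at a vertex whose current demand has dropped. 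I would handle this through a charging argument that attributes each $g^t_i$ to the demand served at its witness time $\tau_0$ (where $\rand(p)$ serves $d^{\tau_0}_i$ in full with probability $p$), using the fact that a fixed $\tau_0$ can witness $g^t_i$ for at most $n$ indices $t$ to keep the charges controlled. Reconciling this charging with the movement-adjusted identity above, rather than with raw demand served, is the delicate accounting I would work through most carefully.
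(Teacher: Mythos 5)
Your proof of the first claim is correct, and it is essentially the paper's own argument (condition on a match at the witness time of $g_i^t$ followed by no matches up to $t$); your identity $\sw(\rand(p))=\sum_t\ds(\widehat{s}^{t-1},d^t)$ for $\ell_{ij}=1$ is also correct and handles the movement costs cleanly. But the step you flag as ``the real obstacle'' is a genuine gap, and it is not one you could have closed: the second assertion of the lemma is \emph{false as stated}, so no charging argument can bridge $\min(g_i^t,d_i^t)$ to $g_i^t$. Take $k=n+1$ vertices and let the demand cycle through them one vertex per step, $d^t=e_{1+(t\bmod(n+1))}$. For every $t>n+1$ exactly $n$ vertices have $g_i^t=1$, so the claimed bound is roughly $np(1-p)^nT$. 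By your identity, the expected per-step welfare of $\rand(p)$ equals the probability that the last match occurred a multiple of $n+1$ steps ago, which is $\frac{p(1-p)^n}{1-(1-p)^{n+1}}$ up to exponentially small boundary terms. The lemma would therefore require $n(1-(1-p)^{n+1})\leq 1$, which fails whenever $np$ is not small: with $n=2$, $p=\tfrac12$ the true welfare is $\approx T/7$ against a claimed $T/4$, and with the paper's own choice $n=1/p$ the claim is off by a factor of about $(1-e^{-1})n$.

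The comparison with the paper is thus not what you might expect: the paper proves only the supply bound $\expec[\widehat{s}_i^t]\geq g_i^tp(1-p)^n$ and then asserts that it ``implies'' the welfare bound, i.e.\ it silently identifies expected supply at $(i,t)$ with welfare collected at $(i,t)$, skipping exactly the cap by $d_i^t$ that you isolated (and the movement costs, which your identity accounts for). So your proposal is more careful than the paper up to the point where both stop, and the weaker bound you actually establish, $\expec[\sw(\rand(p))]\geq p(1-p)^n\sum_{i,t}\min(g_i^t,d_i^t)$, is the true statement. It is also the \emph{right} statement for the downstream use: setting $\tilde{g}_i^t=\min(g_i^t,d_i^t)$ and $\tilde{z}_i^t=\max\{0,h_i^t-\tilde{g}_i^t\}$, one checks $\tilde{z}_i^t\leq z_i^t$ (if $d_i^t\leq g_i^t$ then $h_i^t\leq d_i^t$ forces $\tilde{z}_i^t=0$; otherwise $\tilde{z}_i^t=z_i^t$), so Lemma~\ref{lemma:opt} and Lemma~\ref{lemma:sum-z} hold verbatim with $\tilde{g},\tilde{z}$ in place of $g,z$, giving $\opt\leq T/n+\sum_{i,t}\tilde{g}_i^t$; your bound then pushes the proof of Theorem~\ref{thm:compfinal} through unchanged. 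In short: finish your argument with $\min(g_i^t,d_i^t)$ and repair the surrounding lemmas accordingly, rather than trying to prove the lemma as stated.
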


\begin{proof}
Let $\tau=\arg\max_{\hat{\tau}\in[t-n,t)} d_i^{\hat{\tau}}$, {\sl
i.e.}, $d^\tau_i=g^t_i$. We lower bound the expectation of
$\widehat{s}_i^t$ by the probability that $\rand(p)$ sets
$s^\tau=d^\tau$ and keeps the supply until time $t$, {\sl i.e.},
$s^t=s^\tau$. The probability that we have $s^\tau=d^\tau$ is
at least $p$. The probability that $s^t=s^\tau$ is at least
$(1-p)^n$. Therefore, $\expec [\widehat{s}_i^t]\geq g_i^t p
(1-p)^n$, which implies that the expected social welfare of
$\rand(p)$ is at least $p(1-p)^n \sum_{i,t} g_i^t$.
\end{proof}

\begin{theorem}
The algorithm $\comp({\sqrt{{\rd}/{k}}}) =\frac{1}{2}\stay +
\frac{1}{2}\rand({\sqrt{{\rd}/{k}}})$ is
$(\frac{1}{2e}\sqrt{\frac{\rd}{k}})$-competitive.
\end{theorem}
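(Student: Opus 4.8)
The plan is to combine the four preceding lemmas through the decomposition of $\opt$ furnished by Lemma~\ref{lemma:opt}, charging the $z$-part of $\opt$ to $\stay$ and the $g$-part to $\rand(p)$. Throughout I fix an optimal lazy supply sequence $s$ (which loses nothing by Lemma~\ref{lem:tolazy}) and treat $n$ as a free parameter to be pinned down at the end. By Lemma~\ref{lemma:opt} we have $\opt \leq \sum_{t,i} z_i^t + \sum_{t,i} g_i^t$, so it suffices to bound each of the two sums by a small multiple of $\expec[\stay]$ and of $\expec[\rand(p)]$ respectively, and then to invoke the identity $\expec[\comp(p)]=\tfrac12\bigl(\expec[\stay]+\expec[\rand(p)]\bigr)$.

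For the first sum I would partition the time axis $\{1,\dots,T\}$ into consecutive windows of length $n$ and apply Lemma~\ref{lemma:sum-z} to each window, which bounds the total of $z_i^\tau$ inside a window (summed over locations) by $1$; this yields $\sum_{t,i} z_i^t \leq \lceil T/n\rceil$. On the other side, Lemma~\ref{lemma:stay} gives that $\stay$ serves at least a $\rd/k$ fraction of the unit demand at every step and incurs no movement cost, so $\expec[\stay] \geq T\rd/k$. Comparing the two, $\sum_{t,i} z_i^t \leq \frac{k}{n\rd}\,\expec[\stay]$, absorbing the additive rounding from $\lceil\cdot\rceil$ into the leading constant. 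For the second sum, Lemma~\ref{lemma:rand} directly gives $\expec[\rand(p)] \geq p(1-p)^n\sum_{t,i} g_i^t$, that is, $\sum_{t,i} g_i^t \leq \frac{1}{p(1-p)^n}\,\expec[\rand(p)]$.

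Putting these together yields
\[
\opt \;\leq\; \frac{k}{n\rd}\,\expec[\stay] \;+\; \frac{1}{p(1-p)^n}\,\expec[\rand(p)].
\]
It remains to choose $n$ so that both coefficients are at most $e\sqrt{k/\rd}$. With $p=\sqrt{\rd/k}$ I would take $n=\lfloor 1/p\rfloor=\lfloor\sqrt{k/\rd}\rfloor$: then $\frac{k}{n\rd}\approx p\cdot\frac{k}{\rd}=\sqrt{k/\rd}$, while $(1-p)^n\geq(1-p)^{1/p}$ is bounded below by essentially $1/e$, so $\frac{1}{p(1-p)^n}\lesssim \frac{e}{p}=e\sqrt{k/\rd}$. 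Hence $\opt \leq e\sqrt{k/\rd}\bigl(\expec[\stay]+\expec[\rand(p)]\bigr)=2e\sqrt{k/\rd}\,\expec[\comp(p)]$, which rearranges to $\expec[\comp(p)]\geq \tfrac{1}{2e}\sqrt{\rd/k}\cdot\opt$, exactly the claimed competitive ratio.

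The genuinely delicate point is the joint calibration of $n$ and $p$: the two coefficients pull $n$ in opposite directions — the $\stay$-term wants $n$ large, whereas the $\rand$-term needs $pn=O(1)$ so that $(1-p)^n$ does not decay — and it is precisely the choice $n\approx 1/p$ that makes both $\Theta(\sqrt{k/\rd})$ at once. The remaining nuisances, namely the boundary window in the partition argument and nailing the constant in $(1-p)^n\geq 1/e$ tightly enough to land on $\frac{1}{2e}$, are routine once the structural work in Lemma~\ref{lemma:opt} and Lemma~\ref{lemma:sum-z} is in hand; essentially all the difficulty of the theorem has been front-loaded into those two lemmas, and this final step is just the assembly.
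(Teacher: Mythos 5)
Your proposal is correct and follows essentially the same route as the paper's proof: the identical decomposition $\opt \leq \sum_{t,i} z_i^t + \sum_{t,i} g_i^t$ from Lemma~\ref{lemma:opt}, the same block-partition application of Lemma~\ref{lemma:sum-z} giving $\sum_{t,i} z_i^t \leq T/n$, the same lower bounds from Lemma~\ref{lemma:stay} and Lemma~\ref{lemma:rand}, and the same calibration $p=\sqrt{\rd/k}$, $n\approx 1/p$; the only difference is cosmetic, in that you rearrange to bound $\opt$ by coefficients times $\expec[\stay]$ and $\expec[\rand(p)]$, whereas the paper verifies the ratio directly by comparing the two numerator/denominator pairs termwise. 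One shared (and inherited) imprecision: since $(1-p)^{1/p}<1/e$ strictly, the constant $\tfrac{1}{2e}$ is only asymptotically exact in both your argument and the paper's.
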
\label{thm:compfinal}

\begin{proof}
By Lemma~\ref{lemma:opt} we have that
%
%Recall that
%\[
%OPT=\sw(s,d)=DS(s,d)-\ermv(s)=\sum_t\sum_i
%\min(s^{t}_i,d^t_i)-\sum_{i:s^t_i\geq s^{t-1}_i}
%\left|s^t_i-s^{t-1}_i\right|=\sum_t\sum_i \min(s^{t-1}_i,d^t_i).
%\]
%This implies that
$OPT= \sum_{t,i} h_i^t \leq \sum_{t,i} z_i^t +
g_i^t$. We bound separately $\sum_{t,i} z_i^t$ and $\sum_{t,i}
g_i^t$.

By Lemma~\ref{lemma:sum-z} we can partition the time to
$\frac{T}{n}$ blocks of size $n$ each, and in each the sum is at
most $1$, therefore $\sum_{t,i} z_i^t\leq \frac{T}{n}$. On the other
hand, $\stay$ guarantees a social welfare of at least $\rd\cdot
\frac{T}{k}$.

%For each time $t$ we show that $s_i^t$, the supply of $\rand(p)$ has
%$E[s_i^t]\geq g_i^t p (1-p)^n$. Let
%$\tau=\arg\max_{\hat{\tau}\in[t-n,t)} d_i^{\hat{\tau}}$. We can
%lower bound the expectation of $s_i^t$ by the probability that \rand
%will set $s^\tau=d^\tau$ and keep the supply until time $t$, i.e.,
%$s^t=s^\tau$. The probability that we have $s^\tau=d^\tau$ is
%exactly $p$. The probability that $s^t=s^\tau$ is at least
%$(1-p)^n$. Therefore, $E[s_i^t]\geq g_i^t p (1-p)^n$, which implies
%that the expected social welfare of $\rand(p)$ is at least $p(1-p)^n
%\sum_{i,t} g_i^t$.

We have that,
\[
OPT \leq \frac{T}{n} + \sum_{i,t} g_i^t.
\]
Using Lemma~\ref{lemma:stay} and Lemma~\ref{lemma:rand}, we have
\[
\frac{1}{2}\stay + \frac{1}{2}\rand(p) \geq \frac{\rd}{2k}T +
\frac{1}{2}p(1-p)^n \sum_{i,t} g_i^t
\]
For $p=\sqrt{\frac{\rd}{k}}$ and $n=\frac{1}{p}$ we bound the competitive ratio as follows:
\[
\frac{\rd \frac{T}{2k} + \frac{1}{2}p(1-p)^n \sum_{i,t} g_i^t}{\frac{T}{n} + \sum_{i,t}
g_i^t} =\frac{\frac{1}{2}\sqrt{\frac{\rd}{k}} T\sqrt{\frac{\rd}{k}} + \frac{1}{2e}\sqrt{\frac{\rd}{k}}
\sum_{i,t} g_i^t}{T\sqrt{\frac{\rd}{k}} + \sum_{i,t} g_i^t}\geq
\frac{1}{2e}\sqrt{\frac{\rd}{k}}.
\]
%
%Assume some demand sequence $Q=\langle d^1,d^2,\ldots, d^T\rangle$.
%Also, some optimal offline algorithm $OPT$ which has supply sequence
%$\hat{S}=\langle \hat{s}^1,\hat{s}^2,\ldots,\hat{s}^T\rangle$.
%Denote $w^{t}_i=\max_{t'\in[t-\sqrt{\frac{k}{r}},t-1]}
%\left(d^{t'}_i\right)$ and $z^t_i=
%\max\left(\hat{s}^{t-1}_i-w^t_i,0\right)$. Thus, $w^t_i+z^t_i\geq
%\hat{s}^{t-1}_i\geq \min\left(\hat{s}^{t-1}_i,d^t_i\right)$ and
%$sum_{t,i} w^t_i+z^t_i\geq sum_{t,i}
%\min\left(s^{t-1}_i,d^t_i\right) = \sw\left(Q,\hat{S}\right)$
%
%Let us prove then that $\sum_{i,t} z^t_i \leq
%\sqrt{\frac{r}{k}}\cdot T$,
%
%Let us prove then that $\mathop{{}\mathbb{E}} s^t_i \geq
%O\left(\sqrt{\frac{r}{k}}\right)\cdot w^t_i$, denote $\tau$ such
%that $d^\tau_i=w^t_i$. Then, since at each time $t$
%$\rand({\sqrt{\frac{r}{k}}})$ will set $s^t=d^t$ with probability
%$\sqrt{\frac{r}{k}}$ we have
\end{proof}

\subinput{Social Welfare Maximization when $\ell_{ij}=1$: Impossibility Results}

%In this section we will assume $ \ell_{ij} = 1 $ for every $ i,j$,
%demand at each node restricted to $\frac{1}{\rd}$ ({\sl i.e.} $d^t_i\leq\rd$)
%and an unrestricted $ \delta $.

We show that no online algorithm can hope to achieve a competitive
ratio better (greater) than $O\left(\sqrt{\frac{\rd}{k}}\right)$.
Recall, that Section \ref{sec:uniform} describes an online
algorithm, $\comp({\sqrt{{\rd}/{k}}})$, that achieves this bound on the competitive
ratio. Ergo, $\comp({\sqrt{{\rd}/{k}}})$ achieves the optimal competitive ratio, up to a
constant factor.

\begin{theorem}\label{thm:upperuniform}
    Fix the metric $\ell_{ij}=1$.
        No online algorithm can achieve a competitive ratio better (greater) than $O\left(\sqrt{\frac{\rd}{k}}\right)$.
\end{theorem}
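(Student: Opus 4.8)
The plan is to invoke Yao's principle: I will exhibit a single distribution $\mathcal{D}$ over demand sequences on which every \emph{deterministic} online algorithm earns, in expectation, only an $O(\sqrt{\rho/k})$ fraction of the expected optimum. Since a randomized algorithm is a distribution over deterministic ones, this shows that every randomized algorithm has, on some input, welfare at most an $O(\sqrt{\rho/k})$ fraction of $\opt$. Assume for simplicity that $\rho$ is an integer (the general case follows by rounding). The distribution partitions the $T$ steps into epochs of length $\ell=\lceil 2\sqrt{k/\rho}\,\rceil$; for each epoch I draw a fresh hidden set $U\subseteq V$ of size $u=\lceil\sqrt{\rho k}\,\rceil$ uniformly at random, and for each step $t$ of the epoch I place the demand uniformly on a fresh uniformly random $\rho$-subset $A^t\subseteq U$, i.e.\ $d^t_i=1/\rho$ for $i\in A^t$ and $0$ otherwise. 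Then $\max_{i,t}d^t_i=1/\rho$ as required, $\rho\le u\le k$, and each vertex of $U$ is active (lies in some $A^t$) for an expected $\ell\rho/u=\Theta(1)$ steps per epoch.

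For the lower bound on $\opt$, note that the optimum sees $U$ and simply places supply uniformly on $U$ ($1/u$ per vertex), paying at most $1$ to reposition at each epoch boundary and nothing within an epoch. Since $A^t\subseteq U$ and $u\ge\rho$, it serves $\rho\cdot\min(1/u,1/\rho)=\rho/u=\Theta(\sqrt{\rho/k})$ per step, hence $\ell\cdot\rho/u=\Theta(1)$ over an epoch; choosing the constant in $\ell$ so this exceeds the repositioning cost, the net per epoch is $\Omega(1)$, and summing over the $T/\ell$ epochs gives $\expec[\opt]=\Omega(T/\ell)=\Omega(T\sqrt{\rho/k})$.

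The crux is the matching upper bound for an arbitrary deterministic $\mathrm{Alg}$ with supply $s^t$. From $\mathrm{served}^t=\sum_{i\in A^t}\min(s^t_i,1/\rho)\le\sum_{i\in A^t}s^t_i$ and the fact that raising the supply at $A^t$ above $s^{t-1}$ costs (in earthmover distance, with all distances $1$) at least the mass added, one gets $\mathrm{served}^t-\mathrm{cost}^t\le\sum_{i\in A^t}s^{t-1}_i$. The key point is that $s^{t-1}$ is fixed before $A^t$ is drawn, and $A^t$ is a uniform $\rho$-subset of the hidden $U$; conditioning on $U$ and the history, $\expec\big[\sum_{i\in A^t}s^{t-1}_i\big]=(\rho/u)\sum_{i\in U}s^{t-1}_i$, so the net is governed by how much supply $\mathrm{Alg}$ manages to keep inside $U$. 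The remaining step is to show this in-$U$ mass cannot be accumulated profitably: because $U$ is revealed only at rate $\rho$ per step and each vertex is active only $O(1)$ times per epoch, any mass $\mathrm{Alg}$ moves into $U$ serves only $O(1)$ times before the epoch ends, so its service is paid for by the relocation cost incurred to place it there. Concretely I would charge each served unit either to the relocation cost or to the \emph{initial} in-$U$ mass $M^0=\sum_{i\in U}s^0_i$; since $U$ is independent of $s^0$, $\expec[M^0]=u/k=\Theta(\sqrt{\rho/k})$, and the charging yields $\expec[\text{net per epoch}]\le O(\sqrt{\rho/k})$. Summing, $\expec[\mathrm{Alg}]\le O(\sqrt{\rho/k})\cdot(T/\ell)=O(T\rho/k)$, and dividing by $\expec[\opt]=\Omega(T\sqrt{\rho/k})$ gives the ratio $O(\sqrt{\rho/k})$.

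I expect this last step to be the main obstacle: turning the slogan ``mass moved into the hidden support is served only $O(1)$ times, hence pays for itself'' into a rigorous charging (or potential) argument. The difficulty is that $\mathrm{Alg}$ is adaptive, choosing where to relocate supply based on the already-revealed part of $U$, so the bound on a vertex's future activations must be made \emph{conditionally}, exploiting that, given the history, the sets $A^t$ remain uniform over $U$ and each not-yet-served vertex has only $O(1)$ expected remaining activations in the epoch. Handling the early steps of an epoch (where few vertices of $U$ are known) and the $O(1)$ boundary repositioning costs will require additional but routine care, and a standard concentration step lets one pass from the ratio of expectations back to the existence of a single bad input, as Yao's principle demands.
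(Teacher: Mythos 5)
Your construction cannot establish the theorem: the central claim you defer to the end --- that every deterministic online algorithm gets $\expec[\mathrm{Alg}]\le O(T\rho/k)$ against your distribution --- is false, and not for a technical reason. The trivial online algorithm $\match$ (set $s^t=d^t$, which is legal since the online algorithm observes $d^t$ before choosing $s^t$) already nets $\Theta(T\sqrt{\rho/k})$ against your input. Indeed, within an epoch $\match$ serves all demand (value $1$ per step) and pays earthmover cost $1-|A^t\cap A^{t-1}|/\rho$, whose expectation is $1-\rho/u$ because $A^{t-1},A^t$ are independent uniform $\rho$-subsets of the same $U$; so its expected net is $\rho/u=\Theta(\sqrt{\rho/k})$ per step, i.e. $\Theta(1)$ per epoch --- the same order as your lower bound for $\opt$. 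Your own first inequality already reveals that the correct per-step cap for your distribution is $\sqrt{\rho/k}$, not $\rho/k$: you show $\expec[\mathrm{net}^t]\le(\rho/u)\sum_{i\in U}s^{t-1}_i\le\rho/u$, and $\match$ attains this. The slogan you hoped to make rigorous (``mass moved into $U$ serves $O(1)$ times, hence pays for itself'') is simply not true as an inequality: one unit of mass moved onto $A^t$ costs $1$, serves once immediately, and then sits inside $U$ at no further cost, collecting expected service $\rho/u$ in every remaining step of the epoch; the resulting surplus is $\Theta(\ell\rho/u)=\Theta(1)$ per epoch, which is exactly the $\match$ profit. Moreover, $\opt$ on your distribution is only marginally better than $\match$: its best play is essentially to load the $\rho$ vertices of $U$ with the most activations, and since each vertex is activated $\mathrm{Binomial}(\ell,\rho/u)\approx\mathrm{Poisson}(2)$ times, this yields only $O\left(\log(k/\rho)/\log\log(k/\rho)\right)$ per epoch. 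So the best competitive-ratio impossibility your distribution could ever certify is polylogarithmic, nowhere near $O(\sqrt{\rho/k})$.

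The structural reason for the failure, and the fix the paper uses, is instructive. Because the online algorithm sees $d^t$ before committing $s^t$, your hidden set $U$ is not hidden in any useful sense: after one step of serving, the algorithm's supply is parked inside $U$ for free, and since $A^t$ is re-randomized uniformly within $U$, supply sitting on any $\rho$-subset of $U$ has exactly the same expected overlap $\rho/u$ with the next demand as $\opt$'s uniform-over-$U$ supply. What the adversary must hide is something the online algorithm can never learn in time: \emph{which locations will recur in the near future}. The paper's distribution does precisely this: at every step, all demand is placed on a single uniformly random vertex (for general $\rho$, spread $1/\lceil\rho\rceil$ over a uniformly random block of $\lceil\rho\rceil$ vertices, out of $N=\lfloor k/\lceil\rho\rceil\rfloor$ disjoint blocks), drawn fresh from the \emph{entire} space each step. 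Then any online algorithm's expected overlap with the new demand is $\Theta(\rho/k)$ per step no matter what it does, so $\expec[\mathrm{Alg}]=O(T\rho/k)$, while the offline algorithm uses the birthday paradox: in each window of $2\sqrt{N}$ steps, some block appears twice with constant probability, and parking there nets $\Omega(1)$ per window, i.e. $\expec[\opt]=\Omega(T\sqrt{\rho/k})$. Your epoch bookkeeping and your $\opt$ lower bound are fine; the defect is that re-randomizing demand inside a support the algorithm can learn (and occupy) in one step hands the online algorithm the same expected overlap as the optimum, and no charging argument can undo that.
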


\begin{proof}
	We first describe the proof for $\rd=1$ and then extend it to arbitrary $\rd$.
	
	Consider the following stochastic demand sequence. At time $t$ we
	select at random a vertex $c^t\in V$, and assign all the demand to it,
	{\sl i.e.}, $d^t_{c^t}=1$ and $d^t_i=0$ for $i\neq c^t$. Clearly any
	online algorithm has an expected social welfare of $T/k$.
	
	Essentially, for the optimal offline we use the birthday
	paradox to show that its social welfare is $\Theta(T/\sqrt{k})$.
	Consider the following offline strategy. Partition the time to
	intervals of size of $2\sqrt{k}$. We show that in any such
	interval the offline can increase social welfare by at least $1$ with constant
	probability.
	
	Fix such a time interval. We claim that with constant probability
	some vertex appears twice in the interval. If in the first
	$\sqrt{k}$ times there is a vertex $i$ that appears twice, we are
	done.
	%the offline will move at the start of the interval to vertex $i$ and will gain at least $1$.
	Otherwise, we have $\sqrt{k}$ distinct vertices. The probability that
	we resample one of those vertices in the next $\sqrt{k}$ time steps is
	at least $1/e$. Now, if vertex $i$ appears twice in the interval then
	the offline algorithm can move at the start of the interval to vertex $i$ and
	increase social welfare by at least $1$. This implies that the expected social
	welfare of this offline strategy is $\Theta(T/\sqrt{k})$, which
	lower bounds the expected social welfare of the optimal offline
	strategy.
	
	Since the online algorithm has expected social welfare of $T/k$ and the
	optimal offline algorithm has expected social welfare of $\Theta(T/\sqrt{k})$,
	the competitive ratio, for $\rd=1$, is bounded by $O(\sqrt{1/k})$.
	
	%%%%%%%%%%%%%%%%%%%%%%%%%%%
	
	We now sketch how the proof extends to a general $\rd\geq 1$. In
	this case we partition the $k$ vertices into
	$N=\floor{k/\ceil{\rd}}$ disjoint subsets, each of size
	$M=\ceil{\rd}$. (Note, that $N\cdot M\leq k$.) The $N$ subsets
	replace the vertices $V$ and each time we select a subset, we give a
	uniform demand over the subset. (note that the demand per vertex is $1/M\leq 1/\rd$.)
	
	As before, any online algorithm has expected social welfare of
	$\Theta(T/N)=\Theta(T\rd/k)$. Similar to before, there is an offline
	strategy that guarantees an expected social welfare of
	$\Theta(T\sqrt{\rd/k})$. This implies that the competitive ratio is
	at most $\Theta(\sqrt{\rd/k})$.
\end{proof}

\subinput{Extensions}\label{sec:extensions}

In (Section \ref{sec:arbitrarymetric} we show that the assumption that $\ell_{ij}=1$ was critical to achieve the non-trivial competitive ratio of Section \ref{sec:uniform} unless $\rho$ (the fraction of demand at any single vertex) was sufficiently small.  We also consider restricting the demand sequences by bounding the average variability in demand. In Section \ref{sec:restricted} we show that the online algorithm that greedily matches supply and demand works well, the average drift is sufficiently small.

\subsubinput{Arbitrary Metric Spaces} \label{sec:arbitrarymetric}

We can apply the online algorithm $\stay$ and guarantee a competitive
ratio of $\rd/k$ as shown in Lemma~\ref{lemma:stay}. The following
theorem establishes an impossibility result when the costs are
different than $1$ (even if they are still identical).

\begin{theorem}\label{thm:arbmetric}
    Fix some $1>\epsilon>0$, and consider costs $\ell_{ij}=1+\epsilon$ for $i\neq j$.
No online algorithm has a competitive ratio better (greater) than
$\frac{(1+\epsilon)^2}{\epsilon}\cdot\frac{1}{k}$ for this metric.
\end{theorem}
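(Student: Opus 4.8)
The plan is to prove this impossibility by exhibiting a hard \emph{randomized} demand sequence and invoking Yao's principle. For any distribution over demand sequences and any online algorithm $\mathrm{Alg}$ we have $\min_d \frac{\sw(\mathrm{Alg}(d),d)}{\max_s \sw(s,d)} \le \frac{\expec[\sw(\mathrm{Alg}(d),d)]}{\expec[\max_s \sw(s,d)]}$, since the worst-case ratio is at most the ratio of the expectations. It therefore suffices to design a demand distribution for which $\expec[\text{online}] \le O(T/k)$ for \emph{every} online algorithm, while $\expec[\opt] = \Omega\!\big(\epsilon T/(1+\epsilon)\big)$; dividing these two bounds then yields the asserted competitive ratio $(1+\epsilon)^2/(\epsilon k)$ up to the constant.

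The construction I would use exploits the fact that when every relocation costs $1+\epsilon>1$, moving a unit of supply and then using it to serve only a \emph{single} time step is a strict net loss (gain at most $1$, cost $1+\epsilon$): a move pays off only if the relocated supply serves at least two steps. Accordingly, let time be partitioned into \emph{epochs}. Each epoch places all the demand on a freshly chosen uniformly random vertex $c$ (so $\rd=1$, single-vertex demand), and independently, with a small probability $q=\Theta(\epsilon)$ the epoch is ``long'' (the pulse at $c$ lasts exactly two consecutive steps) and otherwise ``short'' (one step). The repeat probability $q$ is calibrated to sit exactly at the threshold at which reacting to a pulse ceases to be profitable.

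For the online upper bound I would argue that no algorithm can do essentially better than $\stay$. By the same telescoping step used throughout Section~\ref{sec:uniform}, for each $t$ we have $\ds(s^t,d^t)-(1+\epsilon)\,\tfrac12\|s^t-s^{t-1}\|_1 \le \ds(s^{t-1},d^t)$; and on a step whose pulse location is independent of the past (hence of $s^{t-1}$), a single random unit pulse gives $\expec[\ds(s^{t-1},d^t)]=\tfrac1k$ for every past-measurable $s^{t-1}$, so the ``do not react'' contribution sums to $T/k+O(1)$. The delicate point, which I expect to be the crux, is to confirm that \emph{reacting} never helps: relocating onto the pulse at the start of an epoch serves at most $1+q\le 1+\epsilon$ steps in expectation against a cost $1+\epsilon$, and since a long epoch has length exactly two, an algorithm that instead waits, detects the repeat at the second step, and only then relocates serves just one additional step for a cost of $1+\epsilon$ and again loses. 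These two cases together cover all deviations and show $\stay$, with value $\Theta(T/k)$, is optimal up to the target constant.

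For the offline lower bound I would let the optimum use its knowledge of which epochs are long: it relocates all supply onto $c$ only on long epochs (cost $1+\epsilon$), serving both of their steps (gain $2$) and leaving its supply in place otherwise. Each long epoch contributes a net $2-(1+\epsilon)=1-\epsilon$, and long epochs account for a $q/(1+q)$ fraction of the $\Theta(T)$ steps, giving $\expec[\opt]=\Omega\!\big(\epsilon T/(1+\epsilon)\big)$; combining with the online bound yields the competitive ratio $O\!\big((1+\epsilon)^2/(\epsilon k)\big)$. Beyond the case analysis above, the main obstacle I anticipate is pinning down the \emph{precise} constant $(1+\epsilon)^2/\epsilon$ rather than a slightly weaker expression, which requires the exact calibration of $q$ (and possibly a refinement of the pulse shape) to the break-even threshold. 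Finally, the extension from $\rd=1$ to general $\rd\ge1$ should follow exactly as in Theorem~\ref{thm:upperuniform}, by partitioning the $k$ vertices into $\lfloor k/\lceil\rd\rceil\rfloor$ blocks and spreading each pulse uniformly over a random block.
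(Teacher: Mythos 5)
Your construction is, in spirit, the same as the paper's: a stream of single-vertex demand pulses at uniformly random locations, with random durations calibrated to the relocation break-even threshold $1+\epsilon$, so that no online algorithm profits from moving (giving welfare $\approx T/k$), while the offline, knowing the realized durations, relocates only onto long pulses. Your Yao-style reduction is valid, your per-step inequality $\ds(s^t,d^t)-(1+\epsilon)\tfrac{1}{2}\|s^t-s^{t-1}\|_1\le \ds(s^{t-1},d^t)$ is correct, and the online bound can be made rigorous with $q\le\epsilon$ (the needed bookkeeping is that whether an epoch turns out long is independent of the supply the algorithm moved on its first step, so all moved mass can be charged against the $\epsilon$ loss per unit moved); this is at the same level of rigor as the paper's own argument.

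The genuine gap is the one you flag yourself: your two-point pulse shape (length $1$, or length exactly $2$ with probability $q$) provably cannot deliver the stated constant. With two-step pulses the offline nets only $2-(1+\epsilon)=1-\epsilon$ per long epoch, so its expected welfare is about $(1-\epsilon)\,qT/(1+q)$, and since the online bound forces $q\le\epsilon$, the best ratio your construction certifies is $\frac{1+\epsilon}{(1-\epsilon)\epsilon k}$. This exceeds the claimed $\frac{(1+\epsilon)^2}{\epsilon k}$ for \emph{every} $\epsilon\in(0,1)$ (the quotient of the two bounds is $\frac{1}{1-\epsilon^2}>1$) and blows up as $\epsilon\to 1$. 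The paper's ``refinement of the pulse shape'' is to draw the durations $b_\tau$ from a geometric distribution with parameter $\frac{1}{1+\epsilon}$. Memorylessness then makes \emph{every} relocation exactly break-even --- the expected residual duration of the current pulse is always $1+\epsilon$, no matter when the algorithm reacts --- so the online welfare is $T/k$ with no case analysis over reaction times; and the offline gains $\expec{b_\tau-(1+\epsilon)\mid b_\tau\ge 2}\Pr[b_\tau\ge 2]=\frac{\epsilon}{1+\epsilon}$ per epoch, which over the expected $\frac{T}{1+\epsilon}$ epochs gives, via Wald's identity, $\expec{\opt}\ge \frac{\epsilon}{(1+\epsilon)^2}T$ --- yielding exactly the constant $\frac{(1+\epsilon)^2}{\epsilon}\cdot\frac{1}{k}$. (Also note the closing extension to general $\rd\ge 1$ is unnecessary: this theorem has no $\rd$ parameter.)
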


\begin{proof}
	The idea is the following: we generate a demand sequence that at every time step demand is concentrated in a single vertex.  We generate a
	random sequence of vertices, such that no two successive positions are identical. We then duplicate every position for a random duration. The duration, the number of successive demands at that position, is geometrically distributed. We set the parameters such that
	no online algorithm can benefit by switching between vertices.
	On the other hand, given a sufficiently long duration of repeated demands for the same vertex, the optimal
	schedule switches to this vertex.
	
	We now describe the stochastic demand sequence generation. We first
	generate a sequence of locations $c$. We set $c_1= i\in V$ uniformly
	at random. For $c_\tau$ we set $c_\tau=j$ where $j\in V\setminus
	\{c_{\tau-1}\}$ uniformly. In addition we generate a sequence of
	duration $b$ distributed geometrically with parameter
	$p=\frac{1}{1+\epsilon}$. Namely, $b_\tau=j$ with probability
	$p^{j-1}p$, for $j\geq 1$. We are now ready to generate the demand
	sequence $d$. For each $c_\tau=i$ we associate a unit vector $e_i$
	which has $e_{i,i}=1$ and $e_{i,j}=0$ for $j\neq i$. We duplicate
	$e_{c_\tau}$ exactly $b_\tau$ times. We truncate the sequence at
	time $T$, and this is the demand sequence $d$.
	
	%
	%Generate the following demand sequence $Q$:
	%\begin{itemize}
	%    \item For time $t=1$ choose some $i$ uniformly at random from $\left[1,k\right]$, denote $c_1=i$. Set $d^1_i=1$ and $d^1_j=0$ for all $j\neq i$.
	%    \item For time $t>1$, choose $c_t$ as follows:
	%    $$c_{t} = \left\{\begin{array}{cl}
	%    c_{t-1} & \mbox{\rm With probability $\frac{\epsilon}{1+\epsilon}$} \\
	%    $j$ & \mbox{\rm With probability $\frac{1}{(1+\epsilon)(k-1)}$, $j\in\{1,\ldots,k\}
	%        \setminus \{c_t\}$}
	%    \end{array}\right.$$
	%   \item For all $t$ set $d^t_{c_t}=1$ and for all $j\neq c_t$ set $d^t_j=0$.
	%\end{itemize}
	
	First consider an arbitrary online algorithm. We claim that it does
	not gain (in expectation) any social welfare by moving supply, and
	hence it's expected social welfare is $T/k$. The argument is that the
	cost of moving $\delta$ supply to a new location is $(1+\epsilon)\delta$. On
	the other hand, the expected duration in the new location is only
	$1+\epsilon$, so in expectation there is no benefit. For an online
	algorithm that does not move any supply the expected social welfare
	is $T/k$.

	%At time $t$ an arbitrary online algorithm may do one of the
	%following:
	%%TODO - fix the eplan
	%\begin{enumerate}
	%\item
	%Move supply to $c_t$ from one (or more) arbitrary nodes $j$ with
	%$s^{t-1}_j>0$. Any supply added to $c_t$ will serve $(1+\epsilon)$
	%in expectation until the random sequence switches the demand
	%elsewhere. As the earthmover cost is equal to $(1+\epsilon)$ times
	%the supply moved, the net effect of such a choice is zero.
	%Subsequent to the next change in $c_t$ the situation is exactly as
	%before. Thus, without loss of generality no online algorithm will
	%move supply to node $c_t$.
	%\item
	%Any other change in the supply (not to $c_t$) will imply a negative
	%net change in the expected social welfare.
	%\end{enumerate}
	
	We now analyze the social welfare attained by an optimal offline
	algorithm. The main benefit of an offline algorithm is that it has
	access to the realized $b=\tau$. It is simple to see that if
	$b_\tau\geq 2$ then the offline algorithm has a benefit of $b_\tau -
	(1+\epsilon)>0$.

	%
	%Consider a sequence $d^t, d^{t+1}, \ldots, d^{t+i-1}$ where $c_t \neq
	%c_{t-1}$ and $c_j=c_{j'}$ for all $j,j' \in t,\ldots,t+i-1$. Any
	%algorithm that moves all supply to $c_t$ at time $t$ will lose
	%earthmover distance $(1+\epsilon)$ and serve $i$ demand.
	%
	%We call this a sequence of length $i$ The probability for a
	%sequence to be of length $i$ for some $i>1$ is
	%$(\frac{\epsilon}{\epsilon + 1})^{i-1} \cdot \frac{1}{1+\epsilon}$,
	%for each such sequence social welfare will increase by $i$ and decrease by $1+\epsilon$, netting a total of $i-1-\epsilon$ increase in social welfare.
	%Total social welfare will thus be:
	
	\begin{eqnarray}
	\expec[b_\tau-(1+\epsilon)|b_\tau\geq
	2]\Pr[b_\tau\geq 2] =\sum_{i=2}^{\infty} (\frac{\epsilon}{\epsilon +
		1})^{i-1} \cdot \frac{i-1-\epsilon}{1+\epsilon}=
	%\sum_{i=2}^{\infty} (\frac{\epsilon}{\epsilon + 1})^{i-1} \cdot \frac{-\epsilon}{1+\epsilon} + \sum_{i=2}^{\infty} (\frac{\epsilon}{\epsilon + 1})^{i-1} \cdot \frac{i-1}{1+\epsilon} \nonumber\\ &=&\epsilon\cdot\frac{-\epsilon}{1+\epsilon}+\frac{\frac{\epsilon}{\epsilon + 1}}{(1-\frac{\epsilon}{\epsilon + 1})^2}\cdot\frac{1}{1+\epsilon}\label{eq:2nd}\\
	%&=& \frac{-\epsilon^2}{1+\epsilon}+\frac{\epsilon\cdot(1+\epsilon)}{1+\epsilon} \nonumber\\
	\frac{\epsilon}{1+\epsilon}.\nonumber
	\end{eqnarray}
	
	We now would like to sum over $\tau$ however the numbers summands in
	the sum is a random variable. Since we have a random sums of random
	variables we need to use Wald's identity. Since the expected number
	of summands is $\frac{T}{1+\epsilon}$ and the expectation of each is
	$\frac{\epsilon}{1+\epsilon}$ we have that the optimal offline
	algorithm has an expected social welfare of at least
	$\frac{\epsilon}{(1+\epsilon)^2} T$.
	
	%
	%
	%Equation \eqref{eq:2nd} follows from simple geometric series calculations. This means an expected social welfare increase
	%of $\frac{\epsilon}{1+\epsilon}$ per sequence and a sequence is of
	%length $1+\epsilon$ in expectancy. Thus our total social welfare
	%will be $\frac{\epsilon}{(1+\epsilon)^2}\cdot T$ in expectation. For $\epsilon>1$
	%we can expect better returns (We will omit the places where the
	%earning will be negative ({\sl i.e.} series of length smaller than $1+\epsilon$)).
	%
	%Thus, the offline algorithm will earn $\frac{\epsilon}{(1+\epsilon)^2}\cdot T$ in expectation and any online only $\frac{1}{k}\cdot T$ in expectation giving a ratio of $\frac{(1+\epsilon)^2}{\epsilon}\cdot\frac{1}{k}$.
	%
	This implies that no algorithm has a competitive ratio better than
	$\frac{(1+\epsilon)^2}{\epsilon}\frac{1}{k}$.
\end{proof}

\subsubinput{Restricted Drift} \label{sec:restricted}

For any demand sequence $d$ let $\delta\leq 1$ be the average drift, {\sl i.e.},
 $\sum_{t} \|d^t - d^{t-1}\|_{tv}= (1/2)\sum_{t} \|d^t - d^{t-1}\|_1 = \delta T$.

\begin{theorem}\label{thm:restriced drift}
For the case where costs $\ell_{ij} =1$ for all $i\neq j$, setting demand and supply equal (the $\match$ algorithm) gives social
welfare of $(1-\delta)T$, and is $(1-\delta)$-competitive.

For arbitrary $\ell_{ij}$, where $\ell_{ij}\leq \ell_{\max}$, the
$\match$ algorithm has social welfare of at least
$(1-\delta\ell_{\max})T$, and is
($1-\delta\ell_{\max}$)-competitive.
\end{theorem}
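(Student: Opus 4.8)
The plan is to evaluate both terms of $\sw(\match(d),d)$ directly and then cap the optimum by the trivial ceiling $T$. Since $\match$ sets $s^t=d^t$, at every step the demand served is $\ds(s^t,d^t)=\sum_i\min(d^t_i,d^t_i)=\sum_i d^t_i=1$, so the benefit term contributes exactly $T$ across all $T$ steps, in both the uniform and the general metric.

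Next I would bound the relocation cost $\sum_{t}\ermv(s^{t-1},s^t)=\sum_t\ermv(d^{t-1},d^t)$. The governing observation is that the amount of probability mass that must be relocated to turn $d^{t-1}$ into $d^t$ equals the total variation distance $\|d^t-d^{t-1}\|_{tv}=\frac{1}{2}\|d^t-d^{t-1}\|_1$. When $\ell_{ij}=1$ each relocated unit costs exactly $1$, so $\ermv(d^{t-1},d^t)=\|d^t-d^{t-1}\|_{tv}$; summing and invoking the drift hypothesis $\sum_t\|d^t-d^{t-1}\|_{tv}=\delta T$ gives total movement cost exactly $\delta T$, whence $\sw(\match(d),d)=T-\delta T=(1-\delta)T$. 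For a general metric with $\ell_{ij}\le\ell_{\max}$ each relocated unit costs at most $\ell_{\max}$, so $\ermv(d^{t-1},d^t)\le\ell_{\max}\|d^t-d^{t-1}\|_{tv}$, and summing yields total movement cost at most $\delta\ell_{\max}T$, hence $\sw(\match(d),d)\ge(1-\delta\ell_{\max})T$.

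For the competitive ratios I would bound $\max_s\sw(s,d)$ from above. Because each $d^t$ sums to $1$, no supply can serve more than demand $1$ at any step, so $\ds(s,d)\le T$; since the relocation cost is nonnegative we obtain $\opt=\max_s\sw(s,d)\le T$. Dividing the two lower bounds by this ceiling gives, for every demand sequence of average drift $\delta$, competitive ratios of at least $(1-\delta)T/T=1-\delta$ and $(1-\delta\ell_{\max})T/T=1-\delta\ell_{\max}$ respectively.

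I expect the argument to be essentially routine; the only step warranting care is the inequality $\ermv(d^{t-1},d^t)\le\ell_{\max}\|d^t-d^{t-1}\|_{tv}$, i.e. that the cheapest flow realizing the update moves exactly $\|d^t-d^{t-1}\|_{tv}$ units of mass. This holds because the positive and negative parts of $d^t-d^{t-1}$ have equal total mass $\|d^t-d^{t-1}\|_{tv}$, so a feasible flow can ship precisely that much from surplus vertices to deficit vertices, paying at most $\ell_{\max}$ per unit (and exactly $1$ per unit in the uniform metric, which also makes the bound tight there).
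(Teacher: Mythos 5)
Your proposal is correct and follows essentially the same route as the paper's proof: full demand served each step under $\match$, the identification $\ermv(d^{t-1},d^t)=\|d^t-d^{t-1}\|_{tv}$ for the uniform metric (and the bound $\ermv\leq\ell_{\max}\|\cdot\|_{tv}$ in general), and the trivial ceiling $\opt\leq T$. Your extra justification of the earthmover/total-variation relation via surplus--deficit mass is simply a spelled-out version of the step the paper states in one line.
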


\begin{proof}
	Since for $\ell_{ij}=1$ the earthmover distance metric coincides
	with the total variation metric, we have that at time $t$ the social
	welfare of $\match$ is
	$1-\|d^t-s^{t-1}\|_{tv}=1-\|d^t-d^{t-1}\|_{tv}$ since $\match$ sets
	$s^{t-1}=d^{t-1}$. Summing over all time steps we get that the
	social welfare of $\match$ is $T-\delta T$.
	Since the social welfare of $\opt$ is at most $T$ we have that
	$\match$ is $(1-\delta)$-competitive.
	
	For a general metric, note that $\ermv(d^t,d^{t-1})\leq \ell_{\max}
	\|d^t-d^{t-1}\|_{tv}$. This implies that the social welfare of
	$\match$ is at least $(1-\ell_{\max}\delta) T$, and hence it is
	$(1-\ell_{\max}\delta)$-competitive.
\end{proof}

%\begin{proof}
%Since for $\ell_{ij}=1$ the earthmover distance metric coincides
%with the total variation metric, we have that at time $t$ the social
%welfare of $\match$ is
%$1-\|d^t-s^{t-1}\|_{tv}=1-\|d^t-d^{t-1}\|_{tv}$ since $\match$ sets
%$s^{t-1}=d^{t-1}$. Summing over all time steps we get that the
%social welfare of $\match$ is $T-\delta T$.
%%
%Since the social welfare of $OPT$ is at most $T$ we have that
%$\match$ is $(1-\delta)$-competitive.
%
%For a general metric, note that $\ermv(d^t,d^{t-1})\leq \ell_{\max}
%\|d^t-d^{t-1}\|_{tv}$. This implies that the social welfare of
%$\match$ is at least $(1-\ell_{\max}\delta) T$, and hence it is
%$(1-\ell_{\max}\delta)$-competitive.
%\end{proof}

\begin{theorem}\label{thm:upper bound ellij1 drift}
For the metric $\ell_{ij}=1$, no online algorithm has a competitive ratio
better (greater) than $1-\delta/4$.
\end{theorem}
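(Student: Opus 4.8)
The plan is to prove the upper bound through Yao's principle: I will exhibit a distribution $\mathcal{D}$ over demand sequences, all of average drift exactly $\delta$, such that every \emph{deterministic} online algorithm $A$ satisfies $\expec_{\mathcal{D}}[\sw(A)] \le (1-\delta)T$, while $\expec_{\mathcal{D}}[\opt] \ge (1-\tfrac{3}{4}\delta)T$. Since $\frac{1-\delta}{1-\frac{3}{4}\delta}\le 1-\frac{\delta}{4}$ for all $\delta\in[0,1]$ (the cross-multiplied inequality reduces to $0\le\tfrac{3}{16}\delta^2$), the minimax bound $\mathrm{ratio}(\text{randomized})\le \frac{\max_A \expec_{\mathcal{D}}[\sw(A)]}{\expec_{\mathcal{D}}[\opt]}$ then forces the competitive ratio below $1-\delta/4$.

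For the construction I use two vertices with $\ell_{12}=1$ and default demand equal to the point mass $e_1=(1,0)$. Into the timeline I insert $G$ independent ``shift gadgets'' separated by default steps. Each gadget moves demand to $(1-w,w)$ for a random duration equal to $1$ with probability $\tfrac12$ and to a large value $L$ with probability $\tfrac12$, and then returns to $e_1$. A gadget contributes drift $\|e_1-(1-w,w)\|_{tv}$ on entry and the same on exit, i.e.\ $2w$ total, so the overall drift is $2wG$, and I choose $w$ small, $L$ large, and $G$ so that $2wG=\delta T$. The offline bound is then easy: knowing the realized duration, $\opt$ can stay at $e_1$ on a short gadget (leaving $w$ demand unserved for one step, cost $w$) and pay the round trip on a long gadget (move $w$ to $v_2$ and back, cost $2w$, serving everything), for expected cost $\tfrac12 w+\tfrac12\cdot 2w=\tfrac32 w$ per gadget, giving $\expec_{\mathcal{D}}[\opt]\ge T-\tfrac32 wG=(1-\tfrac34\delta)T$.

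The heart of the argument is bounding the online side. When a gadget begins the algorithm sees the shift but not its duration, so its only genuine freedom is the fraction $\alpha\in[0,1]$ of the $w$ mass it moves to $v_2$ on the first shift step; once a second shifted step occurs the support $\{1,L\}$ reveals the type and the optimal continuation is forced. A direct case analysis gives cost $(1+\alpha)w$ on a short gadget (move $\alpha w$ in, leave $(1-\alpha)w$ unserved, move $\alpha w$ back) and $(3-\alpha)w$ on a long gadget (the first-step shortfall $(1-\alpha)w$, then completing the move and eventually paying the round trip). The expected cost is $\tfrac12(1+\alpha)w+\tfrac12(3-\alpha)w=2w$, \emph{independently of} $\alpha$, and any deviation such as over-moving or failing to return to $e_1$ between gadgets only raises the cost; hence $\max_A \expec_{\mathcal{D}}[\sw(A)]=T-2wG=(1-\delta)T$.

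I expect the main obstacle to be exactly this $\alpha$-independence: I must argue that \emph{no} online strategy, including fractional moves, randomized choices of $\alpha$, and history-dependent behavior carried across gadgets, can push the expected per-gadget cost below $2w$. The clean cancellation $\tfrac12(1+\alpha)+\tfrac12(3-\alpha)=2$ is what pins the online cost at $2w$ per gadget and produces the factor-$\tfrac43$ gap over the offline cost $\tfrac32 w$; combining $(1-\delta)T$ with $(1-\tfrac34\delta)T$ through Yao's inequality yields the stated bound $1-\delta/4$.
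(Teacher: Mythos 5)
Your argument has the same skeleton as the paper's proof (a randomized two-vertex demand distribution, an online bound of $(1-\delta)T$, an offline bound of $(1-\tfrac34\delta)T$, and the identical closing algebra), and your offline estimate of $\tfrac32 w$ expected cost per gadget is sound. The genuine gap is the online per-gadget lower bound of $2w$, which you yourself flag as the main obstacle but never prove --- and it is not a routine verification, because as stated it is false under parameter choices your construction allows. If $L\le 2$, the algorithm that never moves pays $w$ on a short gadget and $Lw\le 2w$ on a long one, i.e.\ at most $\tfrac32 w$ in expectation, exactly matching your offline cost, so the gap between online and offline vanishes and no lower bound on the ratio follows; your claim that after a second shifted step the ``optimal continuation is forced'' (cost $(3-\alpha)w$) holds only when $L\ge 3$. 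Similarly, if gadgets are separated by a single default step, the strategy that parks mass $w$ at $v_2$ permanently pays roughly $w$ per cycle, beating your claimed $2w$. So the bound needs both $L\ge 3$ and at least two default steps per cycle --- constraints you never impose, and which feed back into the drift budget via $2w=\delta\cdot(\mbox{cycle length})$ --- and even granting them, reducing every online behavior (fractional, randomized, history-dependent, anticipatory moves during default steps) to the single parameter $\alpha$ with a forced continuation is the entire content of the lower bound; it requires an exchange or potential-function argument that the proposal does not contain.

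The paper sidesteps this difficulty by re-randomizing every step: $d^t$ is $(1,0)$ or $(1-2\delta,2\delta)$ with probability $\tfrac12$ each, i.i.d. Writing $x^{t-1}$ and $y^t$ for the supply and demand mass at the second vertex, the per-step cost (movement plus unserved demand) is at least $|x^{t-1}-y^t|$ by the triangle inequality, and since $y^t$ is uniform on $\{0,2\delta\}$ and independent of $x^{t-1}$, its expectation is at least $\delta$; hence \emph{every} online algorithm earns at most $(1-\delta)T$ with no policy analysis at all. The offline advantage is then exhibited by modifying $\match$ on the probability-$\tfrac14$ event $d^{2m-2}=d^{2m}\neq d^{2m-1}$, saving $4\delta$ in movement at a loss of $2\delta$ in service. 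Your duration-revealing gadgets are precisely what destroy this one-line argument: in mid-gadget the algorithm possesses real information about future demand, so any online bound must be amortized per gadget over all adaptive policies. The route could probably be completed, but that missing amortized bound is where all the difficulty of the theorem lives.
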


\begin{proof}
	Consider the following demand sequence. The demand sequence uses
	only the first two locations, {\sl i.e.}, for all locations $i\neq1,2$ and
	times $t$ we have $d^t_i=0$. For each time $t$ we select the demand
	randomly from the following distribution.
	$$d^t = \left\{\begin{array}{cl}
	d^t_1=1,d^t_2=0 & \mbox{\rm With probability $\frac{1}{2}$} \\
	d^t_1=1-2\delta,d^t_2=2\delta & \mbox{\rm With probability $\frac{1}{2}$}
	\end{array}\right..
	$$
	The generated sequence has an expected drift of $\delta T$. Any
	online algorithm $ALG$ has, in expectation, social welfare of
	$(1-\delta)T$. The main point is that $\opt$ has a strictly
	better expected social welfare.
	
	Consider the online algorithm $\match$ as a starting point. Partition
	the time to $T/2$ pairs of time slots, $[2m-1,2m]$. Consider the
	event that $d^{2m-2}=d^{2m}\neq d^{2m-1}$. This event occurs with
	probability $1/4$. In such an event we can modify $\match$ and at
	time $2m-1$ set $s^{2m-1}=d^{m}$. (This requires knowing the future,
	but we are interested in $\opt$ so it is fine.) Such a modification
	increases the social welfare by $2\delta$ (lowering the serviced
	demand by $2\delta$ and lowering the movement costs by $4\delta$).
	Therefore, the expected social welfare is improved by
	$(1/4)(2\delta)(T/2)$. This implies that the expected social welfare
	of $\opt$ is at least $(1-(3/4)\delta)T$.
	
	%Also, $OPT$ will earn $1-\frac{\delta}{3}$. %TODO: prove this (easy probability proof)

	This means that no algorithm is more than
	$\frac{1-\delta}{1-(3/4)\delta}$-competitive. This implies that no
	online algorithm can have a competitive ratio better than
	$(1-\delta/4)T$.
	
\end{proof}

\section{Discussion}\label{sec:disc}
Social welfare in our setting depends on the taxicabs and their locations (the supply $s$),  passengers, their locations and values (the profile $P$), and distances between taxicabs and passengers.
In this \paper{} we introduce passenger-taxicab equilibria, prove their existence and give poly time algorithms for computing surge prices so as to maximize social welfare.

We have shown that although time series are a critical part of the social welfare gains of any taxicab provider, no algorithm can hope to achieve significant worst-case ratios. Thus, in the future different relaxations to the problem might be considered in order to allow for more adaptive algorithms.

%Our abstraction maximizes social welfare for a single set of such parameters. Clearly, it would be more natural to consider fluctuations over time. Note that the greedy approach of setting surge prices to maximize social welfare per time step does not imply maximizing the social welfare overall.

%When considering a sequence of fluctuations, taxicabs care about the future and not only about the current requests.
 %Why travel a great distance if there is only one passenger to serve?
%Clearly, it is much better to serve a passenger at a location where there will be other passengers to serve over  a long period of time.

%Thus, to maximize social welfare over time, it is natural to use competitive analysis of online algorithms, so as to determine a sequence of surge prices. The goal is to maximize social welfare over time, and compare the social welfare over time of the online sequence of surge prices to the optimal offline sequence of surge prices. We view this as a highly interesting research direction.

When computing the surge prices above, we have implicitly assumed that taxicab locations are known (e.g., via GPS).
Contrawise, passengers have no incentive to misreport their location (trivially) and valuation (as proved above).
An interesting variation on our models
would be to consider taxicabs declaring their own distances to passengers. Those would not be physical distances but rather a personalized cost for service at a given location. 

If such personalized costs are verifiable, and social welfare is redefined as the sum of passenger values served minus the personalized service costs, then the surge prices computed in this \paper{} maximize this new social welfare. This allows for more robust pricing mechanisms which allow us to incorporate issues such as ``start up costs" which are a bonus for drivers to get out of bed.

Taxicab personalized costs are private to the taxicab. Thus, any surge price computation would have to contend with private values of the taxicabs as well as private values for the passengers. It is easy to see that without Bayesian assumptions on the private values, little can be done. Just consider a passenger and a taxicab at the same location, they need to agree upon a price. In the Bayesian setting this is called the bilateral trading problem and there is a rich literature on the topic.

%\input{figures}

%\chapter{Appendix}\label{sec:appendix}

%\input{appendix}

\bibliographystyle{abbrv}
\bibliography{surge}
%\newpage{}

%\includepdf[pages=-]{hebrew_part}

\end{document}